\documentclass[%
 reprint,
 superscriptaddress,
nofootinbib,
 amsmath,amssymb,
 aps,
 prx
 unpublished,
 noarxiv
]{quantumarticle}
\usepackage[T1]{fontenc}
\usepackage{graphicx}% Include figure files
\usepackage{dcolumn}% Align table columns on decimal point
\usepackage{bm}% bold math
\usepackage[colorlinks=true, allcolors=blue]{hyperref}
\usepackage{braket}
\usepackage[table]{xcolor}
\usepackage{xfrac}
\usepackage{comment}
\usepackage[normalem]{ulem}
\usepackage{mathtools}
\usepackage{layouts}
\usepackage{quantikz}
\usepackage{tabularx}
\usepackage{multirow}
\usepackage{cellspace} %for better spaced tables
\usepackage{esint}
\usepackage{eczoo}
\usepackage{enumitem}% for custom enumerate labels
\usepackage{makecell}% for multiline cells in tables

\usepackage{tikz}
\usetikzlibrary{shapes.geometric,arrows}
\tikzstyle{startstop} = [rectangle, rounded corners, minimum width=6cm, text width = 6 cm, inner xsep=10 pt,minimum height=0cm, draw=black, fill=red!30]
\tikzstyle{io} = [trapezium, trapezium left angle=70, trapezium right angle=110, minimum width=3.5cm, text width = 3.2 cm, inner xsep=0 pt, minimum height=0cm,text centered, draw=black, fill=blue!30]
\tikzstyle{process} = [rectangle, minimum width=3.5cm, text width = 3.5 cm, inner xsep=-10 pt,minimum height=0cm,text centered, draw=black, fill=orange!30]
\tikzstyle{decision} = [diamond, minimum width=3.5cm, text width = 3.5 cm, inner xsep=-10 pt, minimum height=0cm,text centered, draw=black, fill=green!30]
\tikzstyle{arrow} = [thick,->,>=stealth]

\usepackage{amsthm}% for theorems and stuff
\usepackage[capitalise]{cleveref}% load this as late as possible
\Crefname{criterion}{Criterion}{Criteria}
\crefname{criterion}{Crit.}{Crit.}
\Crefname{definition}{Definition}{Definitions}
\crefname{definition}{Def.}{Defs.}
\Crefname{theorem}{Theorem}{Theorems}
\crefname{theorem}{Thm.}{Thms.}
\Crefname{proposition}{Proposition}{Propositions}
\crefname{proposition}{Prop.}{Props.}
\Crefname{conjecture}{Conjecture}{Conjectures}
\crefname{conjecture}{Conj.}{Conjs.}
\Crefname{heuristic}{Heuristic}{Heuristics}
\crefname{heuristic}{Heur.}{Heurs.}
\newcommand{\vect}[1]{\boldsymbol{\mathbf{#1}}}
\newcommand{\approptoinn}[2]{\mathrel{\vcenter{
  \offinterlineskip\halign{\hfil$##$\cr
    #1\propto\cr\noalign{\kern2pt}#1\sim\cr\noalign{\kern-2pt}}}}}

\begin{document}
\title{Optimising Quantum Error Correction Using Morphing Circuits}
\author{Mackenzie H. Shaw}
\affiliation{QuTech, Delft University of Technology, P.O. Box 5046, 2600 GA Delft, The Netherlands}
\affiliation{Faculty of EEMCS, Delft University of Technology, Mekelweg 4, 2628 CD Delft, The Netherlands}
\author{Barbara M. Terhal}
\affiliation{QuTech, Delft University of Technology, P.O. Box 5046, 2600 GA Delft, The Netherlands}
\affiliation{Faculty of EEMCS, Delft University of Technology, Mekelweg 4, 2628 CD Delft, The Netherlands}
\newtheorem{theorem}{Theorem}
\newtheorem{corollary}[theorem]{Corollary}
\newtheorem{lemma}[theorem]{Lemma}
\newtheorem{proposition}[theorem]{Proposition}
\newtheorem{criterion}{Criterion}
\theoremstyle{definition}
\newtheorem{assumption}{Assumption}
\theoremstyle{remark}
\newtheorem*{remark}{Remark}
\theoremstyle{definition}
\newtheorem{definition}[theorem]{Definition}
\newtheorem{heuristic}[theorem]{Heuristic}
\newtheorem{taggedpropositionx}{Proposition}
\newenvironment{taggedproposition}[1]
 {\renewcommand\thetaggedpropositionx{#1}\taggedpropositionx}
 {\endtaggedpropositionx}
%\numberwithin{equation}{section} %numbers equations by section, i.e. (sec.eqno)
%\renewcommand\qedsymbol{\emoji{smiling-face-with-sunglasses}}
\begin{abstract}
Quantum error correction (QEC) codes are traditionally defined and searched for without specifying the manner in which its syndrome extraction circuits are executed using elementary gates and measurements. We show how morphing circuits introduced in Refs.~\cite{McEwen23,Gidney2023,ST:morphing} provide a way of optimising syndrome extraction circuits and codes directly in terms of connectivity, choice of two-qubit gate (ISWAP versus CNOT) and number of physical qubits. We discuss morphing circuits in code optimisation among Abelian two-block group algebra (2BGA) codes, handling boundaries for 2D codes, codes with single-shot properties, and improving performance in stability experiments against measurement and reset errors. We show that alternating syndrome extraction circuits -- executed with alternating time-reversed rounds -- can be viewed as a two-round morphing circuit whose fault-tolerant properties are computationally much easier to examine than non-alternating syndrome extraction circuits. Our methods find new codes and syndrome extraction circuits of practical interest, including Abelian 2BGA morphing circuits with better code parameters and connectivity than existing circuits.
\end{abstract}
\tableofcontents
\maketitle

\section{Introduction}

One of the building blocks of any fault-tolerant quantum computer is quantum error correction (QEC), which can preserve quantum information and perform logical transformations at low logical error rates using noisy physical operations. QEC codes are usually defined in terms of multi-qubit Pauli operators called stabilisers that are repeatedly measured to detect -- and then correct -- errors that occur on a physical device. To implement a QEC code on a physical device, one needs to design a syndrome extraction circuit that measures each stabiliser using only single-qubit resets, single-qubit measurements, and two-qubit gates. Typically, for quantum LDPC codes, this is done by using one additional qubit per stabiliser in what we call a ``bare-ancilla'' (syndrome extraction) circuit.

However, bare-ancilla circuits are not the only approach to design physically-implementable QEC circuits. One particularly promising alternative is to use a design principle that we call \textit{morphing}, which has already been applied to the surface~\cite{McEwen23}, colour~\cite{Gidney2023}, and \eczoo[Abelian Two-Block Group Algebra (2BGA)]{2bga} codes~\cite{ST:morphing}, -- a highly interesting subclass of lifted-product codes. Morphing circuits -- which have also been referred to as middle-out, dynamic \cite{eickbusch2024:dynamic}, or ancilla-free circuits -- use a QEC code $C$ as input, but output syndrome extraction circuits that move between a different, possibly novel, set of QEC codes $C_{j}^{\mathrm{end}}$ measuring their stabilisers. This approach has a remarkable number of advantages compared to bare-ancilla circuits, including lower connectivity requirements~\cite{McEwen23,Gidney2023,ST:morphing}, simple ways to reduce leakage~\cite{McEwen23,yoshida2025low}, a convenient method for adapting to qubit and coupler dropouts~\cite{debroy2024:luci,higgott2025handling,wolanski2026automated,anker2025:optimized}.

%On the other hand, one ongoing challenge of morphing circuits is that they are sometimes vulnerable to lower weight logical errors than the original QEC code, and there are no existing techniques in the literature that can systematically avoid this issue by designing fault-tolerant morphing circuits.

In this work we perform a deep-dive into morphing circuits.
%to demonstrate that they provide an attractive approach to performing quantum error correction for a wide range of QEC codes and across various physical platforms. We do this in three ways: by generalising known advantages of morphing circuits beyond the specific examples that exist in the literature, by designing techniques to improve the fault-tolerance of morphing circuits, and by uncovering previously-unexplored advantages of morphing circuits when performing lattice surgery experiments.
%To complement our general discussion we present concrete proof-of-principle case studies of morphing circuits applied to specific code families. 
Let us mention a few highlights of this work. 
\begin{enumerate}
\item We numerically search through morphing circuits for Abelian 2BGA codes and find circuits with code parameters that strictly improve upon the bare-ancilla circuits -- which is the first time this has been systematically demonstrated for morphing circuits in the literature as far as we are aware (\cref{subsec:Abelian_2BGA}).
\item We manually optimise the boundary geometry of morphing circuits for the colour code, improving the parameters of the colour code morphing circuits from Ref.~\cite{Gidney2023} and providing multiple circuits that may be of use in QEC experiments (\cref{subsec:colour_code}).
\item Given a repeatedly-applied syndrome extraction circuit, we prove that a circuit that alternates between time-reversed versions of this syndrome extraction circuit: (a) can be viewed as a two-round morphing circuit, (b) can never have a lower circuit-level distance than its non-alternating counterpart, and (c) has a circuit-level distance set by errors within a single syndrome extraction round. This last property drastically reduces the computational cost of calculating the circuit-level distance of a given circuit (\cref{subsec:from_normal_circuits}).
\item We describe and analyse how to set detectors for morphing circuits in a general manner, including for codes with single-shot properties. We use this to prove bounds on the number of rounds required to fault-tolerantly implement lattice surgery operations (\cref{sec:optimising_time_like}) and to improve the logical performance of morphing circuits under a noise model where the measurement and reset error rates are biased (\cref{sec:measurement_reset_bias}).
\end{enumerate}

Here is a more detailed overview of what to expect in this paper. We conclude this section with a literature review in~\cref{subsec:literature}. Then, many of the general insights of this paper are expressed in \cref{sec:definition}, which contains the definition and key properties of morphing circuits, and its relationship to alternating syndrome extraction circuits. 
In \cref{sec:constructing}, we summarise a number of techniques that can be used to construct morphing circuits, including both general and heuristic approaches, while in \cref{sec:manipulating} we generalise the techniques that are available to morphing circuits to reduce leakage. Then, in \cref{sec:optimising_end_cycle_distance} we tackle the important issue of fault-tolerance by proposing heuristic techniques to improve this for morphing circuits. Finally, in \cref{sec:optimising_time_like,sec:measurement_reset_bias} we examine the circuit-level fault-tolerance properties of morphing circuits for lattice-surgery-like operations (including single-shot codes), and explore a previously unexamined advantage that morphing circuits can have over traditional circuits under a noise model in which the reset and measurement noise rates are not equal. The two case studies for Abelian 2BGA codes and colour codes are contained in \cref{subsec:Abelian_2BGA,subsec:colour_code} respectively.

\subsection{Relationship to Existing Literature}\label{subsec:literature}

The first example of a morphing circuit appeared in Ref.~\cite{McEwen23} for the surface code, and this circuit already demonstrated many of the advantages of morphing circuits: the reduced connectivity requirements, its ability to be compiled using CXSWAP gates instead of CNOT gates, and the ability to swap data and ancilla qubits each QEC round. Later, the framework of morphing circuits was used to develop a flexible and efficient scheme to combat qubit and coupler dropouts in the surface code~\cite{debroy2024:luci,higgott2025handling,anker2025:optimized}, which is important in the presence of fabrication errors.

Morphing circuits were first applied beyond the surface code in the colour code in Ref.~\cite{Gidney2023} with only hex-grid (degree-3) connectivity requirements and the ability to compile the circuit in terms of CXSWAP gates~\cite{yoshida2025low}. 

In our previous work~\cite{ST:morphing} we designed morphing circuits for the Bivariate Bicycle (BB) code family~\cite{Bravyi24}: in particular, we found new so-called ``end-cycle codes'' that had a distance and number of qubits that matched the parameters of another code in the BB code family. Because the morphing circuit has a lower connectivity requirement than a corresponding bare ancilla circuit, this represents a total win-win situation for using morphing circuits. More recently, the scheme to combat qubit and coupler dropouts using morphing was generalised from the surface code to arbitrary CSS codes~\cite{wolanski2026automated}.

The majority of the results presented in this work are novel and go beyond the existing results in the literature. However, in the interests of completeness and pedagogy, in \cref{subsec:boundaries,sec:manipulating} we also include some results that are simple generalisations of already existing techniques in the literature; namely, the technique to add boundaries to morphing circuits which was already used in the colour code circuits in Ref.~\cite{Gidney2023}, the technique to swap data and ancilla qubits which was already generalised in Ref.~\cite{ST:morphing}, and the rewriting of morphing circuits in terms of CXSWAP gates which was already done in the colour code in Ref.~\cite{yoshida2025low}.

%In previous work, morphing circuits have been designed for the surface code~\cite{McEwen23}, colour code~\cite{Gidney23}, and Bivariate Bicycle (BB) codes~\cite{ST:morphing} to reduce the connectivity required to execute quantum error correction, without incurring any loss in logical performance. Moreover, morphing circuits have been designed that use CXSWAP gates instead of CNOT gates for the surface code~\cite{McEwen23} and for colour code~\cite{yoshida2025low}. Finally, a strategy for handling qubit and/or coupler defects has been developed by utilising the flexibility of morphing circuits~\cite{debroy2024:luci,higgott2025handling,wolanski2025automated,anker2025:optimized}. These results show the interest in the morphing design strategy. This paper aims to further investigate and demonstrate the use of morphing circuits in optimizing quantum error correction.

%all with only two contraction rounds $F_1$ and $F_2$ and with each contraction circuit compiled to CNOT gates. ; this strategy is also particularly interesting from a theoretical perspective because the mid-cycle code is a subsystem code and the morphing circuit uses more than two contraction rounds. 

\subsection{Some Definitions}

Before moving forward, we briefly define some key terms. We will refer to various types of syndrome extraction circuits that we define here -- we will sometimes omit the words ``syndrome extraction'' for brevity when it is clear from context. We call a syndrome extraction circuit for a stabiliser code {\em normal} when it measures a complete or overcomplete set of stabiliser generators a single time using ancillary qubits.
%, and repeats this identically over time using the same order of single- and two-qubit gates, without change. 
The most common examples of normal circuits are \textit{bare-ancilla} syndrome extraction circuits, which specifically refers to syndrome extraction circuits that use one ancilla qubit for each stabiliser that is measured. Examples of syndrome extraction circuits that are normal but not bare-ancilla are Steane QEC and the superdense colour code circuit~\cite{Gidney2023}. 

We will explicitly refer to resets (denoted by the letter $R$) and measurements (denoted by the letter $M$) in syndrome extraction circuits, which in experiments may have different error rates. The {\em support} of a Pauli string or set of Pauli locations in a quantum circuit is where it acts non-trivially.
In this paper, for simplicity, we restrict ourselves to CSS codes with purely $X$- and purely $Z$ checks.

%  %The CXSWAP gate is ${\rm SWAP}\circ {\rm CNOT}$ and it is single-qubit Clifford equivalent to the ISWAP gate.
% A syndrome extraction ends at the measurement of ancillary qubits followed by a reset which then starts a next syndrome extraction 
% round. 
% define ``bare-ancilla syndrome extraction''
% define complete or normal syndrome extraction.
% define alternating circuit, or locally reflection-symmetric around R-M points.
% define the infinite codes somewhere?

\section{Morphing Circuits: Features and Variations}\label{sec:definition}

%We begin by reviewing the definition, design, and key properties of morphing circuits. 
%In particular, we will make a distinction between the definition of morphing circuits -- which obey a particular structure defined in \cref{subsec:morphing_circuits} -- and the morphing \textit{design principle} (\cref{subsec:morphing_design}) that takes a code $C$ as input and outputs a morphing circuit based on that code. We do this since there are other ways of obtaining morphing circuits that \textit{do not} use the morphing design principle, and we give one example of this in \cref{subsec:from_normal_circuits}. 
Let's provide an overview of this section. In \cref{subsec:morphing_design,subsec:morphing_circuits} we describe morphing in the simple case where $C$ is a CSS stabiliser code, and the morphing circuit has two contraction rounds (as we will define shortly). We also present Prop.~\ref{prop:two_round_circuit_distance} that shows that the circuit-level distance of a two-round morphing circuit can be calculated by considering a memory experiment with only one QEC round -- significantly reducing the computational cost. We explain how to define the detectors of a morphing circuit and how this contrasts to bare-ancilla circuits in \cref{subsec:two_round_morphing_detectors,subsec:two_round_morphing_redundancies}. A convenient subclass of morphing circuits -- purely contracting morphing circuits -- is introduced in \cref{subsec:purely_contracting}, whose defining property is that they can be conveniently represented using ``contraction tree diagrams'' (a generalisation of the LUCI diagrams from Ref.~\cite{debroy2024:luci}). We end in \cref{subsec:general_morphing} with a more general definition of the morphing design principle that can be applied more broadly to subsystem and even Floquet codes.

\subsection{The Morphing Design Principle}\label{subsec:morphing_design}

The key idea behind the morphing design principle is that we wish to measure the stabilisers of a code $C$ without the use of additional ancilla qubits (`ancilla-qubit free'). Let $S$ be a complete (or overcomplete) set of stabiliser generators of the code $C$ that we wish to measure. Then, we split this set $S$ into $J$ \textit{contracting subsets} $S_{j}\subseteq S$, and we will say that $J$ is the number of contraction rounds in the morphing circuit. Within each contracting subset, we require that $S_{j}$ contains an independent set of stabilisers. Moreover, we require that each stabiliser $s\in S$ is contained in \textit{at least} one contracting subset $S_{j}$. Then, we design a set of $J$ Clifford circuits $F_{j}$ (each of which ideally should be constant depth) that we call \textit{contraction circuits}, which have the property that the weight of the ``contracted'' Pauli operator $F_{j}^{\vphantom{\dag}}sF_{j}^{\dag}$ is one for all contracting stabilisers $s\in S_{j}$. The fact that $S_{j}$ is an independent set of stabilisers guarantees that each of these single-qubit Pauli operators is supported on a different qubit. The contracting circuit $F_{j}$ may not be fault-tolerant, as we will discuss later.

These two pieces of information -- the contracting subsets $S_{j}$ and the contraction circuits $F_{j}$ -- are sufficient to specify a morphing circuit as follows, see \cref{fig:overview}. Beginning in the code $C$, we implement the first contracting subset $F_{1}$, followed by a set of measurements $M_{1}$ that measure the single-qubit Pauli operators $F_{1}^{\vphantom{\dag}}sF_{1}^{\dag}$ for all $s\in S_{1}$. Then, we reset these same qubits (and we label this operation $R_{1}$), before applying $F_{1}^{\dag}$ to return to the code $C$. We then repeat this set of operations for the remaining contraction rounds $j>1$. Note that when $C$ is a CSS code, we can take each $F_{j}$ to be circuit made from CNOT gates.
%circuit, we say that the morphing circuit is CSS -- all morphing circuits discussed in this paper are CSS.

Let us briefly consider a few extreme cases for the number of contraction rounds $J$. On one extreme, we could put only one stabiliser in each contracting subset, so that $J=|S|$. However, this would result in stabilisers being measured extremely infrequently, adding to the time overhead of QEC. On the other extreme, we could put all of the stabilisers into one contracting subset, so that $J=1$ and $S_{1}=S$. In this case, $S_{1}$ must be an independent generating set of stabilisers, so the contracting circuit $F_{1}$ must be a ``decoding circuit'' that maps the code $C$ to a set of unencoded qubits, removing any protection from errors. Therefore, the smallest value of $J$ that we could hope for is $J=2$, in this case we say that we have a \textit{two-round} morphing circuit. For most of the rest of this section, we will only consider two-round morphing circuits due to their wide applicability and their simplicity, before explaining and motivating a more general formulation of morphing circuits in \cref{subsec:general_morphing}.

\subsection{Morphing Circuits}\label{subsec:morphing_circuits}

\begin{figure}
\centering
\includegraphics[width=\linewidth]{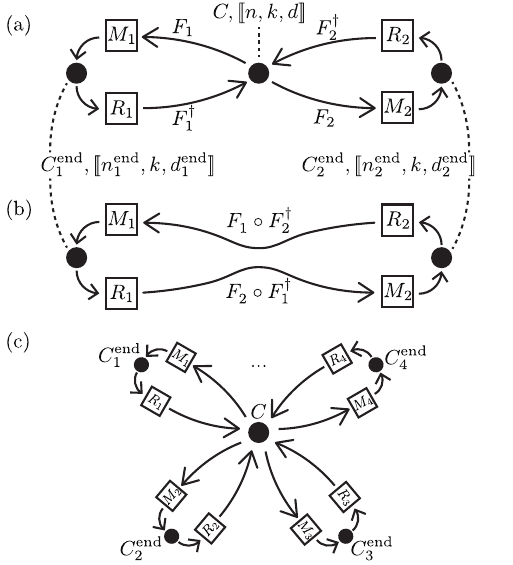}
\caption{Morphing with a stabiliser code. (a) A ``mid-cycle-code'' perspective of a two-round morphing circuit, showing each contracting circuit $F_{j}$ morphing from the mid-cycle code $C$ to each end-cycle code $C^{\rm end}_{j}$. (b) An ``end-cycle-code'' perspective of a two-round morphing circuit, where the contraction circuits have been combined into syndrome extraction circuits that morph from one end-cycle code $C^{\rm end}_{j}$ to the next $C^{\rm end}_{j+1}$. The physical state of the system may or may not explicitly pass through the mid-cycle code $C$, depending on the compilation of the circuit. (c) A morphing circuit with $J> 2$ contraction rounds, e.g. $J=4$.}
\label{fig:overview}
\end{figure}

We illustrate the operation of a two-round morphing circuit in \cref{fig:overview}(a). More generally, we say that any circuit that follows the structure of \cref{fig:overview}(a) is a two-round morphing circuit -- regardless of whether it was designed using the morphing design principle starting from code $C$ or not. The period of time between a set of resets and a set of measurements we call a \textit{measurement round}, and this is the main unit of time we will use throughout this paper. We call $C$ the \textit{mid-cycle} code of the morphing circuit, because it occurs in the middle of any given measurement round.

During each round of measurements $M_{j}$ and resets $R_{j}$, we can also define an \textit{end-cycle} code $C_{j}^{\mathrm{end}}$ with $j=1,\ldots, J$ by tracking the ``expanding'' stabiliser generators $P\in S\setminus S_{j}$ through the contraction and measurement circuit $M_{j}\circ F_{j}$. The code parameters of the end-cycle codes $[\![{n}^{\mathrm{end}}_{j},k^{\mathrm{end}}_{j},d^{\mathrm{end}}_{j}]\!]$ are related to those of the mid-cycle code $[\![n,k,d]\!]$ (for general $J$) as follows:
\begin{itemize}
    \item the number of physical qubits in the end-cycle code satisfies $n^{\mathrm{end}}_{j}=n-|M_{j}|$ where $|M_{j}|$ denotes the number of qubits that are measured in $M_{j}$;
    \item we are guaranteed that $k^{\mathrm{end}}_{j}=k$ because the logical operators of $C$ are not contained in any of the contracting subsets and are therefore not measured by $M_{j}$; and
    \item the distance $d^{\mathrm{end}}_{j}$ is lower-bounded by $d/2^{\mathrm{depth}(F_{j})}$~\cite{ST:morphing}, but is otherwise unrelated to $d$.
\end{itemize}  

For fault-tolerance, we are interested in the \textit{circuit-level} distance of the morphing circuit, which is the smallest number of circuit-level Pauli errors that result in an undetectable logical error. The circuit-level distance is of course upper-bounded by the mid-cycle distance $d$ and each of the end-cycle distances $d^{\mathrm{end}}_{j}$. In our studies, we have found that the end-cycle distances $d^{\mathrm{end}}_{j}$ are often (but not always) the limiting factor in the circuit-level distance of a given morphing circuit, an observation that we will use in \cref{sec:optimising_end_cycle_distance} to construct morphing circuits with larger circuit-level distance.
This is not surprising since the end-cycle codes contain many fewer qubits than the mid-cycle code. In some sense, it is only when the mid-cycle code is {\em non-optimal} in its use of physical qubits versus distance that one can expect that the end-cycle code distance matches that of the mid-cycle code. This is precisely what is happening for the two-round surface code morphing circuits in \cite{McEwen23} where the mid-cycle is the (inefficient) unrotated surface code and the end-cycle codes are rotated surface codes. In contrast, in the reduced-connectivity 2D hexagonal colour code circuits from Ref.~\cite{Gidney2023} the end-cycle codes have a lower distance than a colour code that has the same number of qubits: in Section \ref{subsec:colour_code} we seek to find improved morphing circuits for the colour code.

\subsubsection{The End-Cycle-Code Perspective}\label{subsec:end_cycle_perspective}

So far morphing circuits are described as proceeding from the ancilla-qubit-free mid-cycle code $C$ via an end-cycle code $C_{j}^{\mathrm{end}}$ back to the mid-cycle code $C$. This ``mid-cycle-code'' perspective is useful when designing morphing circuits, but the real use and purpose of morphing circuits is that it provides new end-cycle codes {\em together} with their syndrome extraction circuits. In this perspective, we can consider each circuit $M_{j+1}^{\vphantom{\dag}}\circ F_{j+1}^{\vphantom{\dag}}\circ F_{j}^{\dag}\circ R_{j}^{\vphantom{\dag}}$ as a syndrome extraction circuit for the code $C_{j}^{\mathrm{end}}$ that \textit{also} happens to transform the code from $C_{j}^{\mathrm{end}}$ to $C_{j+1}^{\mathrm{end}}$, as illustrated in \cref{fig:overview}(b).

This end-cycle-code perspective is important as we run memory experiments from end-cycle code to end-cycle code via the mid-cycle code \footnote{To understand why this is important, consider a $Z$-basis memory experiment for a CSS code in which we initialise all qubits in the $\ket{0}$ state, run some number of QEC rounds, then measure all qubits in the $Z$ basis. If we were to run the QEC rounds beginning in the mid-cycle code, then the first contraction circuit that we run wouldn't do anything because each CNOT gate would be acting on a pair of qubits in the $\ket{00}$ state. It therefore only makes sense to begin in an end-cycle code $C^{\mathrm{end}}_{1}$ by applying the resets $R_{1}$ and resetting all data qubits that do not participate in $R_{1}$ in the $\ket{0}$ state. The same arguments apply for the end of the memory experiment.}.
Similarly, when using morphing to perform, say, lattice surgery, one switches from executing memory morphing circuits to lattice surgery morphing circuits when one is in one of the \textit{end-cycle} codes (see for example Appendix F of Ref.~\cite{ST:morphing}).

Another observation related to this end-cycle-code perspective is that it may be possible to compile the syndrome extraction circuit $F_{j+1}^{\vphantom{\dag}}\circ F_{j}^{\dag}$ more efficiently by combining the last gate layer of $F_{j}^{\dag}$ with the first gate layer of $F_{j+1}$, as was recently observed in Ref.~\cite{yoshida2025low}. When this happens, the physical state of the system may \textit{never} be in the mid-cycle code, despite the fact that the morphing circuit was designed using the mid-cycle code as its input.

\subsubsection{Computational Cost of Determining The Circuit-level Distance}

Calculating the circuit-level distance of a normal circuit typically requires simulating a memory experiment with $d$ measurement rounds; however, remarkably, the circuit-level distance of a \textit{two-round} morphing circuit can be determined from a memory experiment containing only \textit{one} measurement round. Since calculating the circuit-level distance is a hard computational problem, these savings drastically reduce the computational cost of the calculation. We consider a $T$-round memory experiment of a two-round morphing circuit defined by ideally preparing an arbitrary logical state $\ket{\overline{\psi}}$ of one of the end-cycle codes $C_{j}^{\mathrm{end}}$, then performing $T$ measurement rounds of the memory circuit following \cref{fig:overview}(a), and then ideally measuring the stabilisers of the last end-cycle code. We define the \textit{circuit-level distance} $d_{\mathrm{circ}}(T)$ as the smallest number of circuit-level errors that cause an undetectable logical error in such $T$-round memory experiment. 
\begin{proposition}\label{prop:two_round_circuit_distance}
    For any two-round morphing circuit for a stabiliser code, we have
    \begin{equation}\label{eq:back_and_forth_fault_tolerance}
        d_{\mathrm{circ}}(T)= d_{\mathrm{circ}}(1).
    \end{equation}
\end{proposition}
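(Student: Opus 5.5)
Two inequalities are needed. For $d_{\mathrm{circ}}(T)\le d_{\mathrm{circ}}(1)$: take a minimum-weight undetectable logical fault set of the one-round experiment and embed it into the $T$-round experiment. Place it in one measurement round (say the first) and leave all other rounds fault-free. The idealised preparation and ideal end-cycle stabiliser measurement of the one-round experiment must be simulated by the fault-free neighbouring rounds. Concretely, the state entering round $t$ is a state of $C^{\mathrm{end}}_{j}$ up to the accumulated Pauli frame. The residual error left after the faulty round commutes with all stabilisers of the next end-cycle code; otherwise the ideal final measurement in the one-round experiment would have flagged it. It is also a nontrivial logical of that code. Noiseless subsequent rounds measure only stabilisers of the subsequent end-cycle codes, which the residual (propagated through the Clifford contraction circuits, mapping stabilisers to stabilisers and logicals to logicals) continues to commute with. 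So nothing is flagged and a logical error results.

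For $d_{\mathrm{circ}}(T)\ge d_{\mathrm{circ}}(1)$, the key structural observation is the following. In a two-round morphing circuit, every measurement round between resets $R_j$ and measurements $M_{j+1}$ is $M_{j+1}\circ F_{j+1}\circ F_j^{\dag}\circ R_j$. Its detectors can be chosen by comparing the $M_{j+1}$ outcomes with outcomes one cycle earlier of the same stabilisers, which sit in $S_{j+1}$ and were measured at $M_{j-1}$ with $j-1\equiv j+1 \pmod 2$. I would show that faults in round $t$ affect only detectors in rounds $t$ and $t+1$. Then I would decompose an undetectable logical fault set $E$ of the $T$-round experiment as $E=\bigcup_t E_t$, with $E_t$ the faults in measurement round $t$.

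Pick a cut between rounds. Each $E_t$ leaves a residual Pauli $P_t$ on the end-cycle code $C^{\mathrm{end}}$ at the end of round $t$. Its effect on the next round's measurements is determined solely by $P_t$ modulo the measured stabilisers. I would argue by induction that $E_t$ viewed in isolation in a one-round experiment, with ideal preparation before and ideal end-cycle measurement after, has syndrome determined by $P_{t-1}$ and $P_t$. Undetectability of the full $E$ forces every $P_t$ to commute with the stabilisers of the corresponding end-cycle code. Hence each $E_t$, combined with the incoming frame $P_{t-1}$, is a one-round undetectable fault. The logical classes compose: the total logical is $\prod_t [\,\bar P_t \bar P_{t-1}^{-1}]$, and if it is nontrivial some factor must be nontrivial. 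That round's faults $E_t$, conjugated by the ideal incoming $P_{t-1}$, which is absorbed into the ideal preparation since it is a stabiliser times logical, is an undetectable logical fault of the one-round experiment. Therefore $|E|\ge|E_t|\ge d_{\mathrm{circ}}(1)$.

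The main obstacle is making the detector-locality claim rigorous. I must check that with only two contracting subsets, each stabiliser measured at $M_{j+1}$ is measured again exactly one measurement round later after an intermediate round. This ensures no detector straddles more than one round of faults once the residual frame is accounted for. I also need that the one-round experiment's ideal boundary is exactly equivalent to commuting with the end-cycle stabilisers, with the same $d_{\mathrm{circ}}(1)$ whichever $C^{\mathrm{end}}_j$ one starts in. I would handle the latter by applying the argument to both choices of starting code. I would first try to phrase everything via the mid-cycle code $C$, where all stabilisers in $S$ are tracked, to make the frame bookkeeping uniform.
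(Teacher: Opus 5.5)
Your embedding argument for $d_{\mathrm{circ}}(T)\le d_{\mathrm{circ}}(1)$ is fine. The reverse inequality has a genuine gap. It rests on the claim that undetectability of the whole fault set $E$ forces every intermediate frame $P_t$ to commute with the stabilisers of the corresponding end-cycle code, so that some single round's faults $E_t$ already form a one-round undetectable logical error. That claim is false.

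Undetectability only says something weaker. Take a duration-two regular detector straddling the end-cycle time-step between two rounds; it contains the records $m_t(s)$ at that step and a contraction one round later. Undetectability says the faults on its two sides flip it an even number of times in total, not that each side leaves it untouched.

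Such cross-cut cancellations are built into morphing circuits. Consider a measurement error on the contraction of $s$ at time $t$ and a reset error on the same qubit immediately afterwards. They lie in adjacent rounds, each alone flips the expanding detectors whose record sets contain that qubit, and only together do they cancel; this is the building block of the proof of Prop.~\ref{prop:time_like_upper_bound}. Now split a logical operator as $e_1e_2$, with both halves ending on the same stabilisers $s,s'$. Insert $e_1$ just before and $e_2$ just after the contraction of $s,s'$, and bridge them with such pairs. Generically this gives an undetectable logical fault set in which neither round's faults are undetectable on their own. So $|E|\ge|E_t|\ge d_{\mathrm{circ}}(1)$ does not follow.

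More decisively, your argument uses only two facts: the decomposition into rounds, and that a round's faults touch only detectors of that round and the next. Both hold verbatim for a non-alternating normal circuit. Yet the paper's rotated surface code example has $d_{\mathrm{circ}}^{\textnormal{non-alt}}(T)=\max(d-T,\lceil d/2\rceil)$, strictly below $d_{\mathrm{circ}}^{\textnormal{non-alt}}(1)=d-1$ for $T\ge 2$ and $d\ge 4$. This happens because minimum-weight logical errors there combine hook errors whose syndromes are split across consecutive rounds. So your proposal, as structured, would prove a false statement. The detector locality you flag as the main obstacle is true and easy, and is not where the difficulty lies.

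The missing ingredient is the time-reversal symmetry that two-round morphing circuits (and alternating normal circuits) have and non-alternating ones lack. The round $M_{j}\circ F_{j}\circ F_{j+1}^{\dag}\circ R_{j+1}$ is exactly the time-reverse of the preceding round $M_{j+1}\circ F_{j+1}\circ F_{j}^{\dag}\circ R_{j}$, so the circuit is mirror-symmetric about every end-cycle time-step $t$.

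The paper's proof uses this to fold rather than to select a round. Each fault in the last round $[t,t+1]$ is replaced by its mirror image in $[t-1,t]$; for example, a measurement error becomes a reset error. The mirror image flips exactly the same duration-two detectors straddling $t$, time-shifted copies of the other detectors the original flips, and the same logical operator, since both propagate to the same mid-cycle error. In an undetectable $E$ the last round's faults flip none of those other detectors, because no other faults touch them. Hence the folded set is still undetectable and logical, and its weight can only drop through cancellations with faults already in $[t-1,t]$.

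Iterating down to the first round gives a one-round undetectable logical error of weight at most $|E|$. That error is the mod-2 sum of the (repeatedly mirrored) faults of \emph{all} rounds. This is exactly what absorbs the cross-cut cancellations that selecting a single $E_t$ cannot handle.
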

\begin{figure}
    \centering
    \includegraphics[width=\linewidth]{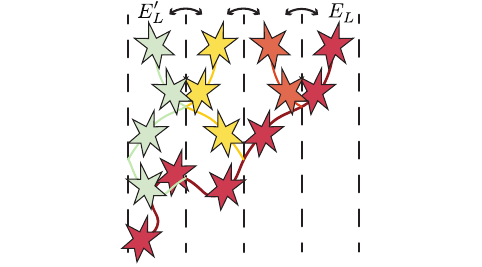}
    \caption{The key step in the proof of Prop.~\ref{prop:two_round_circuit_distance} takes arbitrary undetectable circuit-level logical error $E_{L}$ (dark red) spanning some number of measurement rounds (in this example, four) and ``folds'' it up into an undetectable circuit-level logical error $E_{L}'$ (light green) that only spans one measurement round by repeatedly applying the time-reversal symmetry.}
    \label{fig:time_reversal-main}
\end{figure}
\begin{proof}[Proof Sketch]
    The proof (which can be found in Appendix~\ref{sec:proofs}) relies on the ``time-reversal'' symmetry that exists within two-round morphing circuits. In particular, given any undetectable circuit-level logical error $E_{L}$ in a morphing circuit that contains errors spanning multiple measurement rounds, we can ``fold up'' $E_{L}$ by repeatedly applying the time-reversal symmetry to construct a new undetectable circuit-level logical error $E_{L}'$ that contains errors spanning only a single measurement round, see the sketch in \cref{fig:time_reversal-main}. The folding procedure does not increase the weight of the logical error, which leads to the result.
\end{proof}

Thus, since the circuit-level distance for a memory experiment is set by errors within a single measurement round, we can visualise all the possible circuit-level errors in one place by forward- or backward-propagating each circuit-level error, including reset and measurement errors, to the mid-cycle code, as illustrated in \cref{fig:circuit_level_errors}. Then, any set of circuit-level errors is a logical error if and only if the corresponding set of propagated errors make up a logical error in the mid-cycle code.

\begin{figure}
    \centering
    \includegraphics[width=\linewidth]{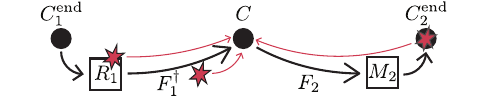}
    \caption{Circuit-level errors occurring anywhere (red stars) in a single-round memory experiment for a two-round morphing circuit can be propagated (red lines) to the mid-cycle code $C$ to completely determine whether the errors constitute a logical error.}
    \label{fig:circuit_level_errors}
\end{figure}

%We point out that result like a Prop.~\ref{prop:two_round_circuit_distance} is not expected to hold for the circuit-level distance of a stability or lattice surgery experiment due to the existence of ``open time-boundaries'': the existence of these boundaries with missing detectors breaks the ``detector-trigger'' preservation property of the folding-up procedure in \cref{fig:time_reversal-main}.

\subsubsection{Connectivity}

As noted in the introduction, one of the advantages of morphing circuits is that they often have lowered connectivity requirements compared to the bare-ancilla circuits for the mid-cycle code, and we will see multiple examples in this paper, in particular in \cref{subsec:Abelian_2BGA,subsec:colour_code}. Indeed, in previous work~\cite{ST:morphing} we designed morphing circuits for 2BGA mid-cycle codes with weight $w$, in which each qubit is only required to be connected to $w-1$ other qubits, hence reducing the degree by one as compared to a bare-ancilla circuit. 
These reduced-connectivity morphing circuits were obtained for most but not all BB codes in \cite{Bravyi24}.
% although note that the morphing circuit construction in Ref.~\cite{ST:morphing} does not work for all 2BGA codes.

However, there is no a priori reason that an arbitrary morphing circuit for an arbitrary (LDPC) code has a lowered connectivity requirement. Indeed, the morphing circuit for the 3D toric code that we present in \cref{subsec:3D_TC} has \textit{increased} connectivity requirement compared to a bare-ancilla circuit measuring the stabilisers of the 3D toric code. 
%While these circuits are therefore less attractive for fixed-connectivity qubit platforms such as superconducting qubits, they may still be of interest in mobile qubit platforms such as neutral atoms or trapped ions.

We also note that there are other ways to reduce connectivity requirements that are not designed using the morphing design principle, such as the Bell-flagging and superdense colour code circuits~\cite{baireuther2019neural,Gidney2023}, and ideas using CXSWAP gates such as directional codes~\cite{geher2025directional} and Louvre~\cite{zhou2025louvre}.

\subsection{Detectors and the Time-Overhead of Morphing Circuits}\label{subsec:two_round_morphing_detectors}

\begin{figure*}
    \includegraphics[width = \linewidth]{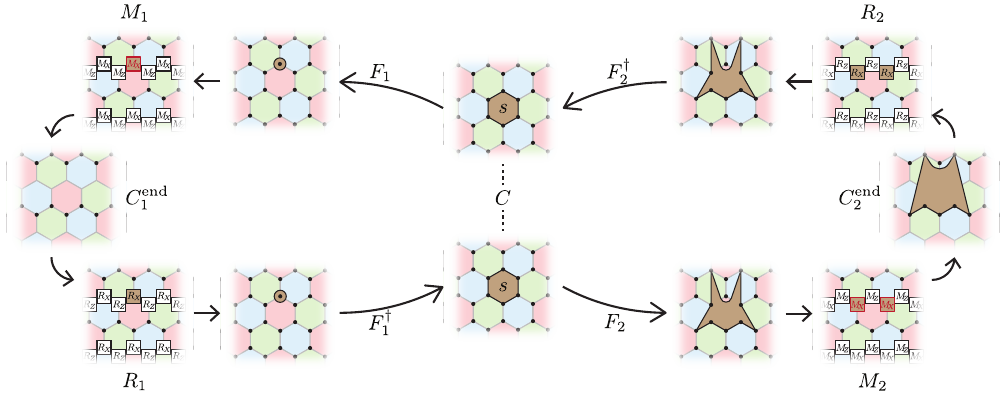}
    \caption{An example of an $X$-detector in a morphing circuit for the colour code which is constructed by considering how an $X$-stabiliser $s\in S_1$ in the mid-cycle code $C$ morphs to the two end-cycle codes and how its support is affected by measurements. Starting from $C_{1}^{\rm end}$, the detector for $s$ is formed by taking the XOR of the $M_X$ measurement outcomes in $M_2$ and the $M_X$ measurements in $M_1$ given by the red boxes. The first error the detector can detect is a reset error in $R_1$, and the last error it can detect is a measurement error in the following $M_{1}$, so we say that the duration of its detection region is two QEC cycles -- the same as in a bare-ancilla syndrome extraction circuit. Note that each $M_X$ in $M_2$ is also the measurement which measures a contracted $s'\in S_2$ and will therefore also have a trivial outcome in the absence of errors.}\label{fig:two_round_detector}
\end{figure*}

Decoding of error correction circuits such as morphing circuits proceeds by utilising the fact that some products of measurement outcomes in these circuits are deterministic in the absence of errors. Following the nomenclature of Stim~\cite{Gidney21} we call such products \textit{detectors}. For example, in a bare-ancilla syndrome extraction circuit, detectors can be formed by taking the product of two consecutive measurement outcomes corresponding to the same stabiliser, which gives rise to a corresponding \textit{detecting region}~\cite{McEwen23} that is local in time. We say that the \textit{duration} (of the detecting region) of the detector is the time-duration of the space-time region in which errors can trigger the detector. Thus, for normal bare-ancilla circuits, the duration of a detector is two syndrome extraction measurement rounds.

In this subsection we will sketch how to construct detectors with time-local detecting regions for two-round morphing circuits, leaving a more general and mathematical description for \cref{subsec:morphing_detectors}.

The structure of detectors in morphing circuits in general can be more complicated than simply the product of two consecutive measurement outcomes, and this was indeed the case for the BB codes in our previous work~\cite{ST:morphing}. In particular, the detector corresponding to a given mid-cycle stabiliser $s\in S_{j}$ will contain the measurement outcome corresponding to the contraction of $s$ in the measurement round $M_{j}$, \textit{as well as} all the measurements the stabiliser $s$ has support on in the other measurement round(s) $M_{j'}$, $j'\neq j$, as shown in the colour code example in~\cref{fig:two_round_detector}. This is despite the fact that, at first sight, it seems that the subset of stabilisers which are measured in $M_1$ have nothing to do with what is measured in $M_2$! To understand this, consider a stabiliser $s\in S_{1}$ of the mid-cycle code and consider how it morphs and is affected by each set of measurements in \cref{fig:two_round_detector}. The detection region starts with one $X$-reset in $R_1$, morphs to $s$ via $F_1^{\dagger}$, gets further expanded via $F_2$ and {\em partially measured} by (in this example) \textit{two} $X$-measurements in $M_2$ that we include in the detector. After this, the stabiliser contracts to finally end back on a single $X$-measurement in $M_1$. Thus, in two-round morphing circuits these detectors involve measurements from \textit{two} measurement rounds and their corresponding detecting regions have a duration of two, the same as for bare-ancilla circuits \footnote{One exception to this is when the stabiliser $s$ is contracting in \textit{both} contraction rounds (i.e.~$s\in S_{1}\cap S_{2}$), in which case the duration of the corresponding detector for $s$ is just one measurement round.}.

Even in two-round morphing circuits, this means that detectors may contain more or less than two measurement outcomes, and measurement errors may trigger more than two detectors. For example, any error on a measurement in $M_{1}$ will trigger one detector corresponding to the contracting stabiliser in the set $S_1$ that is being measured, as well as triggering any number detectors in the following time-step corresponding to expanding stabilisers in the set $S_2$ that {\em happen to also have support on the measured qubit}. If there is more than one such expanding stabiliser in $S_2$, then the measurement error will trigger more than two defects and one will not be able to form `string-like' errors only out of measurement errors. Note that errors on resets in $R_{1}$ will cause a syndrome that is a ``mirrored'' version of a measurement error: the detectors corresponding to expanding stabilisers in $S_{2}$ are triggered in the round \textit{preceding} the detector corresponding to the contracting stabiliser in $S_{1}$. Although this can complicate the decoding problem, it also can have some advantages that we will explore later in \cref{sec:measurement_reset_bias}.

Due to detectors lasting the same duration as in bare-ancilla circuits, the time-overhead of two-round morphing circuits to achieve full fault-tolerance in lattice surgery operations is the same as the overhead of normal, bare-ancilla circuits. However, if there are more than two contraction rounds in the morphing circuit as in \cref{fig:overview}(c), then typically one needs more than one measurement round to measure all of the stabilisers of each end-cycle code. This can have consequences for the amount of time taken to perform lattice surgery which we discuss in much more detail in~\cref{sec:optimising_time_like}.

\subsubsection{Decoding}\label{subsubsec:morphing_decoding}

How do we decode morphing circuits using the detectors we have just defined? Unfortunately, an optimised decoder to decode bare-ancilla syndrome extraction circuits for the mid-cycle code $C$ does not immediately map to an optimised decoder for a morphing circuit derived from this mid-cycle code $C$ via the morphing design principle. As we have argued, in morphing circuits, measurement errors can trigger more than two detectors and decoding methods have to be adapted to optimally handle these. For example, matching decoders~\cite{sahay2022decoder,lee2025color} have been developed for 2D hexagonal lattice colour codes, but similar decoder strategies for colour code morphing circuits would have to be rethought or adjusted.

Despite this, there is a useful relationship between the decoding problem for morphing circuits and the much simpler decoding problem for phenomenological noise which is, by definition, independent of the manner in which stabiliser generators are measured. More precisely, we define the phenomenological noise model as one in which there are only errors on the data qubits of $C$, and errors in the stabiliser generator measurements of $C$. We consider any memory, stability, or even lattice surgery experiment of any depth, so long as the morphing and phenomenological experiments can be mapped to each other (as we explain in more detail in Appendix~\ref{sec:proofs}). Consider the decoding hypergraph $G_{\rm phen}^C=(V_{\mathrm{phen}}^{C},H_{\mathrm{phen}}^{C})$ for this phenomenological noise model, in which detectors are given by vertices and errors are hyperedges with endpoints corresponding to the detectors that are triggered by the error. Then, we can relate the hypergraph $G_{\rm circ}^{C, \rm morph}=(V_{\mathrm{circ}}^{C, \rm morph},H_{\mathrm{circ}}^{C, \rm morph})$ for a morphing circuit based on $C$ subjected to circuit-level noise, to $G_{\rm phen}^C$ using the following proposition.

\begin{proposition}\label{prop:decode}
    Given a quantum LDPC stabiliser code $C$ with a constant-depth $J$-round morphing circuit where every stabiliser generator is measured exactly once in the $J$ rounds of the morphing circuit. Every hyperedge of the hypergraph $G_{\rm circ}^{C, \rm morph}$ can be written as a product of a constant number of hyperedges in $G_{\rm phen}^C$.
\end{proposition}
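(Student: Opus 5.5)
The proof will go through an explicit map between the two detector sets, followed by propagation of each circuit-level fault to the mid-cycle code. The hypothesis that every generator is measured exactly once per $J$ rounds means the $J$-round morphing experiment and the phenomenological experiment have the same detectors. For each generator $s\in S$ and each period $t$ of the $J$ rounds, there is exactly one morphing detector: the one built in \cref{subsec:two_round_morphing_detectors} (and more generally \cref{subsec:morphing_detectors}) from the contracted measurement of $s$ in $M_{j}$, $s\in S_{j}$, together with the measurements in the intermediate rounds $M_{j'}$ on which the expanding $s$ has support. This matches the phenomenological detector comparing two consecutive noisy measurements of $s$. We fix this identification $V_{\mathrm{circ}}^{C,\rm morph}\cong V_{\mathrm{phen}}^{C}$. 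The hyperedges of $G_{\rm phen}^C$ are then (i) single-qubit Pauli errors on a data qubit of $C$ between two consecutive measurements of the relevant generators, and (ii) flips of a single generator outcome, which trigger the two detectors of that generator adjacent in time.

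Next, we take an arbitrary elementary circuit-level fault: a Pauli after a gate in some $F_j$ or $F_j^\dagger$, a reset error in $R_j$, a measurement error in $M_j$, or an idle error. We propagate it to the nearest mid-cycle time slice, as in \cref{fig:circuit_level_errors}, going forward through the remaining gates of $F_j^{\dagger}$ or backward through the gates of $F_{j}$. Because $F_j$ has constant depth and bounded gate fan-out (the code is LDPC), the propagated Pauli $P$ on the mid-cycle qubits has constant weight. The pure data-qubit part of $P$ is then a product of $O(1)$ phenomenological data-error hyperedges, one per qubit in its support.

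The key step is to account for the residual. The propagated error fails to coincide with a data error at one time slice because it may straddle the measurements $M_j$. For example, a Pauli anticommuting with a contracted single-qubit measurement flips that outcome, which in the phenomenological picture is a measurement error on the corresponding $s\in S_j$. Errors on qubits measured in $M_j$ also affect the expanding stabilisers that are partially measured there. I would prove that the syndrome, namely the set of morphing detectors flipped by the fault, equals the symmetric difference of the syndromes of: the data error $P$ at the mid-cycle slice on one side of $M_j$, plus measurement-error hyperedges for each contracted generator in $S_j$ whose single-qubit measurement is flipped. To do this I will check, detector by detector, the parity of each morphing detector in terms of which of its constituent measurements are flipped. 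Since each detector is a fixed product of outcomes that equals the propagated value of $s$ before and after, a flip of an included outcome must be matched either by $s$ anticommuting with the error on the corresponding side, or by a measurement error of $s$.

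Each qubit lies in the support of $O(1)$ generators, so only constantly many measurement-error hyperedges appear, which gives the claimed constant bound. I expect the main obstacle to be this bookkeeping for faults occurring \emph{during} the measurement and reset layer. A reset error in $R_j$ needs the mirrored treatment, propagating it forward through $F_j^\dagger$ rather than backward, and one must confirm that the time index assigned to the induced data error is consistent with which detectors fire. I would handle reset and measurement errors as separate cases first, and then obtain the gate faults by noting they are equivalent to products of these with a propagated data error.
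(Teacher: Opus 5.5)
There is a genuine gap in your key step: the decomposition you claim is not correct, because it corrects for the wrong thing.

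\textbf{The actual mismatch.} After propagating a fault to its mid-cycle slice there is no residual from ``straddling'' $M_j$. A measurement error is equivalent to a Pauli just before the measurement, since the qubit is reset afterwards. A reset error is a Pauli just after the reset. In either case the propagated constant-weight Pauli $E_{\mathrm{mid}}$ flips exactly the same detectors as the fault. The real mismatch is the staggered measurement schedule. Your identification of detectors (``one morphing detector per generator and period'') leaves this unspecified, yet it is where the content lies.

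\textbf{The correct correction.} Fix the period-aligned map $\mathrm{regdet}_{j(s)+kJ}(s)\mapsto\mathrm{regdet}^{\mathrm{phen}}_{k}(s)$, and place $E_{\mathrm{mid}}$ at mid-cycle time $kJ+j_{\mathrm{err}}+1/2$. Generators $s\in\mathrm{syn}(E_{\mathrm{mid}})$ with $j(s)>j_{\mathrm{err}}$ have their flipped detector in period $k$. Those with $j(s)\le j_{\mathrm{err}}$ have it in period $k+1$. A phenomenological data error puts all its flips in one period. So the correct decomposition is the data error $E_{\mathrm{mid}}$ in period $k$, plus a measurement error on \emph{every} early syndrome generator, taken from all rounds up to $j_{\mathrm{err}}$. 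That set is fixed by the round labels of the whole syndrome, not by which outcomes of $M_j$ happen to be flipped.

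\textbf{Your rule fails already for $J=2$.} Consider a single-qubit fault at the mid-cycle slice between $R_1$ and $M_2$, on a qubit lying in an $S_1$ generator and an $S_2$ generator, as in the hex-grid toric code. It flips the $S_2$ detector in period $k$ and the $S_1$ detector in period $k+1$. It touches no measurement, so your rule returns a bare data error, which puts both flips in one period. Shifting the period alignment only moves this failure to the other mid-cycle slice.

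The rule also over-corrects in the other direction. A measurement error in $M_1$ at time $2k+1$ flips $\mathrm{regdet}_{2k+1}(s_0)$ and $\mathrm{regdet}_{2k+2}(s)$ for each expanding $s$ partially measured on that qubit. All of these lie in period $k$, so whenever such an $s$ exists the fault is a single data error. Your added measurement error on $s_0$ then breaks the match.

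\textbf{Meta-check detectors.} You also omit these. When $C$ has redundancies, they are vertices of both hypergraphs, and phenomenological measurement errors flip them. This is precisely the single-shot setting in which the proposition is later used. You must therefore check that the measurement errors on the early generators flip exactly those $\mathrm{metadet}(r)$ with $|r\cap\mathrm{syn}(E_{\mathrm{mid}})_{\mathrm{early}}|$ odd, matching the morphing fault; they do.

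\textbf{The constant bound.} This then follows from $|\mathrm{syn}(E_{\mathrm{mid}})|=O(1)$: constant depth makes $E_{\mathrm{mid}}$ constant weight, and LDPC makes its syndrome constant size.
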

\begin{proof}[Proof Sketch]
    The proof in Appendix~\ref{sec:proofs} first involves constructing a map between the detectors in a morphing circuit and those in a phenomenological circuit. With this map, given any hyperedge in the morphing hypergraph we show how to explicitly construct a constant-size set of hyperedges in the phenomenological hypergraph that trigger the same detectors.
\end{proof}

Prop.~\ref{prop:decode} is helpful because it implies that any decoder for the phenomenological noise model can be applied to the morphing circuit after decomposing the hyperedges of the morphing circuit into those of the phenomenological noise model. However, this decomposition strategy may in practice lead to worse performance for the morphing circuits due to the loss of information about the correlations between those hyperedges in the decomposition. Note that in our previous work \cite{ST:morphing} we used a general LDPC decoding method, BP-OSD~\cite{Panteleev21}, to decode the morphing circuits for BB codes. However here, for our toric code morphing circuits in \cref{sec:measurement_reset_bias}, we use Belief Matching~\cite{higgott2023improved} for decoding, which uses Belief Propagation to better handle the correlations between hyperedges. Meanwhile, for our colour code morphing circuits in \cref{subsec:colour_code,sec:measurement_reset_bias}, we do not attempt to adapt any bespoke colour code decoders and simply use an (inefficient) minimum-weight decoder implemented with Gurobi~\cite{Gurobi}. 

\subsubsection{Redundancies and Stability Experiments}\label{subsec:two_round_morphing_redundancies}

An important property of QEC codes is that of \textit{redundancies} in the stabiliser generators -- that is, the to-be-measured set of stabiliser generators is not independent. Such redundancies are necessary to define stability experiments~\cite{gidney2022stability}, and can enable single-shot error correction for some codes. More precisely, a redundancy $r\subseteq S$ is any subset of the stabiliser generators of a code whose product is the identity.

Here, we are interested in redundancies of the mid-cycle code $C$, and we wish to see how these redundancies behave in the morphing circuit. One important observation is that the stabilisers in each contracting subset $S_{j}$ must be \textit{independent}; so, any redundancy of the mid-cycle code cannot be contained in just one contracting subset. Writing the stabilisers in the redundancy $r$ that are contained in the contracting set $S_j$ as $r_{j} \equiv S_{j}\cap r$, this means that in a two-round morphing circuit the sets $r_{1}$ and $r_{2}$ must both be non-empty.

Interestingly, in morphing circuits, each redundancy of the mid-cycle code typically gives rise to one redundancy in each end-cycle code $C_{j}^{\mathrm{end}}$, as we now explain. Consider propagating each of the stabilisers in a redundancy $r$ from the mid-cycle code $C$ to the end-cycle code $C_{2}^{\mathrm{end}}$. By definition, the contraction circuit $F_{2}$ contracts each contracting stabiliser in $r_{2}$ onto a single-qubit. Because $F_{2}$ is unitary, we must have that
\begin{equation}\label{eq:two_round_redundancy_product}
    \prod_{s\in r}F_{2}^{\vphantom{\dag}}sF_{2}^{\dag}=I.
\end{equation}
Then, to obtain the stabilisers of $C_{2}^{\mathrm{end}}$, we simply perform a Gottesman-Knill update on each pre-measurement stabiliser by only retaining its support on the qubits that are not measured in $M_{2}$ (as, say, in \cref{fig:two_round_detector}). Doing this removes each of the contracted stabilisers, leaving behind only the stabilisers in the set $r_{1}$. However, the product of these stabilisers must still be the identity on the non-measured qubits from \cref{eq:two_round_redundancy_product}. One trivial way this can be satisfied is when $r_{1}$ contains only one stabiliser $s$: in this case, $s$ must necessarily be the identity during $C_{2}^{\mathrm{end}}$ and therefore is not a stabiliser generator of the end-cycle code $C_{2}^{\mathrm{end}}$. However, more typically, there are multiple stabilisers in $r_{1}$, each of which is a non-trivial stabiliser generator of $C_{2}^{\mathrm{end}}$, in which case $r_{1}$ is itself a redundancy in $C_{2}^{\mathrm{end}}$. In fact, we will see later in \cref{subsec:3D_TC} that in the 3D toric code, \textit{every} redundancy of the mid-cycle code leads to one redundancy in each end-cycle code.

Regardless of whether or not $r$ explicitly induces an end-cycle redundancy, each mid-cycle redundancy $r$ will always give rise to additional detectors for decoding the circuit because they define products of measurements occurring at the same time that are deterministic in the absence of errors. How do we set these detectors given a redundancy $r$ in the mid-cycle code? In two-round morphing circuits, we can define a new detector that contains the measurements corresponding to the contraction of each stabiliser in the set $r_{1}$. Importantly, this detector contains only measurements in a single measurement round, and does not involve the stabilisers in $r_{2}$. This works because the stabilisers in $r_{1}$ necessarily product to the identity in $C_{2}^{\mathrm{end}}$, so there is no need to include any measurements corresponding to stabilisers in $r_{2}$ in the detector. Likewise, a separate detector can be constructed from just the contracting stabilisers in $r_{2}$. The duration of each of these detectors is just one measurement round, unlike the standard detectors discussed in \cref{subsec:two_round_morphing_detectors}, but like detectors defined from redundancies in normal syndrome extraction circuits. Thus one can use these detectors to define stability experiments for the end-cycle code in the standard way. These also enable single-shot error correction in morphing circuits when starting with a single-shot mid-cycle code. We will discuss all of this in more detail in \cref{sec:optimising_time_like}.

\subsection{Purely Contracting Morphing Circuits as Contraction Tree Diagrams}
\label{subsec:purely_contracting}

A particularly convenient subset of morphing circuits are what we call \textit{purely contracting} morphing circuits, which can be specified visually by \textit{a contraction tree diagram}. Consider a morphing circuit with contraction circuits $F_{j}$ and contracting subsets $S_{j}$ and each contraction circuit $F_{j}$ is compiled into a series of {\em layers} of CNOT gates that are implemented sequentially, given by
\begin{equation}
    F_{j}=F_{j,L}\circ\cdots\circ F_{j,1}.
\end{equation}
With this, we say that a compilation of a morphing circuit is \textit{purely contracting} if:
\begin{enumerate}
    \item every CNOT gate in each gate layer $F_{j,\ell}$ causes at least one contracting stabiliser to contract (i.e. reduce its support); and
    \item for every contracting stabiliser $s\in S_{j}$, each gate layer $F_{j,\ell}$ causes $s$ either to contract or to remain unchanged, but {\em not} expand.
\end{enumerate} 

%[To be more precise, let us write $s_{\ell}$ to denote the contracting stabiliser after $\ell$ layers of the contracting circuit $F_{j}$, given inductively by $s_{0}=s$ and $s_{\ell}^{\vphantom{\dag}}=F_{j,\ell}^{\vphantom{\dag}}s_{\ell-1}^{\vphantom{\dag}} F_{j,\ell}^{\dag}$ for $\ell=1,\dots,L$. By the definition of a morphing circuit, $s_{L}$ will be a single-qubit Pauli operator. Then, for the morphing circuit to be purely contracting we require that, for all $s\in S_{j}$ and $\ell=1,\dots,L-1$, there exists a set of qubits $Q$ such that $s_{\ell+1}$ is equal to the projection of $s_{\ell}$ onto the qubits $Q$. Note that this allows $s_{\ell+1}=s_{\ell}$, but does not allow $s_{\ell}$ to ``grow'' or ``expand'' in any gate layer, nor does it allow $s_{\ell}$ to change Pauli type on any of its qubits\footnote{For the definition of purely contracting morphing circuits to make sense, it is therefore advisable \textit{not} to decompose the morphing circuits into the \textit{physically executable} set of gates, which may for example include single-qubit rotations and CZ gates instead of CNOT gates. It may be possible to generalise the definition of purely contracting morphing circuits to allow for such physical decompositions, but for simplicity we do not discuss this here.}.]

\begin{figure*}
    \includegraphics[width = \linewidth]{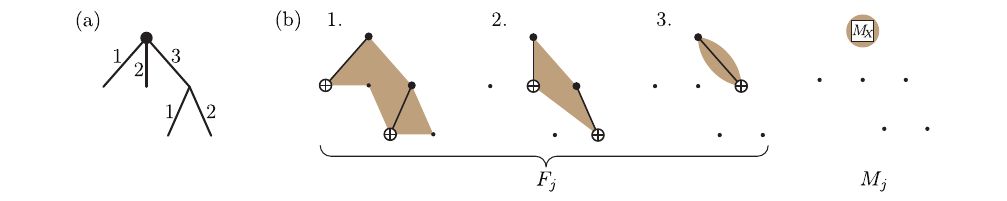}
    \caption{An example contraction tree for a given $X$-stabiliser in a purely contracting morphing circuit; this particular example corresponds to the contraction tree for the circuits in Ref.~\cite{ST:morphing}. The contraction tree is in (a) while the corresponding CNOT and measurement layers are shown in (b), with qubits placed on vertices of the graph. The gates in (b) make up the $s$-contraction circuit $F_{s}$ for the $X$-stabiliser. The support of the contracting $X$-stabiliser {\em before} the gate layer is also shown in brown. Note that only the CNOT gates that correspond to edges in the contraction tree are shown in (b), and other CNOT gates may be performed to contract other stabilisers at the same time so long as they do not inadvertently expand the contracting stabiliser.}\label{fig:contraction_tree}
\end{figure*}

Given a purely contracting morphing circuit, the contraction (by $F_j$) of some stabiliser $s$ can be represented by a contraction tree $T_{s}=(V_{s},E_{s})$: it is simply a graph with vertices corresponding to the qubits in $s$ and with edges corresponding to the CNOT gates that contract the stabiliser $s$, see \cref{fig:contraction_tree}. Each edge $e\in E_{s}$ is additionally labelled by an integer $\ell$ that corresponds to the layer in which the corresponding CNOT gate occurs in during the contraction circuit $F_{j}$ (although this integer can sometimes be inferred from context). One of the vertices of the graph is the ``root'' of the graph, which corresponds to the qubit that the stabiliser $s$ contracts onto to get measured. Note that, in principle, we could also assign each edge a direction corresponding to the direction along which the stabiliser is contracted (so that we have a directed acyclic graph), however this information can be inferred from the root of the graph. Going in the opposite direction, given a stabiliser $s$ and a contraction tree $T_{s}$, we define the $s$-contraction circuit $F_{s}$ as the circuit that only contains CNOT gates involved in the contraction tree $T_{s}$, as shown in \cref{fig:contraction_tree}.

\begin{figure}
    \centering
    \includegraphics[width = \linewidth]{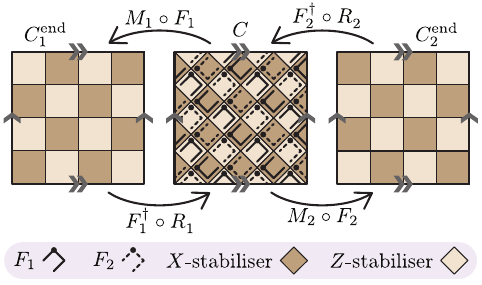}\\
    \caption{The contraction tree diagram and the corresponding end-cycle codes for the distance-4 hex-grid toric code~\cite{McEwen23,debroy2024:luci}, with qubits on vertices, $X$-stabilisers shaded in brown, and $Z$-stabilisers shaded in beige. The grey chevrons indicate the periodic boundary conditions of the toric code; the mid-cycle unrotated toric code $C$ can also be written as an Abelian 2BGA code, see \cref{tab:Abelian_2BGA_code_examples}. The two contraction circuits are represented using solid and dashed contraction trees respectively. Technically, each contraction tree should also be accompanied by labels representing the order in which the CNOT gates are executed, however here the two layers of CNOT gates can be deduced from the contraction trees themselves. Both end-cycle codes are distance-4 \textit{rotated} toric codes, with the two end-cycle codes related by a translation of the qubit lattice.}
    \label{fig:hex_grid_TC}
\end{figure}

We construct a contraction tree diagram by drawing all of the contraction trees corresponding to each stabiliser in the code together on the same diagram, while adding additional labels to indicate which contraction round the stabiliser is contracting in. Moreover, a contraction tree diagram is enough to uniquely specify a purely contracting morphing circuit. As a standard example, we show a contraction tree diagram for the hex-grid toric code in \cref{fig:hex_grid_TC}. Almost all of the morphing circuits we construct in this paper are purely contracting, with only a few exceptions in \cref{sec:measurement_reset_bias}. Such contraction tree diagrams are a generalisation of the LUCI diagrams that were introduced for surface codes in Ref.~\cite{debroy2024:luci}.
%The graph is guaranteed to be acyclic (hence why we call it a \textit{tree});

\subsection{Alternating Normal Syndrome Extraction Circuits are Two-Round Morphing Circuits}\label{subsec:from_normal_circuits}

%We conclude this section with a general construction of morphing circuits \textit{from an existing} normal syndrome extraction circuits. In many ways this construction is ``trivial'' in the sense that we already need to somehow have a normal fault-tolerant circuit for the input code, and moreover does not fit into the ``spirit'' of the morphing circuit design principle because the input code will be an end-cycle code rather than the mid-cycle code of the resulting morphing circuit. 

In this section we show how one can view any \textit{alternating} normal syndrome extraction circuit -- that is, a normal circuit that is run in time-reverse every second measurement round for some code $C$-- as a two-round morphing circuit as in~\cref{fig:overview}(a). This is particularly insightful because it means that the adaptations to reduce leakage in morphing circuits that we discuss in \cref{sec:manipulating} can be applied to alternating normal circuits.

First, let us be slightly more precise with what we mean by a normal (syndrome extraction) circuit. A Clifford syndrome extraction circuit is normal when
\begin{enumerate}
    \item it measures all of the stabiliser generators of the code,
    \item it projects the state into a simultaneous eigenstate of all the stabiliser generators, and
    \item it leaves the logical operators of the code invariant.
\end{enumerate}

Now, consider what happens if we \textit{time-reverse} the operation of the normal circuit. Running the normal circuit in time-reverse order means that the two qubit and single qubit gates $U$ are executed in reverse order and each gate individually is executed as $U^{\dagger}$ instead of $U$. Because we reverse the order, instead of performing a measurement at a given time we apply a reset in the same basis, and vice versa. 

It is not too hard to see that the time-reversed circuit must also be a normal circuit. For example, for normal bare-ancilla circuits in which one measures each Pauli stabiliser $P$ via executing the unitary controlled-$P$, using some two-qubit gates, and reset and measurement of the ancilla qubit in the $X$-basis, it is rather obvious that reversing the order of the two-qubit gates still implements the same circuit. However, the normal circuit can also involve flag qubits or encoded ancilla qubits like in Steane QEC: for completeness we put a short argument in Appendix~\ref{sec:weight_7_details}.
%It turns out that every normal syndrome extraction circuit can be converted by a small 'alternating' modification to a two-round morphing circuit as in Fig.~\ref{fig:overview}(a) without affecting its fault-tolerant properties. This implies that we can apply all techniques in \cref{sec:manipulating}, i.e. reducing leakage and usingh CXSWAP gates, to any normal syndrome extraction circuit with the alternating modification. We state this modification and the upshot concisely as follows:

% We say that the syndrome extraction circuit is ``normal'' if it measures \textit{all} of the stabilisers of $C_{\mathrm{in}}$ in just a single measurement round\footnote{We leave it as an open problem to generalise this to subsystem and/or Floquet codes.}. Because we allow the circuit to use ancilla qubits, it is trivial to rewrite any syndrome extraction circuit that uses more than one measurement round to measure all the stabilisers into a normal circuit simply by using more ancilla qubits and synchronising the resets and measurements to occur at the beginning and end of the circuit. 

Now, we define an alternating normal circuit as one in which we run every second execution of the normal circuit in time-reverse. Here is why such alternating normal circuit is a two-round morphing circuit, i.e.~it has the structure of \cref{fig:overview}(a). We can formally write the normal circuit as a set of resets $R_{\mathrm{norm}}$, followed by a Clifford circuit $F_{\mathrm{norm}}$, followed by a set of measurements $M_{\mathrm{norm}}$. This may require adding extra idling to move any mid-circuit resets and measurements to the start or end of the circuit, but is only necessary to make the connection with morphing circuits clear and does not need to be done physically. With this, we can pick any intermediate layer of the normal circuit and decompose the circuit into $F_{\mathrm{norm}}=F_{\mathrm{norm},2}\circ F_{\mathrm{norm},1}$. Then we can define the mid-cycle code $C^{\mathrm{mid}}$ as the code given by implementing $F_{\mathrm{norm},1}\circ R_{\mathrm{norm}}$ to the stabiliser code $C=C^{\mathrm{end}}$, and the two contracting circuits in~\cref{fig:overview}(a) to be $F_{1}^{\vphantom{\dag}}=F_{\mathrm{norm},1}^{\dag}$ and $F_{2}=F_{\mathrm{norm},2}$. 
%\end{proof}

\begin{figure}
    \centering
    \includegraphics[width=\linewidth]{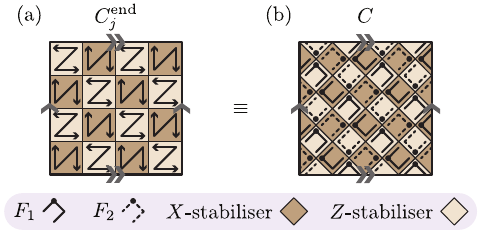}
    \caption{An alternating bare-ancilla circuit for the distance-4 toric code. (a) The alternating (normal) bare-ancilla circuit for the \textit{rotated} toric code, which uses one ancilla qubit in the centre of each stabiliser that is either reset and measured in the $X$-basis for $X$-ancillas or in the $Z$-basis for $Z$-ancillas. Four CNOT gate layers are performed between the ancilla and each of the four data qubits, either from the top-left qubit to the bottom-right qubit or vice-versa, i.e. the arrows are to be followed in one direction in one round, and the other direction in the next round. In either case, the CNOT gates follow the ordering given by the ``N'' or ``Z'' shaped arrows. This circuit is normal because it measures all the stabilisers in a single measurement round. (b) Every alternating normal circuit can be viewed as a two-round morphing circuit with the input code (in this case the \textit{rotated} toric code) as its \textit{end-cycle} code. In the case of the toric code, the morphing circuit is also purely contracting and can be represented with a contraction tree diagram. The mid-cycle code here is the \textit{unrotated} toric code, but requires a square-grid connectivity unlike the hex-grid morphing circuit in~\cref{fig:hex_grid_TC}.}
    \label{fig:toric_back_and_forth}
\end{figure}

%It is worth noting that in principle, given an input qLDPC code $C_{\text{in}}$, the general existence of normal syndrome extraction circuits means that we can always construct a two-round morphing circuit using the alternating modification. 
Let us make a few observations about alternating normal circuits. The input code $C$ to the normal circuit is the \textit{end-cycle} code of the morphing circuit, rather than the \textit{mid-cycle} code, and therefore does not fit within the morphing design principle. The alternating bare-ancilla circuit also, obviously, has the same connectivity requirement as the normal bare-ancilla circuit itself.  
%so may not be beneficial in terms of logical performance. 
Moreover, the construction does not, in general, result in a \textit{purely contracting} morphing circuit and therefore does not always admit a convenient description in terms of contraction trees. Some notable exceptions to this are the toric and surface codes, as such, we have sketched the contraction tree diagram for the toric code alternating bare-ancilla circuits in~\cref{fig:toric_back_and_forth}. Nevertheless, our results in \cref{sec:manipulating} that adapt morphing circuits for leakage reduction will apply also to alternating normal circuits.

Remarkably, any memory experiment for an alternating normal circuit has a circuit-level distance at least as large as its non-alternating counterpart. We consider the same set-up as in Prop.~\ref{prop:two_round_circuit_distance}; that is, we define a $T$-round memory experiment by ideally preparing an arbitrary logical state $\ket{\overline{\psi}}$, performing $T$ rounds of a syndrome extraction circuit, and then ideally measuring the stabilisers of the code. We define the \textit{circuit-level distance} $d_{\mathrm{circ}}(T)$ as the smallest number of circuit-level errors that cause an undetectable logical error in such $T$-round memory experiment. 
\begin{proposition}\label{prop:alternating_d_circ}
    Let $d_{\mathrm{circ}}^{\textnormal{non-alt}}(T)$ be the circuit-level distance of a non-alternating normal circuit, and let $d_{\mathrm{circ}}^{\mathrm{alt}}(T)$ be the circuit-level distance of the corresponding alternating normal circuit. Then, we always have
    \begin{equation}\label{eq:back_and_forth_fault_tolerance}
        d_{\mathrm{circ}}^{\mathrm{alt}}(T)= d_{\mathrm{circ}}^{\mathrm{alt}}(1)=d_{\mathrm{circ}}^{\textnormal{non-alt}}(1)\geq d_{\mathrm{circ}}^{\textnormal{non-alt}}(T).
    \end{equation}
\end{proposition}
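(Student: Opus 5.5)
The plan is to prove the three relations in \cref{eq:back_and_forth_fault_tolerance} separately, from left to right.

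\textbf{First equality.} Since an alternating normal circuit is a two-round morphing circuit (shown earlier in this subsection), Prop.~\ref{prop:two_round_circuit_distance} gives $d_{\mathrm{circ}}^{\mathrm{alt}}(T)=d_{\mathrm{circ}}^{\mathrm{alt}}(1)$ directly. One detail needs checking: in a $T$-round memory experiment the ideal preparation and final ideal stabiliser measurement take place in the code $C$, which is the end-cycle code of this morphing circuit. That is exactly the setting assumed in Prop.~\ref{prop:two_round_circuit_distance}. I will also confirm that the $R$ and $M$ layers of the normal circuit are identified with the $R_j$ and $M_j$ of the morphing picture, so that circuit-level error locations match one-to-one.

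\textbf{Middle equality.} A one-round alternating experiment consists of ideal preparation, one execution of the normal circuit (forward, if we choose the first round to be forward), and then ideal measurement. This is literally the same circuit as the one-round non-alternating experiment, so $d_{\mathrm{circ}}^{\mathrm{alt}}(1)=d_{\mathrm{circ}}^{\textnormal{non-alt}}(1)$.

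If the convention is that the single round could be the time-reversed one, I need one more step. Time-reversing the whole experiment maps ideal measurement to ideal preparation and undetectable logical faults to undetectable logical faults of the same weight. The argument: take the Pauli frame of the faults, propagate it backwards, and note that detectors built from products of stabiliser measurements map to the same products. So the reversed one-round experiment has the same distance. This symmetry is the same ingredient used in the folding argument of Prop.~\ref{prop:two_round_circuit_distance}, so I would cite the appendix argument rather than redo it.

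\textbf{Final inequality.} I need $d_{\mathrm{circ}}^{\textnormal{non-alt}}(1)\geq d_{\mathrm{circ}}^{\textnormal{non-alt}}(T)$. Any undetectable logical fault set $E$ in the one-round experiment embeds into a $T$-round experiment: place $E$ in one round and leave every other round fault-free. Because the other rounds are ideal normal circuits, they do the following:
\begin{itemize}
\item they project onto the stabiliser eigenspace;
\item they leave logical operators invariant, by the definition of a normal circuit;
\item their measurement outcomes agree with the ideal stabiliser values given the post-$E$ state.
\end{itemize}
Hence every detector comparing consecutive rounds is untriggered, and the logical is still flipped. The $T$-round distance is the minimum over a larger set of fault configurations, so it is at most $d_{\mathrm{circ}}^{\textnormal{non-alt}}(1)$.

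\textbf{Main obstacle.} I expect the hardest step to be stating precisely the correspondence between detectors and ideal boundary measurements, in both the time-reversal step and the embedding step. In particular, I need that the final ideal stabiliser measurement in the one-round experiment plays the same role as the next round's measurement in the $T$-round experiment. I would handle this by working with the stabiliser group tracked at each $R/M$ boundary, which the normality conditions fully control.
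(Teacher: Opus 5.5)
Your proposal is correct and follows essentially the same three-step route as the paper: Prop.~\ref{prop:two_round_circuit_distance} for the first equality, the observation that the one-round experiments are the same circuit for the middle equality, and embedding a one-round undetectable logical fault into a $T$-round experiment for the inequality. Your extra time-reversal step for the middle equality is unnecessary but harmless: in the paper's definition only every second execution is reversed, so the single round of a $T=1$ alternating experiment is the forward one.
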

\begin{proof}
    This is a straightforward consequence of Prop.~\ref{prop:two_round_circuit_distance}: because every alternating normal circuit is a two-round morphing circuit, we have $d_{\mathrm{circ}}^{\mathrm{alt}}(T)=d_{\mathrm{circ}}^{\mathrm{alt}}(1)$. Moreover, alternating and non-alternating one-round memory experiments are identical because there is only one round, so we have $d_{\mathrm{circ}}^{\mathrm{alt}}(1)=d_{\mathrm{circ}}^{\textnormal{non-alt}}(1)$. And finally, every undetectable circuit-level error that can occur in a one-round memory experiment can also occur in a $T$-round memory experiment, so we therefore have $d_{\mathrm{circ}}^{\textnormal{non-alt}}(1)\geq d_{\mathrm{circ}}^{\textnormal{non-alt}}(T)$, concluding the proof.
\end{proof}

Prop.~\ref{prop:alternating_d_circ} is remarkable because the alternating and non-alternating circuits have identical depth, number of CNOT gates, and connectivity requirements, but the alternating circuit has a circuit-level distance that is both more computationally efficient to calculate \textit{and} can be larger than that of the non-alternating circuit. Indeed, the rotated surface code (with interleaved $X$- and $Z$-stabiliser measurements) can be an example of an alternating circuit having a larger circuit-level distance than a non-alternating circuit, in the situation where the hook errors induced by syndrome extraction circuit aligns with the minimum-weight logical errors, as was previously commented on by Craig Gidney in \href{https://quantumcomputing.stackexchange.com/questions/38963/advantages-and-disadvantages-of-rotated-surface-code?newreg=1ea70e1dc74e46ebb5fef47d45d2039e}{this Stack Exchange comment} and in Refs.~\cite{bluvstein2025architectural,haug2025lattice}. In this case, the hook errors trigger detectors in two different time-steps, so a minimum-weight logical error in the non-alternating circuit spans multiple QEC rounds. Specifically, the circuit-level distance of the non-alternating circuit as a function of the depth of the memory experiment is given by $d_{\mathrm{circ}}^{\textnormal{non-alt}}(T)=\mathrm{max}(d-T,\lceil d/2\rceil)$. Prop.~\ref{prop:alternating_d_circ} implies that the alternating circuit has circuit-level distance $d_{\mathrm{circ}}^{\textnormal{alt}}(T)=d_{\mathrm{circ}}^{\textnormal{non-alt}}(T)=d-1$.

Note, however, that even though the circuit-level distance of an alternating normal circuit is at least as large as that of the non-alternating circuit, this does not immediately guarantee that the logical performance of the alternating circuit is better than that of the non-alternating circuit at all physical error rates, since this performance also depends on the number of minimum-weight and near-minimum-weight circuit-level logical errors that can occur.

Prop.~\ref{prop:two_round_circuit_distance} showed that the circuit-level distance of a two-round morphing circuit is determined by a single syndrome extraction round. Prop.~\ref{prop:alternating_d_circ} similarly shows that the computational problem of determining the circuit-level distance is substantially simpler for alternating normal syndrome extraction circuit as compared to normal non-alternating syndrome extraction circuits (where one may need to consider $\Omega(d)$ rounds of syndrome extraction). Thus these propositions directly bear on the computational task for distance finding algorithms; such algorithms were recently considered in Ref.~\cite{webster2026:distance}. 

\subsection{Beyond Two-Round Morphing for Stabiliser Codes}\label{subsec:general_morphing}

So far we have discussed two-round morphing circuits for stabiliser codes; however, there are many situations when one needs to move beyond this, either by increasing the number of contraction rounds or by moving beyond stabiliser codes. For example, subsystem codes naturally arise when considering qubit dropouts in QEC codes~\cite{auger2017fault,debroy2024:luci}. In this section we describe how morphing can generalise both to subsystem and Floquet codes and to more than two contraction rounds and motivate why these cases can be of interest.

\subsubsection{Morphing for Subsystem and Floquet Codes}

We provide a general formulation of morphing circuits that applies to \eczoo[subsystem codes]{oecc}~\cite{kribs2005unified} and \eczoo[Floquet codes]{da}~\cite{hastings2021dynamically}. To make this formulation clear, it is useful to view morphing circuits merely as an ``ancilla-free'' implementation of a set of multi-qubit Pauli measurements. In particular, consider any set $S_{j}$ of multi-qubit Pauli measurements such that
\begin{itemize}
    \item the elements in the set $S_{j}$ are independent, and
    \item each pair of Pauli operators in the same subset commute, i.e.~$[P_{k},P_{\ell}]=0$ for all $P_{k},P_{\ell}\in S_{j}$.
\end{itemize}
Then, there always exists a contraction circuit $F_{j}$ and a set of measurements $M_{j}$ such that the circuit $F_{j}^{\dag}\circ R_{j}^{\vphantom{\dag}}\circ M_{j}^{\vphantom{\dag}}\circ F_{j}^{\vphantom{\dag}}$ implements all the measurements in $S_{j}$. Conversely, given an arbitrary Clifford circuit $F_{j}$ and a set of single-qubit Pauli measurements $M_{j}$ and corresponding resets $R_{j}$, we are guaranteed that the circuit $F_{j}^{\dag}\circ R_{j}^{\vphantom{\dag}}\circ M_{j}^{\vphantom{\dag}}\circ F_{j}^{\vphantom{\dag}}$ implements a set of commuting and independent multi-qubit Pauli measurements $S_{j}=\{F_{j}^{\dag} m F_{j}^{\vphantom{\dag}}: m\in M_{j}\}$. Therefore, given any sequence $(S_{1},S_{2},\dots)$ of sets $S_j$ of commuting and independent multi-qubit measurements, we can compile this sequence using a morphing circuit for each set $S_{j}$. This works even if there are Pauli operators in two different sets $S_{j}$ and $S_{j'}$ that do not commute.

This formalism can be used to define morphing circuits for subsystem and Floquet codes. Subsystem codes are usually defined in terms of a set of gauge group generators (generating the group $G$) that are to be repeatedly measured. However, these gauge group generators need to be measured in an order such that the stabilisers of the code -- elements in the centre of $G$ -- can be constructed. For our purposes here, let us suppose that we are given a sequence of sets of commuting measurements $(S_{1}^{\mathrm{subsys}},S_{2}^{\mathrm{subsys}},\dots)$ that satisfies this constraint. Then, we can either define each $S_{j}^{\mathrm{subsys}}$ as a contracting subset in the morphing circuit, or we can subdivide each $S_{j}^{\mathrm{subsys}}$ into more than one contracting subset. We can then design the contraction circuits $F_{j}$ in the usual way to construct the full morphing circuit. Designing a morphing circuit for a Floquet code can be done using the same procedure given a specification of the Floquet codes as a (ordered) sequence of sets of commuting multi-qubit Pauli measurements $(S_{1}^{\mathrm{Floq}},S_{2}^{\mathrm{Floq}},\dots)$.

Analysing the mid- and end-cycle codes of morphing circuits for subsystem and Floquet codes requires a generalisation of the concept of mid- and end-cycle codes, as we now explain. In morphing circuits for stabiliser codes, we can define a mid- or end-cycle stabiliser code following Ref.~\cite{McEwen23} by taking each detecting region at that time and defining a stabiliser as the ``time-slice'' of the detecting region. To generalise this to situations where the measurements in a QEC circuit are not necessarily deterministic, we instead define the mid- and end-cycle codes as \textit{subsystem} codes: here, the stabiliser generators are still defined time-slices of detecting regions, but we also define gauge operators as any operator in the mid-/end-cycle code that does not flip any detectors or logical operators in the circuit. This allows us to formally define mid- and end-cycle codes for morphing circuits of subsystem (and Floquet) codes that we will later use in \cref{subsec:adding_rounds}. Interestingly, in some rare cases, a mid-cycle subsystem code can morph into an end-cycle stabiliser code, we will see an example of this later in~\cref{subsec:data_ancilla}.

\subsubsection{The Need for More than Two Contraction Rounds}

Moving beyond two contraction rounds is technically relatively straightforward, and an example of this for a stabiliser code is sketched in \cref{fig:overview}(c). Using more than two contraction rounds, however, increases the number of measurement rounds required for lattice surgery (as we explain in~\cref{sec:optimising_time_like}) \textit{and} means that the computationally-efficient method for calculating the circuit-level distance in Prop.~\ref{prop:two_round_circuit_distance} no longer applies.

However, there are (at least) two reasons we have found that motivate moving beyond two contraction rounds. First, we will see with the example of the Bacon-Shor code in~\cref{subsec:adding_rounds} that subsystem codes typically need more than two contraction rounds because of the additional requirement that the contracting subsets must contain mutually commuting Pauli operators. Moreover, in the Bacon-Shor code, increasing the number contraction rounds also improves the end-cycle code distance, and similar effects may also arise in other codes.

Second, even in stabiliser codes sometimes the most natural approach to designing a morphing circuit uses more than two contraction rounds. A simple approach to designing two-round morphing circuits that applies to, for example, the surface, colour and Abelian 2BGA codes \cite{McEwen23,Gidney2023, ST:morphing}, is to use (approximately) half of the qubits of the mid-cycle code $C$ as ancillas to measure the contracting stabilisers in a given contraction round, leaving the other half behind as data qubits in the end-cycle code. However, some codes -- for example those with favourable single-shot error decoding properties -- use a locally-redundant set of parity checks all of which must be measured to make the syndrome measurement robust.
%For example, for the $Z$-face checks on qubits on the edges of the 3D toric code, there is a meta-check or parity check redundancy for every cube: the face checks around each cube multiply to $I$.
This means that there are significantly more stabilisers than there are qubits in the mid-cycle code. If one uses the usual approach of having half of the qubits as ancillas, it implies that one cannot measure half of the stabilisers in one contraction round, necessitating a $J>2$ morphing circuit as we will explore in~\cref{subsec:3D_TC}.

%In \cref{fig:overview}(a), each of the mid- and end-cycle codes are represented abstractly as vertices in the space of stabiliser codes, connected by the contraction and measurement circuits $M_{j}\circ F_{j}$. While it is possible to draw an analogous diagram to \cref{fig:overview}(a) for arbitrary dynamical protocols, this requires some additional formalism that we will not use elsewhere in the manuscript; \mhs{either:} we therefore leave this as an exercise for the reader \mhs{or:} see Appendix X for more details.

\section{Constructing Morphing Circuits}\label{sec:constructing}

%In~\cref{sec:definition} we articulated the morphing design principle and explored various properties of morphing circuits. 
In this section we present a few methods that can be used in practice to obtain morphing circuits using the morphing design principle that was articulated in \cref{subsec:morphing_design,subsec:morphing_circuits}. We begin in~\cref{subsec:disjoint_circuits} with a simple general result that constructively establishes the existence of ``disjoint'' morphing circuits for arbitrary quantum LDPC code families where the supports of the different Pauli elements in a given set $S_j$ are disjoint. However, disjoint morphing circuits are often impractical because they typically involve many contraction rounds, so in~\cref{subsec:from_contraction_trees} we outline a practical approach to designing morphing circuits by checking whether a given set of contraction trees are ``simultaneously executable''. We apply this practical technique in~\cref{subsec:Abelian_2BGA} to Abelian 2BGA codes and find that morphing circuits can give rise to codes and circuits with improved $[\![n,k,d]\!]$ parameters and lower qubit connectivity requirements than bare-ancilla circuits would allow. 
%This represents the first use of the morphing design principle to our knowledge that doesn't just match but systematically improves the code parameters compared to the corresponding bare-ancilla normal circuits (thus going beyond our previous work \cite{ST:morphing}). 
Finally, in~\cref{subsec:boundaries} we provide a technique how to add boundaries to morphing circuits of topological codes such as surface and colour codes, given a morphing circuit for a boundaryless, infinite code.

\subsection{Existence of Disjoint Morphing Circuits}\label{subsec:disjoint_circuits}
It is not hard to prove the following:
\begin{proposition}\label{prop:existence}
    For any quantum stabiliser code $C$ that is a member of a LDPC code family, there exists a corresponding morphing circuit (called a ``disjoint'' morphing circuit) such that:
    \begin{enumerate}[label=(\alph*)]
        \item $C$ is the mid-cycle code of the morphing circuit,
        \item there are a constant (i.e.~not increasing with the size of the code in the family) number of contraction rounds $J$,
        \item each contracting circuit $F_{j}$, $j=1, \ldots, J$ has constant depth, and
        \item the distance ${d}_{j}^{\mathrm{end}}$ of each end-cycle code ${C}_{j}^{\mathrm{end}}$ scales as $d_{j}^{\mathrm{end}}=\Omega(d)$, where $d$ is the distance of the mid-cycle code.
    \end{enumerate}
\end{proposition}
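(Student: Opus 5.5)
We build the disjoint morphing circuit by colouring the interaction graph of the stabiliser generators. Let $S$ be the stabiliser generators of $C$. Since the family is LDPC, each generator has weight at most $w$ and each qubit lies in at most $q$ generators, with $w,q$ independent of $n$. Form the conflict graph $G$ on $S$: two generators are adjacent if their supports overlap. The maximum degree of $G$ is at most $w(q-1)$, so a greedy colouring uses at most $J\le w(q-1)+1=O(1)$ colours. We take the colour classes as the contracting subsets $S_1,\dots,S_J$. This gives (a) and (b). Within a class the supports are pairwise disjoint, so the stabilisers commute trivially. They are also independent: each is non-identity and they act on disjoint qubits, so no nontrivial product of them is the identity.

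For (c), we construct $F_j$ as a tensor product over $s\in S_j$ of local circuits $F_s$ acting only on $\mathrm{supp}(s)$. First apply single-qubit Cliffords that rotate $s$ to $Z^{\otimes |s|}$; in the CSS case no rotation is needed for $Z$-checks, and Hadamard-conjugated CNOTs handle $X$-checks. Then contract onto a root qubit with a balanced binary tree of CNOTs. This has depth $O(\log w)=O(1)$, and under conjugation it maps $Z^{\otimes|s|}$ to a single-qubit $Z$ on the root. Because the supports within $S_j$ are disjoint, the $F_s$ act on disjoint qubits and run in parallel, so $\mathrm{depth}(F_j)\le \lceil\log_2 w\rceil+2$.

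For (d), we invoke the bound quoted in \cref{subsec:morphing_circuits}, namely $d_j^{\mathrm{end}}\ge d/2^{\mathrm{depth}(F_j)}$ from Ref.~\cite{ST:morphing}. With constant depth this gives $d_j^{\mathrm{end}}=\Omega(d)$. A self-contained argument is also easy here, and it is the step I expect to need the most care. Take a nontrivial logical operator $L'$ of the end-cycle code $C_j^{\mathrm{end}}$, supported on unmeasured qubits. Conjugate it back through $F_j^{\dagger}$. Each disjoint block $F_s$ spreads support only within $\mathrm{supp}(s)$, so $|F_j^\dagger L' F_j|\le w|L'|$.

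We must then check that the conjugated operator is a nontrivial logical operator of $C$. The end-cycle code is $C$ conjugated by $F_j$ with the contracted single-qubit checks projected out, and $L'$ commutes with those checks because it is supported off the measured qubits. Hence $F_j^\dagger L'F_j$ commutes with all of $S$. It is not a stabiliser of $C$: otherwise $L'$ would lie in the end-cycle stabiliser group times measured-qubit Paulis, and restricting to unmeasured qubits would make it trivial. This uses $k^{\mathrm{end}}_j=k$, which was already noted. So $w|L'|\ge d$, giving $d_j^{\mathrm{end}}\ge d/w=\Omega(d)$.

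The remaining bookkeeping is to make this correspondence between end-cycle logicals and mid-cycle logicals precise.
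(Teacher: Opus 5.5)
Your proposal is correct and takes essentially the same route as the paper. You colour the overlap graph of the generators with at most $\Delta+1=O(1)$ colours to get pairwise-disjoint contracting subsets, and contract each class in parallel with single-qubit rotations followed by a CNOT fan-out tree of depth $\lceil\log_2 w\rceil$. You then obtain (d) from the constant-depth bound $d_j^{\mathrm{end}}\geq d/2^{\mathrm{depth}(F_j)}$ of Ref.~\cite{ST:morphing}. Your supplementary self-contained argument for (d) is also valid and gives the slightly sharper $d_j^{\mathrm{end}}\geq d/w$ from the block structure of $F_j$. It does not actually need $k^{\mathrm{end}}_j=k$: if $F_j^{\dagger}L'F_j\in\langle S\rangle$, then $L'$ lies directly in the restricted end-cycle stabiliser group.
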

\begin{proof}[Proof Sketch]
    To construct a disjoint morphing circuit, we first split the stabiliser generators into subsets $S_{j}\subseteq S$ with the property that any two stabilisers $s,s'\in S_{j}$ do not overlap on any qubits. Then, the contraction circuits $F_{j}$ can be trivially constructed using any minimal-depth contraction tree. In Appendix~\ref{sec:proofs} we prove that this construction satisfies the requirements of the proposition.
\end{proof}

\begin{figure}
    \centering
    \includegraphics[width=\linewidth]{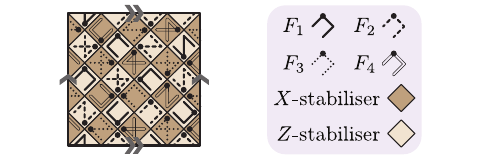}
    \caption{A disjoint morphing circuit for the distance-4 unrotated toric code using $J=4$ contraction rounds. Note that the contraction tree used for each stabiliser can be chosen arbitrarily so long as it represents a CNOT ``fan-out'' circuit that contracts the stabiliser in $\log_{2}(4)=2$ layers of CNOT gates.}
    \label{fig:toric_disjoint_circuit}
\end{figure}

The disjoint construction is typically suboptimal in $J$: for the surface and toric codes it results in $J=4$ as shown in~\cref{fig:toric_disjoint_circuit}, while for the colour code and for the weight-6 bivariate bicycle codes studied in Refs.~\cite{Bravyi24,ST:morphing} it results in $J=6$ due to the 3-colourability of the $X$- and $Z$-stabilisers. It is nevertheless known that constant-depth $J=2$ morphing circuits exist for all of these code families. It is therefore unknown under what conditions a morphing circuit with only two contraction rounds and constant-depth contraction circuits exists for a code family with a growing number of qubits and stabilisers: we leave this as an open problem.

\subsection{Constructing Purely Contracting Morphing Circuits Using Contraction Tree Diagrams}\label{subsec:from_contraction_trees}

In order to get fewer contraction rounds than the disjoint construction in~\cref{subsec:disjoint_circuits}, we can do the following. Construct a contraction tree diagram (as defined in~\cref{subsec:purely_contracting}) by first fixing a choice of contraction tree $T_{s}$ for each stabiliser $s$. Then, assign stabilisers to contraction rounds based on which stabilisers and contraction trees are ``simultaneously contractible''. Using the definition of an $s$-contraction circuit as given in \cref{subsec:purely_contracting} and sketched in \cref{fig:contraction_tree}, we say that two stabilisers $s_{1}$ and $s_{2}$ are simultaneously contractible if their $s_1$- and $s_2$-contraction circuits $F_{s_{1}}$ and $F_{s_{2}}$ are
\begin{enumerate}
    \item Parallelisable: two CNOT gates in $F_{s_{1}}$ and $F_{s_{2}}$ in the same gate layer $\ell$ only overlap if they are identical; and
    \item Proper: the support of the contracting stabiliser $s_{1}$ is not affected in any layer $\ell$ by the contracting circuit $F_{s_{2}}$; i.e.
    \begin{equation}
        \big(F_{s_{1}}^{\rightarrow\ell}\ast F_{s_{2}}^{\rightarrow\ell}\big) s_{1}\big(F_{s_{1}}^{\rightarrow\ell}\ast F_{s_{2}}^{\rightarrow\ell}\big)^{\dag}=F_{s_{1}}^{\rightarrow\ell}s_{1}F_{s_{1}}^{\rightarrow\ell\,\dag}
    \end{equation}
    for all $\ell$; where by $F_{s}^{\rightarrow\ell}$ we mean only the first $\ell$ layers of CNOT gates in the $s$-contraction circuit $F_{s}$, and by $\ast$ we mean to implement the two circuits simultaneously. We also require properness in the other direction, i.e.~that $s_{2}$ is not affected by the contracting circuit $F_{s_{1}}$.
\end{enumerate}
If two stabilisers $s_{1}$ and $s_{2}$ are simultaneously contractible, then we may place them in the same contracting subset $S_{j}$. Once one has found $J$ sets of simultaneously contractible stabilisers $S_{j}$, we have constructed a full contraction tree diagram which fully specifies the morphing circuit.

This is the design philosophy that we have used to construct morphing circuits for Abelian 2BGA codes in Ref.~\cite{ST:morphing}. In the following subsection, we use this design technique as the basis for automating the search for contraction trees for Abelian 2BGA codes.

\subsection{Case Study: Numerical Search for Morphing Abelian 2BGA Codes}\label{subsec:Abelian_2BGA}

\begin{figure*}
    \centering
    \begin{tikzpicture}[node distance=2cm,font = ]

    \node (in1) [io] {Input: $n,w$};
    \node (codesearch) [process, right of=in1, xshift = 2.5 cm] {Search Abelian 2BGA codes $C$ with $|G|=n/2$ and $|A|+|B|=w$};
    \node (codeparams1) [process, right of=codesearch, xshift = 2.5 cm] {Calculate $[\![n,k,d]\!]$};
    \node (bareancillalist) [process, right of=codeparams1, xshift = 2.5 cm] {Store $[\![n,k,d]\!]$ in the ``degree-$w$ bare-ancilla'' parameters list};
    \node (minimal1) [process, below of=bareancillalist] {Find and return minimal codes};
    \node (out1) [io, below of=minimal1] {Output: minimal degree-$w$ Abelian 2BGA codes};
    \node (stop1) [startstop, below of=out1, yshift = -0.5 cm, xshift = -1.75cm] {If run as a bare-ancilla circuit with $C$ as the end-cycle code:
    \begin{itemize}[parsep = -4pt]
        \item uses $2n$ physical qubits
        \item connectivity degree $w$
        \item $d_{\mathrm{circ}}\leq d$
    \end{itemize}};
    \node (morphingsearch) [process, below of=codeparams1] {Search two-round purely contracting homomorphism-based morphing circuits};
    \node (codeparams2) [process, left of=morphingsearch, xshift = -2.5 cm] {Calculate end-cycle code parameters $[\![n^{\mathrm{end}},k,d^{\mathrm{end}}]\!]$ and connectivity};
    \node (morphinglist) [process, left of=codeparams2, xshift = -2.5 cm] {If connectivity degree is $w-1$, store $[\![n^{\mathrm{end}},k,\min(d,d^{\mathrm{end}})]\!]$ in the ``degree-$(w-1)$ morphing'' parameters list};
    \node (minimal2) [process, below of=morphinglist] {Find and return minimal codes};
    \node (out2) [io, right of=minimal2, xshift = 2.5 cm] {Output: minimal degree-$(w-1)$ morphing end-cycle codes};
    \node (stop2) [startstop, below of=out2, yshift = -0.5 cm] {If run as a morphing circuit with $C$ as the mid-cycle code:
    \begin{itemize}[parsep = -4pt]
        \item uses $2n^{\mathrm{end}}$ physical qubits
        \item connectivity degree $w-1$
        \item $d_{\mathrm{circ}}\leq \min(d,d^{\mathrm{end}})$
    \end{itemize}};

    \draw [arrow] (in1) -- (codesearch);
    \draw [arrow] (codesearch) -- (codeparams1);
    \draw [arrow] (codeparams1) -- (bareancillalist);
    \draw [arrow] (bareancillalist) -- (minimal1);
    \draw [arrow] (minimal1) -- (out1);
    \draw [arrow] (out1) -- (stop1);
    \draw [arrow] (codeparams1) -- (morphingsearch);
    \draw [arrow] (morphingsearch) -- (codeparams2);
    \draw [arrow] (codeparams2) -- (morphinglist);
    \draw [arrow] (morphinglist) -- (minimal2);
    \draw [arrow] (minimal2) -- (out2);
    \draw [arrow] (out2) -- (stop2);
    \end{tikzpicture}
    \caption{Flowchart summarising our search over Abelian 2BGA codes and their two-round homomorphism-based morphing circuits, see definitions in~Appendix~\ref{app:Abelian_2BGA_details}.}
    \label{fig:Abelian_2BGA_search_flowchart}
\end{figure*}
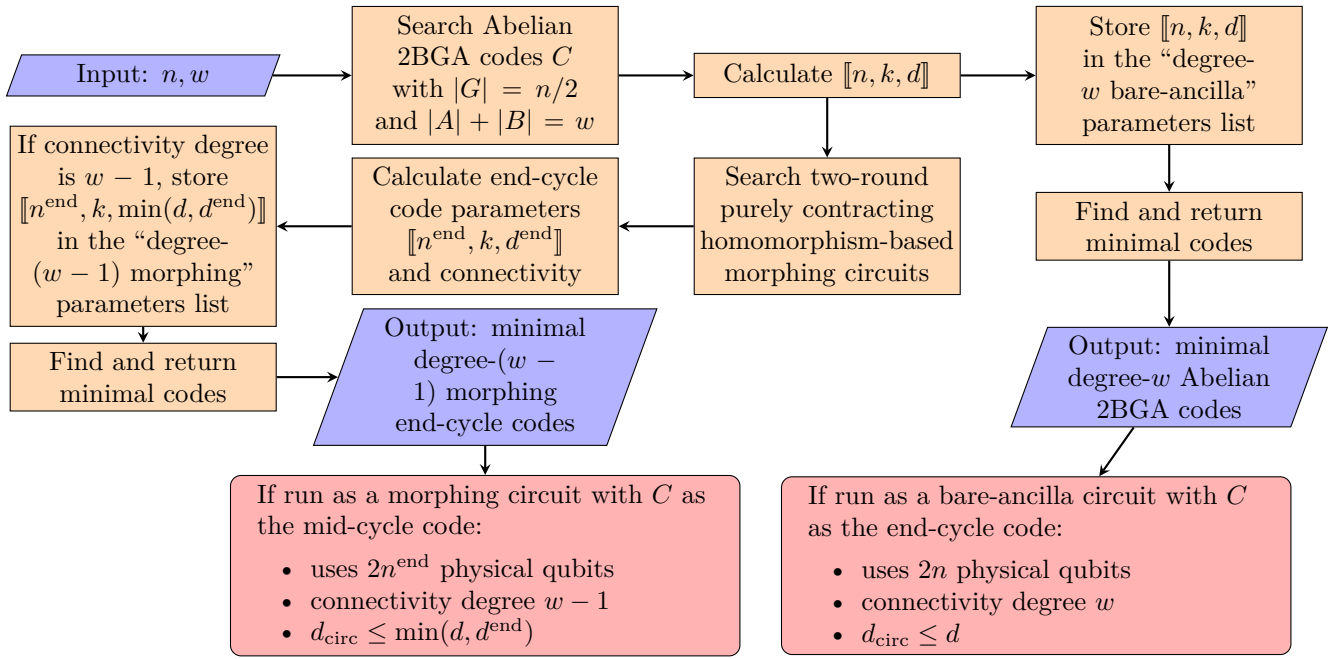

\begin{figure}
    \centering
    \includegraphics[width=\linewidth]{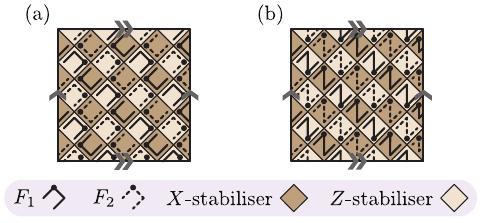}
    \caption{Contraction tree diagrams for two-round purely contracting homomorphism-based morphing circuits for the unrotated toric code. All other two-round purely contracting homomorphism-based morphing circuits for the distance-4 unrotated toric code have the same end-cycle code as either the circuit in (a) or the circuit in (b). (a) One example of a morphing circuit whose end-cycle code is the distance-4 \textit{rotated} toric code with code parameters $[\![16,2,4]\!]$; this corresponds to a toric version of the hex-grid surface code~\cite{McEwen23}. (b) One example of a morphing circuit whose end-cycle code is the \textit{unrotated} toric code on a $4\times 2$ lattice, which therefore has suboptimal code parameters $[\![16,2,2]\!]$.}
    \label{fig:toric_search}
\end{figure}

We perform a study of morphing circuits for \eczoo[Abelian Two-Block Group Algebra (2BGA) codes]{2bga}~\cite{Wang2023Abelian}, which are a broad class of CSS codes that includes toric codes, colour codes on a torus, the bivariate bicycle (BB) codes from Ref.~\cite{Bravyi24}. One feature of Abelian 2BGA codes that is important for us is that qubits and stabilisers are labelled by elements in an Abelian group, see mathematical details in~Appendix~\ref{app:Abelian_2BGA_details}. Our overall method is summarised in~\cref{fig:Abelian_2BGA_search_flowchart} and we explain the key steps here.

First, we set up a numerical search over mid-cycle codes $C$ that are Abelian 2BGA codes with parameters $[\![n,k,d]\!]$ that have stabiliser generators of weight $w=5$ or 6. Because of the large number of equivalences between Abelian 2BGA codes when presented in their usual way in terms of an Abelian group~\cite{wang2022distance,Lin23}, we instead use a presentation of Abelian 2BGA codes in terms of lattices $\Lambda\subseteq \mathbb{Z}^{w-2}$, see~Appendix~\ref{app:lattice}, that more naturally takes into account the equivalences between Abelian 2BGA codes. We use this lattice presentation to perform our numerical search more efficiently, although we note that this perspective may be of independent interest in, for example, defining new families of Abelian 2BGA codes. With this simplification, we can search over the set of Abelian 2BGA codes of a given stabiliser weight and number of qubits. Note that Ref.~\cite{Lin23} performed similar searches for Abelian 2BGA codes with $w\leq 8$, although to the best of our knowledge this search did not utilise the lattice presentation.

Because the space of morphing circuits for a given mid-cycle code is extremely large, we then only search over a subset of morphing circuits that we call \textit{homomorphism-based} circuits, which are defined using the group structure of the code. In particular, we conduct a brute-force search over two-round purely contracting homomorphism-based morphing circuits for a given Abelian 2BGA mid-cycle code $C$, see details in~Appendix~\ref{app:Abelian_2BGA_details}\footnote{Note that many Abelian 2BGA codes do not have \textit{any} two-round homomorphism-based purely contracting morphing circuits.}. The resulting morphing circuit always gives rise to two end-cycle codes $C_{1}^{\mathrm{end}}$ and $C_{2}^{\mathrm{end}}$ that are identical to each other up to a permutation of qubits, so we simply write $C^{\mathrm{end}}$. Moreover, $C^{\mathrm{end}}$ is itself an Abelian 2BGA code: intuitively, this is because two-round homomorphism-based morphing circuits preserve the underlying group-theoretic structure of the Abelian 2BGA code. Note that one can often arrive at the same end-cycle code using different morphing circuits and mid-cycle codes, reducing the number of distinct end-cycle codes that we find. As an illustrative example, our search with the distance-4 unrotated toric code as a mid-cycle code (which is a weight-4 Abelian 2BGA code) yields many morphing circuits but only two distinct end-cycle codes, as shown in~\cref{fig:toric_search}.
We then calculate the end-cycle code parameters $[\![n^{\mathrm{end}},k,{d}^{\mathrm{end}}]\!]$ obtained for each morphing circuit in our search to create a list of end-cycle codes of morphing circuits, see~\cref{fig:d_search_results,fig:k_search_results} in~Appendix~\ref{subsec:numerical_search_appendix} for the full set of results.

%that they have a large number of code symmetries: they are invariant under a large set of translations of qubits, as well as under a $ZX$-duality that swaps $X$- and $Z$-stabilisers.

%Since the set of two-round purely contracting morphing circuits of a given Abelian 2BGA code (or, equivalently, two-round contraction tree diagrams) is intractably large, we therefore limit ourselves to a particular subset of morphing circuits that we call \textit{homomorphism-based} morphing circuits. Not only does this reduce the search space of morphing circuits, but it is also much simpler to find simultaneously contractible subsets of stabilisers, see \cref{sec:Abelian_2BGA_details} for more details. 

The aim of our numerical search is to compare the code parameters and connectivity requirements of morphing circuits and bare-ancilla circuits for Abelian 2BGA codes with the same stabiliser weight $w$. One has to be careful, however, to set up a fair comparison here, because given an Abelian 2BGA code $C$, a morphing circuit will use half the number of physical qubits and will typically have a lower distance than a bare-ancilla circuit for the same code. We therefore compare the \textit{end-cycle} code parameters corresponding to each of the circuits: for a bare-ancilla circuit this is simply the original Abelian 2BGA code $C$, while for a morphing circuit the end-cycle code $C^{\mathrm{end}}$ is determined as described above.

In this work we searched over mid-cycle Abelian 2BGA codes with stabiliser weight 5 and $n\leq 128$, and with stabiliser weight 6 and $n\leq 64$; there is no inherent reason why we stopped at these values of $n$ and in principle one could search over larger values of $n$ as well. This gives rise to four categories of circuits depending on the stabiliser weight $w$ and on the type of circuit (morphing or bare-ancilla). Within each category, we focused only on the ``minimal'' code parameters $[\![n,k,d]\!]$ that we found, i.e.~the parameters such that no other code has parameters $[\![n',k',d']\!]$ with $n'<n$, $k'>k$, and $d'>d$.

\begin{figure}
    \includegraphics[]{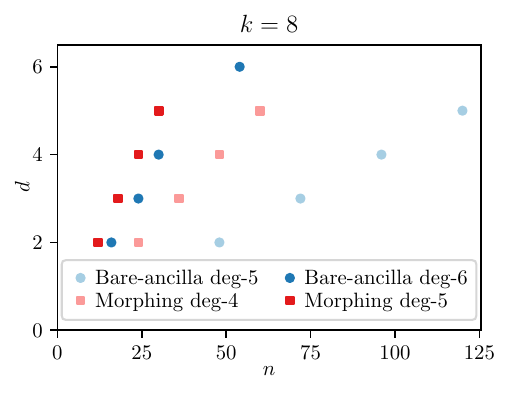}
    \caption{Minimal $k=8$ morphing and bare-ancilla code parameters for weight-5 and 6 Abelian 2BGA codes. Each morphing circuit here has a connectivity degree one lower than the weight of the corresponding Abelian 2BGA code. $n$ here represents the number of qubits in the \textit{end-cycle} code, so for both the morphing and the bare-ancilla circuits the total number of physical qubits required is $2n$. Note also that $d$ here is simply the end-cycle code distance, and therefore the circuit-level distance may be lower than $d$ (for both the morphing and bare-ancilla circuits). The rest of the codes that we found as shown in Appendix~\ref{subsec:numerical_search_appendix}.}\label{fig:k_8_search_results}
\end{figure}

Our search resulted in a large number of novel end-cycle codes and morphing circuits, so here we focus only on the codes we found with $k=8$ as an example in~\cref{fig:k_8_search_results}, see~Appendix~\ref{subsec:numerical_search_appendix} for a full set of results. Excitingly, the minimal morphing code parameters all have strictly better parameters than the corresponding bare-ancilla code parameters of the same weight, demonstrating that searching over morphing circuits can yield improved rate and distance performance compared to codes executed with bare-ancilla circuits. Moreover, the minimal end-cycle codes do not necessarily correspond to minimal mid-cycle codes, suggesting that optimising over end-cycle code parameters may yield different results than optimising over mid-cycle code parameters -- an idea we will return to in~\cref{sec:optimising_end_cycle_distance}. All of the minimal morphing end-cycle code parameters that we found could be implemented with a morphing circuit that has connectivity $w-1$, while all the bare-ancilla circuits require a connectivity of $w$.

An important caveat is that the circuit-level distance for both the bare-ancilla and morphing circuits \textit{may} be lower than the distances shown in~\cref{fig:k_8_search_results}: we leave it to future work to carry out more detailed numerical studies to identify specific codes of interest. To get a tighter bound on the circuit-level distance of the morphing circuits, we additionally check that the mid-cycle code distance is greater than or equal to the end-cycle code distance (as we expect).

Finally, let us consider the depth and CNOT count of the morphing and bare-ancilla circuits. The depth of the end-cycle-to-end-cycle syndrome extraction circuits obtained from the degree-$(w-1)$ morphing circuits may be larger than the corresponding degree-$w$ bare-ancilla circuits. However, this typically does not result in an increased total \textit{number} of CNOT gates required in each measurement round, which has a larger effect on the performance of the circuit under the experimentally-inspired SI1000 noise model. Note however that it is \textit{not} necessarily fair to compare the degree-5 morphing circuits to the degree-5 bare-ancilla circuits, because the morphing circuits are based on weight-6 mid-cycle codes and therefore use more CNOT gates per syndrome extraction cycle than the degree-5 bare-ancilla circuits.

Some specific examples of small codes that may be of interest include a $[\![42,6,6]\!]$ end-cycle code that has a morphing circuit which only requires degree-4 connectivity, and a $[\![30,8,5]\!]$ end-cycle code that only requires degree-5 connectivity; for a full set of minimal codes found, see~Appendix~\ref{subsec:numerical_search_appendix}.

\subsection{Adding Boundaries to Topological Codes}\label{subsec:boundaries}

In this section we describe a method for constructing morphing circuits that is useful for topological codes. In particular, we assume that we already have a CNOT purely contracting morphing circuit for a mid-cycle code $C_{\mathrm{inf}}$ that does \textit{not} have boundaries; we do this by formally considering $C_{\mathrm{inf}}$ to be an infinitely large code, see~Appendix~\ref{sec:inf2BGA} for a description of infinite Abelian 2BGA codes. Then, we wish to construct a morphing circuit for a finite code $C_{\mathrm{fin}}$ that is obtained by adding boundaries to the infinite code $C_{\mathrm{inf}}$. The technique we discuss here has already been introduced in Ref.~\cite{Gidney2023}, we include it here both for completeness and because this can be used to construct morphing circuits for both the surface and colour codes. Although we only use this here for 2D topological codes, the principles could be used to put boundaries on morphing circuits for topological codes in higher dimensions.

\begin{figure*}
    \centering
    \includegraphics[width=\linewidth]{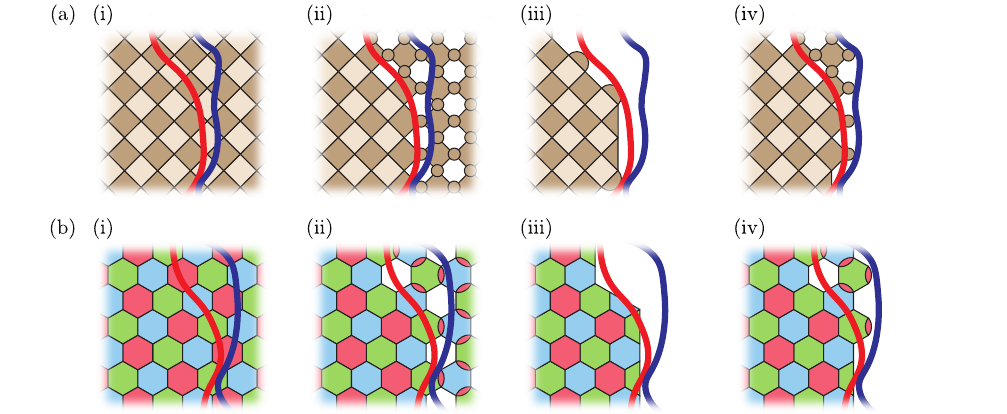}
    \caption{Creating and padding boundaries for (a) an $X$-type boundary in a surface code, and (b) a red boundary in a colour code. The location of the finite code boundary is shown in red, while the location of the padded boundary is shown in navy blue. In each subfigure, we begin in (i) with the code $C_{\mathrm{inf}}$ on an infinite plane. To obtain (ii) we measure all qubits beyond the boundary either (a) in the $X$-basis for the surface code $X$-type boundary, or (b) in Bell pairs for the colour code red boundary. To obtain (iii) we simplify the stabilisers and remove all qubits that are not entangled with the rest of the code, obtaining the usual boundaries of the surface/colour code. To construct morphing circuits for the finite code, we will sometimes need to pad the boundary from (iii). In (iv) we pad the boundary to the (arbitrarily chosen) navy blue line by reinserting some of the trivial stabilisers that are beyond the finite code boundary.}
    \label{fig:padding_boundaries}
\end{figure*}

Our first task is to explain how to create a boundary in the infinite code $C_{\mathrm{inf}}$ for the surface and colour codes as shown in~\cref{fig:padding_boundaries}(i--iii); this procedure is often called ``anyon condensation'' in the literature~\cite{kesselring2018boundaries,kesselring2024anyon}. We do this by fixing a location of the boundary, and then measuring all the qubits beyond this boundary in a basis that is determined by the ``type'' of topological boundary that we impose. These measurements introduce new stabilisers and remove some existing anticommuting stabilisers in accordance with the Gottesman-Knill theorem~\cite{Aaronson04}. Because the measurements of the qubits beyond the boundary break the entanglement between the code and the qubits, we remove them from the code, leaving behind a finite-sized code $C_{\mathrm{fin}}$.

In the surface code, there are two types of boundary that we can choose: $X$-type, in which qubits beyond the boundary are measured in the single-qubit $X$-basis and $X$-logical strings can terminate as shown in \cref{fig:padding_boundaries}(a); and $Z$-type, in which qubits are measured in the $Z$-basis and $Z$-logical strings can terminate. In the colour code, we have more choice: we can impose $X$-, $Y$- or $Z$-type boundaries similar to the surface code (that we collectively refer to as \textit{Pauli} boundaries); or, we can impose red, green, or blue boundaries (that we call \textit{colour} boundaries). These colour boundaries are obtained first by pairing all the qubits beyond the boundary that are connected by an edge that connects two faces of the same corresponding colour. Then, we perform Bell measurements on all these pairs of qubits as shown in \cref{fig:padding_boundaries}(b).

\begin{figure}
    \centering
    \includegraphics[width=\linewidth]{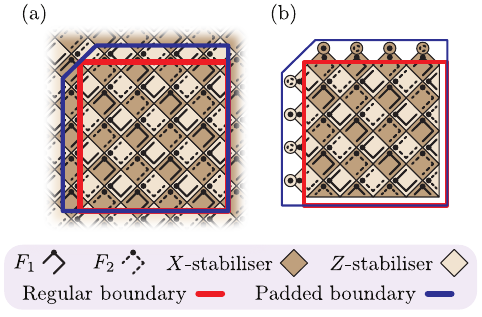}
    \caption{Constructing the hex-grid morphing circuits for the surface code~\cite{McEwen23} from the hex-grid morphing circuit for the surface code on an infinite plane. In (a), the contraction tree diagram for the infinite surface code $C_{\mathrm{inf}}$, along with the desired finite code boundary (red) and the padded boundary (blue) chosen to satisfy~Heur.~\ref{heur:boundaries}, which are used to obtain the contraction tree diagram for the finite surface code $C_{\mathrm{fin}}$ in (b). Note that in (b) some single-qubit stabilisers are measured in the second contraction round, which is indicated with a dashed circle.}
    \label{fig:hex_grid_SC}
\end{figure}

Our next task is to construct a (purely contracting) morphing circuit for the finite code $C_{\mathrm{fin}}$ as its mid-cycle code given a (purely contracting) morphing circuit for the infinite code $C_{\mathrm{inf}}$, as shown in~\cref{fig:hex_grid_SC} for the surface code. For the stabilisers of $C_{\mathrm{fin}}$ that do not overlap on the boundary, this is straight-forward: simply use the same contraction tree that was used in $C_{\mathrm{inf}}$. However, some stabilisers on the boundary of $C_{\mathrm{fin}}$ will have support on fewer qubits than in the bulk and therefore need a different contraction tree. If the qubit(s) that are removed consist of ``leaves'' of the original contraction tree, then we can simply remove these edges from the contraction tree; this is what we do for example in the bottom and right boundaries of the surface code in~\cref{fig:hex_grid_SC}. When this is not the case, then we follow the following heuristic:
\begin{heuristic}\label{heur:boundaries}
    Every contraction tree for a (non-trivial) stabiliser in $C_{\mathrm{fin}}$ should be a \emph{rooted subtree} of the contraction tree for the corresponding stabiliser in $C_{\mathrm{inf}}$.
\end{heuristic}
To satisfy~Heur.~\ref{heur:boundaries}, we can add back some additional qubits to $C_{\mathrm{fin}}$ that were removed when we constructed it from $C_{\mathrm{inf}}$. We call this procedure ``padding'' the boundaries, and is shown in general in~\cref{fig:padding_boundaries}(iv). More specifically, to pad a boundary we simply add back some of the stabilisers from~\cref{fig:padding_boundaries}(ii) until we have enough qubits to satisfy~Heur.~\ref{heur:boundaries}. These stabilisers are typically straightforward to design contraction circuits for since they have weight one or two, as can be seen in the top and left boundaries of the surface code in~\cref{fig:hex_grid_SC}.

\section{Adapting Morphing Circuits For Leakage Reduction}\label{sec:manipulating}

In this section we are interested in adaptations of the morphing circuits that help reduce the effects of leakage in a QEC circuit. In particular, we restrict ourselves here to adaptations that achieve this while preserving the end-cycle code up to qubit permutations -- essentially meaning we can use these adaptations without changing the end-cycle code distance. We consider two such adaptations here: first, the freedom to choose between which qubits are measured in a given contraction round (\cref{subsec:data_ancilla}); and second, rewriting CNOT gate circuits to instead use CXSWAP gates (\cref{subsec:CXSWAP}). Although these adaptations are already known to be generally applicable to morphing circuits~\cite{McEwen23,ST:morphing,yoshida2025low}, we include them here because their careful and general treatment for the surface code in \cref{fig:swap_sc,fig:CXSWAP_sc} (respectively) leads to more qubit-efficient leakage-resistant morphing circuits for the surface code than the circuits that we are aware of in the literature~\cite{McEwen23,eickbusch2024:dynamic}.
Moreover, due to the correspondence between morphing circuits and alternating normal circuits described in~\cref{subsec:from_normal_circuits}, both of these adaptations are also applicable to alternating normal circuits.

\subsection{Choosing between Data and Ancilla Qubits}\label{subsec:data_ancilla}

One of the key features of morphing circuits is their flexibility in being able to assign which qubits are data qubits and which are ancilla qubits in a given contraction round. We make this flexibility rigorous in this section by discussing how to adapt morphing circuits in such a way that every qubit is measured in at least one contraction round, while preserving each of the end-cycle codes $C_{j}^{\mathrm{end}}$. This is particularly desirable for leakage reduction: if every qubit is measured at some point during a morphing circuit, then this measurement may be sufficient to remove ancilla leakage during measurement/reset operations without a dedicated data qubit leakage reduction operation -- an idea which was experimentally demonstrated in the surface code in~\cite{eickbusch2024:dynamic}.

The key adaptation that we are interested in is summarised in the following proposition, which comes from Appendix G1 of Ref.~\cite{ST:morphing}.
\begin{proposition}\label{prop:data_ancilla}
    Consider a contraction circuit $F_{j}$ for a mid-cycle code $C$ compiled into $L$ layers of $\mathrm{CNOT}$ gates $F_{j,\ell}$, $\ell=1,2\ldots, L$ with an end-cycle code $C^{\rm end}_{j}$. Then swapping the direction of any of the $\mathrm{CNOT}$ gates in the final layer of the contraction circuit $F_{j,L}$ gives an end-cycle code ${C}^{\rm end}_{j}$ that is identical to $C^{\rm end}_{j}$ up to a permutation of qubits.
\end{proposition}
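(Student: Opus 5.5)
The plan is to rewrite each reversed gate as the original gate composed with a SWAP and a leftover gate, and then show that the leftover gate becomes a harmless Pauli frame update once the final measurements $M_{j}$ are made. The starting point is the identity
\begin{equation*}
    \mathrm{CNOT}_{b\to a}=\mathrm{CNOT}_{a\to b}\,\mathrm{SWAP}_{ab}\,\mathrm{CNOT}_{a\to b},
\end{equation*}
which follows from $\mathrm{SWAP}_{ab}=\mathrm{CNOT}_{a\to b}\mathrm{CNOT}_{b\to a}\mathrm{CNOT}_{a\to b}$. Let $g=\mathrm{CNOT}_{a\to b}$ be a gate in the final layer $F_{j,L}$. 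Distinct gates in one layer act on disjoint qubits, so I can pull all the extra factors to the end. The modified circuit is then
\begin{equation*}
    F_{j}'=\Big(\prod_{g}g\,\mathrm{SWAP}_{g}\Big)\circ F_{j},
\end{equation*}
where the product runs over the flipped gates. The SWAPs only relabel qubits, and that relabelling is the permutation in the statement. What remains is to show that the trailing copies of $g$, applied after the SWAPs and just before $M_{j}'$, do not change the post-measurement stabiliser group on the unmeasured qubits. They may change signs, but signs are fixed by the known measurement outcomes.

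To do this I will identify which qubit of each flipped gate is measured and in which basis, using that $F_{j}$ is a valid contraction circuit. I will assume, as in the purely contracting setting of \cref{subsec:purely_contracting}, that every gate in the final layer contracts some stabiliser $s\in S_{j}$. Immediately before layer $L$, the image of such an $s$ is supported on $\{a,b\}$ and has weight two. For a CSS code it must be $X_{a}X_{b}$, which $g$ contracts to $X_{a}$ so that $a$ is measured in the $X$ basis. Otherwise it is $Z_{a}Z_{b}$, which is contracted to $Z_{b}$ so that $b$ is measured in the $Z$ basis. In the modified circuit the SWAP moves this contracted operator onto the other qubit: $X_{b}$ in the first case and $Z_{a}$ in the second. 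I then need to check that this matches the measurement set $M_{j}'$ defined by the contraction $F_{j}'sF_{j}'^{\dag}$. This holds because $g$ fixes $X_{b}$ and $Z_{a}$.

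The core step is to show that the trailing $g$ acts trivially up to Pauli frame. In the first case the target $b$ of $g$ is measured in the $X$ basis. Conjugating by $g$ maps $X_{b}$ to itself, and any stabiliser of the remaining qubit $a$ of the form $Z_{a}\otimes\cdots$ picks up at most a factor $Z_{b}$. That factor anticommutes with the measured $X_{b}$ and is removed in the Gottesman--Knill update, while $X_{a}$ picks up $X_{b}$, which is replaced by its measured eigenvalue $\pm1$. So after measurement, $g$ amounts to a conditional $Z_{a}^{m}$ phase kickback. The second case is the $X\leftrightarrow Z$ and control/target dual argument. Hence the post-measurement stabiliser group of $F_{j}'$ equals the permuted one of $F_{j}$ up to signs, which shows ${C}^{\rm end}_{j}$ is unchanged up to permutation.

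I expect the main obstacle to be final-layer gates that do not contract any stabiliser, or whose two qubits are both measured or both unmeasured. If neither qubit is measured, the argument fails outright, since the leftover $g$ would be a genuine entangling gate on the end-cycle code. I would handle this first by restricting the statement, explicitly or implicitly, to gates that contract some $s\in S_{j}$, as in every purely contracting circuit. If both qubits are measured, I would check directly that the leftover $g$ only relabels measurement outcomes.
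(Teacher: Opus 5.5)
Your proposal is correct and is essentially the paper's argument. The paper invokes the circuit identity that a reversed final-layer CNOT followed by the measurement equals the original CNOT, a SWAP, the measurement and a classically conditioned $Z$. Your decomposition $\mathrm{CNOT}_{b\to a}=\mathrm{CNOT}_{a\to b}\,\mathrm{SWAP}_{ab}\,\mathrm{CNOT}_{a\to b}$, combined with the fact that a CNOT whose target is immediately measured in $X$ reduces to $Z_a^{m}$, is a derivation of that identity; like you, the paper implicitly assumes each flipped gate contracts a stabiliser onto a measured qubit.

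One small slip to fix: $a$ is the control of $g=\mathrm{CNOT}_{a\to b}$, so $Z_a$ is left invariant and does not pick up $Z_b$. Nothing needs removing by the Gottesman--Knill update, since $\pm X_b$ is already in the stabiliser group. The $Z$-type stabilisers are simply untouched, and your conclusion stands.
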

\begin{proof}
    Suppose for simplicity that we are contracting an $X$-stabiliser. The statement relies on the following circuit identities (which are taken from our previous published work in Eq.~(G3) of Ref.~\cite{ST:morphing}):
    \begin{equation}\label{eq:final_layer_reverse_circuit_identities}
        \raisebox{-0.69 cm}{\includegraphics{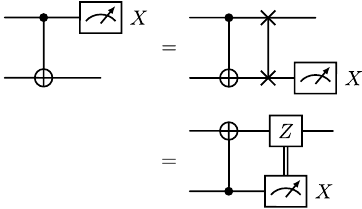}}
    \end{equation}
    Clearly, the $X$-operator is contracted as desired. We therefore only need to argue why the end-cycle code is the same up to qubit permutations. In particular, the conditional $Z$-gate does not change the stabilisers of the end-cycle code, except possibly by applying a $-1$ phase to the stabiliser (which is known from the measurement outcome). This does not change the end-cycle code, but only changes the stabiliser eigenspace (or Pauli frame) that the codespace is in.

    A similar result also trivially holds for $Z$-stabilisers, proving the result.
\end{proof}
It is worth noting that the conditional $Z$ gate in~\cref{eq:final_layer_reverse_circuit_identities} means that the structure of the \textit{detectors} of the morphing circuit is \textit{not} preserved by the transformation in~Prop.~\ref{prop:data_ancilla}.
%, which is something that we will explore further in~\cref{sec:optimising_time_like}.

Prop.~\ref{prop:data_ancilla} provides us with remarkable flexibility to choose which qubits are measured in each contraction round. However, sometimes it is not sufficient to measure \textit{every} qubit in at least one contraction round. For the remaining qubits that are not measured in any of the contraction rounds, we can move it to an extra ancilla qubit in one of the contraction rounds by using one of the two \textit{single-bit teleportation} gadgets~\cite{Zhou_2000}
\begin{subequations}\label{eq:SWAP_gadgets}
\begin{align}
    \begin{quantikz}[align equals at = 1.5]
        &\ctrl{1}&\meter{X}\wire[d][1]{c}&\wireoverride{n}\\
        \lstick{$\ket{0}$}&\targ{}&\gate{Z}&
    \end{quantikz}&,\quad\label{eq:SWAP_gadget_1}\\
    \begin{quantikz}[align equals at = 1.5]
        &\targ{}&\meter{Z}\wire[d][1]{c}&\wireoverride{n}\\
        \lstick{$\ket{+}$}&\ctrl{-1}&\gate{X}&
    \end{quantikz}&.\label{eq:SWAP_gadget_2}
\end{align}
\end{subequations}
More specifically, we add the CNOT gate of such gadget into the final layer of one of the contraction circuits $F_{j}$, and add the reset and measurement to the respective reset and measurement rounds $R_{j-1}$ and $M_{j}$. Again, the conditional Pauli gates can change the detecting regions and the Pauli frame of the state during the computation, but not the end-cycle code itself. The qubit is thus measured out in $M_{j}$, with the end-cycle code remaining the same except that it now lives on the newly introduced ancilla qubit instead of the original one.

\begin{figure}
    \centering
    \includegraphics[width=\linewidth]{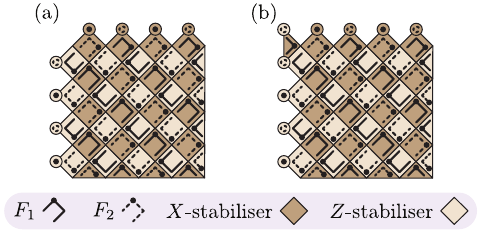}
    \caption{Construction of a hex-grid morphing surface code circuit in which every qubit is measured at least once in the morphing circuit; note that the single-qubit stabilisers that are measured in the second contraction round are indicated with a dashed circle. (a) We first apply~Prop.~\ref{prop:data_ancilla} to the second contraction round of the hex-grid surface code circuit in~\cref{fig:hex_grid_SC}(b). However, because there is one more qubit than there are stabilisers in the surface code, this still leaves one qubit in the top-left corner that is not measured in either measurement round. In (b) we use the one-bit teleportation gadget in~\cref{eq:SWAP_gadget_1} to measure this top-left qubit in the first contraction round. Interestingly, as foreshadowed in~\cref{subsec:general_morphing}, the mid-cycle code is now a subsystem code, however we are still guaranteed that the end-cycle codes are both still rotated surface (so stabiliser) codes up to a permutation of qubits.}
    \label{fig:swap_sc}
\end{figure}

With this, we can always construct a morphing circuit in which each qubit is measured in at least one contraction round: first we attempt to apply~Prop.~\ref{prop:data_ancilla} to measure the qubit, and if this does not work, we include an additional qubit and use a single-bit teleportation gadget from~\cref{eq:SWAP_gadgets}. In many cases, not many additional qubits are needed: for the surface code only one additional qubit is needed as shown in~\cref{fig:swap_sc}. This stands in contrast to the ``walking'' surface code circuit~\cite{McEwen23}, which uses the same modifications described here while using $O(d)$ additional qubits\footnote{Note, however, that the walking circuit~\cite{McEwen23} guarantees that the modified end-cycle code is spatially shifted compared to the standard rotated surface code, while the end-cycle code in our circuit in~\cref{fig:swap_sc} is related to the rotated surface code by a more general qubit permutation. Therefore the walking circuit is necessary if one wishes to string together multiple ``steps'' in the circuit to move the surface code patch a macroscopic distance across a physical device.}.

\subsection{CXSWAP Gates}\label{subsec:CXSWAP}
The second adaptation is to rewrite an arbitrary morphing circuit that uses CNOT gates into a morphing circuit that uses CXSWAP gates, again while preserving the end-cycle code up to qubit permutations. The CXSWAP gate is defined as
\begin{equation}
    \raisebox{-0.5 cm}{\includegraphics{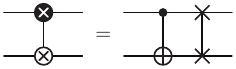}}\;,
\end{equation}
and is an attractive alternative entangling gate as compared to a CZ gate. The CXSWAP gate is equivalent to an ISWAP gate up to single-qubit Clifford gates), which is a native gate in superconducting platforms that may be less prone to leakage than a CZ gate~\cite{McEwen23,eickbusch2024:dynamic}. For this reason, syndrome extraction circuits that exclusively use CXSWAP gates have been designed for morphing surface and colour code circuits~\cite{McEwen23,yoshida2025low} as well as for novel ``directional'' codes~\cite{Geh_r_2024}.

Here we show that any CNOT morphing circuit for any number of rounds $J$ can be written in terms of CXSWAP gates, as shown for example in \cref{fig:CXSWAP_sc} for the surface code. Consider the contraction circuit $F_{j}=F_{j,L}\circ \dots \circ F_{j,1}$ for any $j=1, \ldots, J$, compiled into $L$ layers $F_{j,\ell}$ of simultaneously-executable CNOT gates. Let
\begin{equation}\label{eq:CXSWAP_initial_rewrite}
    F_{j,1}'=V_{1}^{\mathrm{SWAP}} F_{j,1},
\end{equation}
where $V_{1}^{\mathrm{SWAP}}$ is a set of SWAP gates on the pairs of qubits in the CNOTs in $F_{j,1}$, so that $F_{j,1}'$ consists of a single layer of CXSWAP gates. Then, for $\ell\geq2$ we recursively define
\begin{multline}\label{eq:CXSWAP_recursive_rewrite}
    F_{j,\ell}'=V_{\ell}^{\mathrm{SWAP}}(V_{1}^{\mathrm{SWAP}}\dots V_{\ell-1}^{\mathrm{SWAP}})\\
    \times F_{j,\ell}^{\vphantom{S}}(V_{1}^{\mathrm{SWAP}}\dots V_{\ell-1}^{\mathrm{SWAP}})^{\dag},
\end{multline}
where $V_{\ell}^{\mathrm{SWAP}}$ is defined with SWAPs on pairs of qubits such that $F_{j,\ell}'$ consists of a single layer of CXSWAP gates. The definitions in~\cref{eq:CXSWAP_initial_rewrite,eq:CXSWAP_recursive_rewrite} guarantee that $F'$ is equal to $F$ up to a qubit permutation at the end of the circuit, so the end-cycle code is preserved up to qubit permutations (although note that potentially different qubits may be measured).

\begin{figure*}
    \centering
    \includegraphics[width=\linewidth]{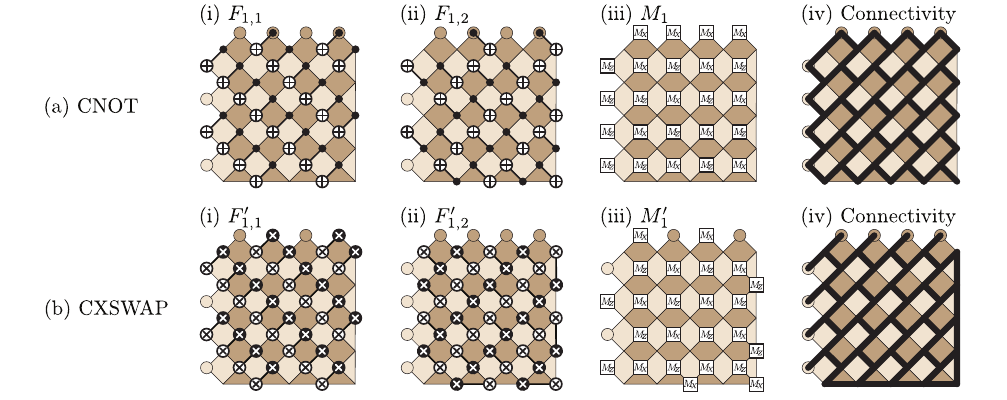}
    \caption{The hex-grid surface code morphing circuits compiled with (a) CNOT gates and (b) CXSWAP gates. The contraction circuit in (a) corresponds exactly to the contraction tree diagram in~\cref{fig:hex_grid_SC}(b), while the circuit in (b) is obtained from (a) by applying the procedure in~\cref{eq:CXSWAP_initial_rewrite,eq:CXSWAP_recursive_rewrite}. We show the mid-cycle stabilisers (instead of the stabilisers at each time in the circuit) throughout for easier comparison between qubit locations. In (i)--(iii) we show the layers of the first contraction circuit $M_{1}\circ F_{1}$ as CNOT and CXSWAP circuits, while in (iv) we show the required connectivity for each compilation taking into account both contraction circuits $F_{1}$ and $F_{2}$.}
    \label{fig:CXSWAP_sc}
\end{figure*}

In~\cref{fig:CXSWAP_sc} we apply this procedure to the hex-grid morphing surface code circuit, giving a CXSWAP circuit that uses the same number of qubits as the CNOT circuit. The CXSWAP circuit that we obtain in~\cref{fig:CXSWAP_sc}(b) uses the same number of qubits as the CNOT circuit in~\cref{fig:hex_grid_SC}, unlike the CXSWAP surface code circuit in Ref.~\cite{McEwen23} which uses $O(d)$ additional qubits. However, this comes at the cost of altered connections required at the boundaries of the surface code, although the overall maximum degree of each qubit is still three. A possible experimental disadvantage of this altered connectivity at the boundary is that it is not consistent with it being a chunk of a larger surface code lattice because it requires different couplers.

This procedure of converting from CNOTs to CXSWAPs was also applied to colour code morphing circuits in Ref.~\cite{yoshida2025low}\footnote{For completeness, we note that the authors of Ref.~\cite{yoshida2025low} also utilised an additional simplification that works for the colour code but not the surface code; namely, before applying the procedure in~\cref{eq:CXSWAP_initial_rewrite,eq:CXSWAP_recursive_rewrite}, each CNOT gate in $F_{1,1}$ can already be combined with a CNOT that occurs immediately before it, in $F_{2,1}^{\dag}$, to give a CXSWAP gate (since a SWAP equals three alternating CNOTs). The procedure in~\cref{eq:CXSWAP_initial_rewrite,eq:CXSWAP_recursive_rewrite} then only needs to be applied to the second layer onwards of each of the contraction circuits.}.

\section{Improving the End-Cycle Code Distance of Morphing Circuits}\label{sec:optimising_end_cycle_distance}

 One of the challenges when designing morphing circuits is to minimise the (typical) reduction in distance when going from mid-cycle code $C$ to end-cycle code $C^{\rm end}_j$, as was seen for the colour and BB codes in Refs.~\cite{Gidney2023,ST:morphing}. In this section we present two concrete strategies that one can use to improve the circuit-level distance for some morphing circuits by improving the distance of the end-cycle code.

In~\cref{subsec:optimising_boundaries} we discuss how to systematically optimise the geometry of the boundaries of topological codes in order to decrease the number of physical qubits required to achieve a given end-cycle distance. We then use this strategy in~\cref{subsec:colour_code} to find improved morphing circuits for the colour code. Our second strategy is to increase the number of contraction rounds $J$ in the morphing circuit: we use the concrete example of the Bacon-Shor code to show how this can also be used to improve the distance of the end-cycle code. This latter strategy comes at the cost of measuring the stabilisers less frequently -- the consequences of which we will investigate in more detail in~\cref{sec:optimising_time_like}.

\subsection{Optimising the Boundary Geometry of Topological Morphing Circuits}\label{subsec:optimising_boundaries}

It is well-known that the boundary conditions of topological codes can have an impact on the code parameters, for example, the rotated surface code uses half the number of physical qubits as the unrotated surface code to achieve the same distance\footnote{Although it is also known that distance is not `everything' for error rates close to threshold, as also {\em the number} of logical operators of low-weight plays a role in the logical performance \cite{Beverland_2019}.}. More recently, it was shown that optimising the periodic boundary conditions of the 4D toric code can lead to greatly improved code parameters~\cite{aasen2025topologically}. The aim of this section is to apply this boundary optimisation to morphing circuits to improve the end-cycle distance of the codes.

The morphing design principle asks us to start with a mid-cycle code $C$, after which we design contracting subsets and contraction circuits that implicitly define the end-cycle codes $C_{j}^{\mathrm{end}}$. It may be tempting to use an optimised topological code as the mid-cycle code $C$, but we would really like the \textit{end-cycle} code to have boundaries that are optimised: even if the mid-cycle code has suboptimal boundaries, it can make for this by having additional qubits. Indeed, we have already seen an example of this in the hex-grid surface code in~\cref{fig:hex_grid_SC}: each end-cycle code is a rotated surface code (which has optimal boundaries), while the mid-cycle code is an unrotated surface code (which has suboptimal boundaries). Nevertheless, the mid- and end-cycle codes have the same distance.

\begin{figure}
    \centering
    \begin{tikzpicture}[node distance=2cm]

    \node (in1) [io] {Infinite mid-cycle code $C_{\mathrm{inf}}$ and contraction circuits $F_{\mathrm{inf},j}$};
    \node (proc1) [process, right of=in1, xshift = 2 cm] {Calculate infinite end-cycle codes $C_{\mathrm{inf},j}^{\mathrm{end}}$};
    \node (proc2) [process, below of=proc1, yshift = -0.5 cm] {Optimise end-cycle boundary conditions};
    \node (proc3) [process, left of=proc2, xshift = -2 cm] {Transfer end-cycle code boundaries to mid-cycle code, padding where necessary (\cref{subsec:boundaries})};
    \node (out1) [io, below of=proc3, yshift = -0.5 cm] {Output: Optimised finite mid-cycle code $C$ and contraction circuits $F_{j}$};

    \draw [arrow] (in1) -- (proc1);
    \draw [arrow] (proc1) -- (proc2);
    \draw [arrow] (proc2) -- (proc3);
    \draw [arrow] (proc3) -- (out1);
    \end{tikzpicture}
    \caption{Summary flowchart of the optimisation procedure for (2D) topological morphing circuits; for more details see Appendix~\ref{sec:code_boundary_optimisation}.}
    \label{fig:optimisation_flowchart}
\end{figure}
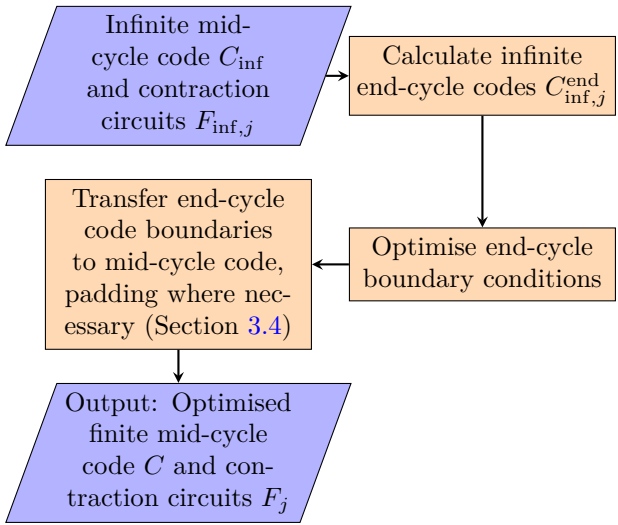

The strategy we follow is summarised in~\cref{fig:optimisation_flowchart}. Similar as in~\cref{subsec:boundaries}, we start with an \textit{infinite} topological code $C_{\mathrm{inf}}$, for example a surface code or a colour code on an infinite 2D plane. Along with this, we assume we have a set of contraction circuits $F_{\mathrm{inf},j}$ and contracting subsets $S_{\mathrm{inf},j}$ for this infinite code. With this, we can calculate the infinite end-cycle codes $C_{\mathrm{inf},j}^{\mathrm{end}}$ whose boundaries we wish to optimise.

%Given that each of the contraction circuit is constant depth, we are guaranteed that the end-cycle codes $C_{\mathrm{inf},j}^{\mathrm{end}}$ are in the same ``topological phase'' as the mid-cycle code, for example, the end-cycle codes of 2D topological mid-cycle codes will have ``string-like'' logical operators. 
By analysing the structure of the logical operators in the end-cycle code, we can try to determine the optimal boundaries to apply to the end-cycle code. We describe this in more mathematical detail in~Appendix~\ref{sec:code_boundary_optimisation}, but intuitively we can do this by judging what geometry optimises the minimum length of a logical string given a fixed number of qubits.
%At this point, however, it is not straightforward to determine exactly what the contracting circuits and the stabilisers of the end-cycle code look like on the boundaries, even if we know what the geometry of the boundaries should be. 
We then ``transfer'' these optimal boundary conditions to the mid-cycle code by keeping track of which expanding stabilisers in the end-cycle code lie on the boundary, and placing the boundaries of the mid-cycle code to match the location of these stabilisers. We can then adapt the contraction circuits to respect these boundary conditions using the methods introduced in~\cref{subsec:boundaries}, which may involve ``padding'' the boundaries.

This strategy can be used for both planar and periodic boundary conditions, and is most straightforward when the logical operators are string-like as is the case for the 2D surface and colour codes and the $Z$-logical operators in the 3D toric code\footnote{For topological codes with non-string-like operators, such as the 4D toric code, one may need to do a numerical search of end-cycle boundary conditions analogous to Ref.~\cite{aasen2025topologically}.}. Because we have optimised the boundaries for the end-cycle code instead of the mid-cycle code, it is also prudent to double check that the distance of the mid-cycle code $C$ is not lower than the desired end-cycle code distance. In the following subsection, we will apply this strategy to a few different morphing circuits for the 2D hexagonal colour code. We also use this strategy implicitly in~\cref{subsec:3D_TC} to design morphing circuits for the 3D toric code. And finally, in~Appendix~\ref{subsec:w4_optimisation_appendix} we derive optimal morphing circuits for weight-4 Abelian 2BGA codes to complement our numerical study of weight-5 and 6 Abelian 2BGA codes in \cref{subsec:Abelian_2BGA}.

\begin{figure}
    \centering
    \includegraphics[width=\linewidth]{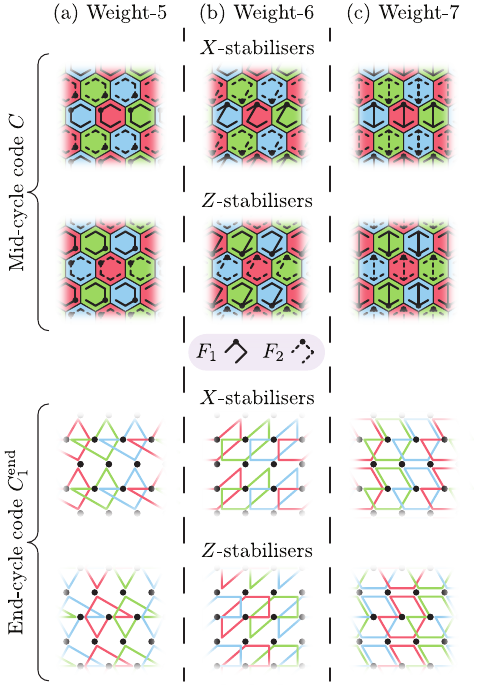}
    \caption{Hex-grid morphing circuits for the colour code on an infinite 2D plane. We consider three distinct end-cycle codes, which we label according to their stabiliser weight. For each end-cycle code, we show the contraction tree diagram of one morphing circuit with the corresponding end-cycle codes. All contraction circuits $F_i$ are depth-3 and can be implemented using purely hex-grid connectivity, and additionally have the property that $F_{2}=H^{\otimes n}F_{1} H^{\otimes n}$. We display the $X$- and $Z$-stabilisers of the end-cycle codes as a shape whose vertices correspond to the qubits that the stabiliser has support on; the colour of each stabiliser is inherited from the colouring of the mid-cycle code. The geometric location of the qubits in the end-cycle code $C_{1}^{\mathrm{end}}$ have been chosen for ease of visualisation: specifically, each end-cycle qubit is placed at the ``average'' location of each pair of qubits that participates in a CNOT in the final layer of the contraction circuit $F_{1}$. The second end-cycle code is related to the first by $C_{2}^{\mathrm{end}}=H^{\otimes n}C_{1}^{\mathrm{end}}H^{\otimes n}$. The weight-5 morphing circuit and end-cycle code have already been presented in Ref.~\cite{Gidney2023}, where the end-cycle code is called the ``pyramid'' code.}
    \label{fig:colour_code_end_cycle}
\end{figure}
\subsection{Case Study: the Hex-grid Colour Code}\label{subsec:colour_code}

\begin{table*}[]
    \caption{Comparison between the optimised morphing circuits for the planar hexagonal lattice colour code and existing colour code and surface code circuits. The triangular weight-6 and diamond weight-7 morphing circuits are novel constructions introduced in this section, with ``triangular'' and ``diamond'' referring to the shape of the colour code boundaries and ``weight-6/7'' referring to the weight of the end-cycle code stabilisers as shown in~\cref{fig:colour_code_end_cycle}. For each circuit, we write the \textit{total} number of data and ancilla qubits $n_{\mathrm{tot}}$ required to achieve an \textit{end-cycle} code distance of $d^{\mathrm{end}}$, the number of logical qubits $k$, the circuit-level distance $d_{\mathrm{circ}}$, and the connectivity requirements. The circuit-level distance in each case is supported by numerical calculations for $d\leq 9$ and is an upper-bound in the large $d$ limit, see~\cref{sec:code_boundary_optimisation,sec:colour_code_numerics} for more details about our methods.}
    \centering
    \begin{tabular}{c c|c|c|c|c}
        \multicolumn{2}{c|}{Colour Code Circuit} & $n_{\mathrm{tot}}$ & $k$ & $d_{\mathrm{circ}}$ & Connectivity \\
        \hline
        \multicolumn{2}{c|}{Superdense~\cite{Gidney2023}} & $3(d^{\mathrm{end}})^{2}/2 - 1/2$ & 1 & $d^{\mathrm{end}}$ & Hex-square grid\\
        \hline
        \multirow{4}{*}{Morphing:} & Gidney \& Jones~\cite{Gidney2023} & $3(d^{\mathrm{end}})^{2}+O(d^{\mathrm{end}})$ & 1 & $d^{\mathrm{end}}$ & Hex grid\\
         & Triangular Weight-6 & $3(d^{\mathrm{end}})^{2}/2 + O(d^{\mathrm{end}})$ & 1 & $3d^{\mathrm{end}}/4+O(1)$ & Hex grid\\
         & Triangular Weight-6 & $2(d^{\mathrm{end}})^{2} + O(d^{\mathrm{end}})$ & 1 & $d^{\mathrm{end}}$ & Hex grid\\
         & Diamond Weight-7 & $3(d^{\mathrm{end}})^{2} + O(d^{\mathrm{end}})$ & 2 & $d^{\mathrm{end}}$? & Hex grid\\
        \hline\hline
        \multicolumn{2}{c|}{Hex-grid surface code~\cite{McEwen23}} & $2(d^{\mathrm{end}})^{2}-1$ & 1 & $d^{\mathrm{end}}$ & Hex grid\\
    \end{tabular}
    \label{tab:colour_code_circuits}
\end{table*}

In this section we apply the optimisation technique from~\cref{subsec:optimising_boundaries} to the 2D hexagonal colour code. A normal bare-ancilla syndrome extraction circuit requires each ancilla qubit to connect to 6 data qubits, but morphing reduces this to 3 which we refer to as hex-grid connectivity. This is a lower connectivity requirement than the superdense colour code circuits~\cite{Gidney2023}, in which data qubits have degree-3 connectivity but ancilla qubits have degree-4. A morphing circuit for the colour code using hex-grid connectivity was first constructed by Gidney and Jones~\cite{Gidney2023}, but since the boundary geometry was optimal for the mid-cycle code instead of the end-cycle code, it requires significantly more physical qubits to achieve a given circuit-level distance than both the surface code and the superdense colour code circuits. We will show that optimising instead for the end-cycle code can lead to significantly more qubit-efficient morphing circuits, see a summary of our results in~\cref{tab:colour_code_circuits}.
%We begin with a brief overview of our results which are summarised in~\cref{tab:colour_code_circuits}. 
Ideally, we would like to find a circuit that (1) uses hex-grid connectivity, (2) uses a minimal number of physical qubits per logical qubit to achieve a given circuit-level distance, and (3) can implement the full Clifford group transversally in the mid-cycle code. Unfortunately, we were not able to find a circuit that achieves all three of these goals simultaneously: the two ``triangular weight-6'' morphing circuits (which we will define shortly) require more physical qubits than desired, and the ``diamond weight-7'' morphing circuit encodes two logical qubits and therefore cannot implement all Clifford gates with mid-cycle transversal gates. Nevertheless, because our newly proposed circuits use a purely hexagonal connectivity, they may still be of interest in various applications. Here, in the main text, we will focus on the overview of how the circuits are designed, and we leave more precise circuit specifications and numerical benchmarks to~\cref{sec:code_boundary_optimisation,sec:colour_code_numerics}.

Following the method in~\cref{fig:optimisation_flowchart}, we start by considering a colour code $C_{\mathrm{inf}}$ on the infinite 2D plane. We consider three morphing circuits that give rise to three distinct end-cycle codes, each of which is shown in~\cref{fig:colour_code_end_cycle}. We call these the weight-5, weight-6, and weight-7 morphing circuits and end-cycle codes, referring to the weight of the end-cycle stabilisers. The weight-5 end-cycle code has already been referred to as the ``pyramid code'' in Ref.~\cite{Gidney2023}.

To illustrate how our optimisation procedure works, consider the weight-6 morphing circuit and its corresponding infinite end-cycle code. The weight-6 end-cycle code is not a colour code -- it is not self-dual and does not have any depth-1 transversal Clifford gates -- but it does have the property that every $X$ or $Z$ error causes three defects (parity check flips), one of which is red, one green, and one blue. This means that the weight-6 end-cycle code has a logical structure that resembles the colour code, with logical strings that can travel along a triangular lattice through the end-cycle code. We can therefore impose optimised \textit{triangular} boundary conditions on the weight-6 end-cycle code, giving a code family encoding one logical qubit, see~\cref{fig:half_shifted_hex}. To obtain the corresponding morphing circuit, we then transfer the boundary conditions found for the end-cycle code onto the mid-cycle code. As is shown in~\cref{fig:half_shifted_hex}, this leads to non-standard and suboptimal boundaries for the mid-cycle colour code, taking the shape of a \textit{right-angled} triangle instead of an equilateral triangle. This is compensated for by the fact that the mid-cycle code has more qubits than the end-cycle code, and we find that the mid-cycle code actually has the same distance as the end-cycle code with this boundary geometry. Finally, we use the techniques from~\cref{subsec:boundaries} to obtain the corresponding morphing circuit. The total number of data and ancilla qubits $n_{\mathrm{tot}}$ required to achieve a given end-cycle code distance $d^{\mathrm{end}}$ scales asymptotically as $n_{\mathrm{tot}}=3d^{\mathrm{end}}/2 + O(d)$.

Unfortunately, the circuit-level distance of this triangular weight-6 morphing circuit is limited by a circuit-level error with weight $3d^{\mathrm{end}}/4+O(1)$, which we show explicitly in~Appendix~\ref{sec:colour_code_numerics}. To overcome this issue, one can once again adjust the triangular boundary conditions to account for the additional error mechanism. This requires additional qubits, as shown in~\cref{fig:weight_6_errors} in~Appendix~\ref{sec:weight_6_details}, but now achieves a circuit-level distance equal to the end-cycle distance. This circuit, while not as qubit efficient as the superdense colour code circuit, uses only a hex-grid connectivity, and can implement transversal logical Clifford gates in the mid-cycle code. These properties may make the weight-6 triangular morphing circuits attractive for magic state cultivation on devices with strictly hex-grid connectivity.

Having optimised the boundary conditions for the weight-6 end-cycle code, we now turn our attention to the weight-5 and weight-7 end-cycle codes. It turns out that the optimisation problem for the weight-5 and weight-7 end-cycle codes is remarkably similar, so to simplify the investigation we focus on the weight-7 end-cycle codes due to the symmetry of the morphing circuit that implements it. Unlike the weight-6 end-cycle code, logical strings travel along a square lattice through the end-cycle code instead of a triangular lattice. This fundamentally changes the optimisation problem for the end-cycle boundary conditions: optimised triangular boundary conditions use \textit{the same} number of physical qubits as optimised \textit{diamond} boundary conditions, despite the latter encoding two logical qubits instead of one. The resulting diamond weight-7 morphing circuit uses $n_{\mathrm{tot}}=3(d^{\mathrm{end}})^2+O(d^{\mathrm{end}})$ to encode \textit{two} logical qubits, which is more qubit-efficient than the fault-tolerant triangular weight-6 morphing circuit. However, because it encodes more than one logical qubit, we cannot implement the full two-qubit Clifford group using mid-cycle transversal gates, and we therefore would need to use lattice surgery to implement logical operations. We conjecture that the circuit-level distance is the same as the end-cycle code distance, but see~Appendix~\ref{sec:weight_7_details} for a detailed discussion of the evidence we have for this.

For all of the circuits discussed here, we also conducted numerical memory experiments to confirm their viability as quantum memories, see~Appendix~\ref{sec:colour_code_numerics} for details.

\begin{figure*}
    \centering
    \includegraphics[width=\linewidth]{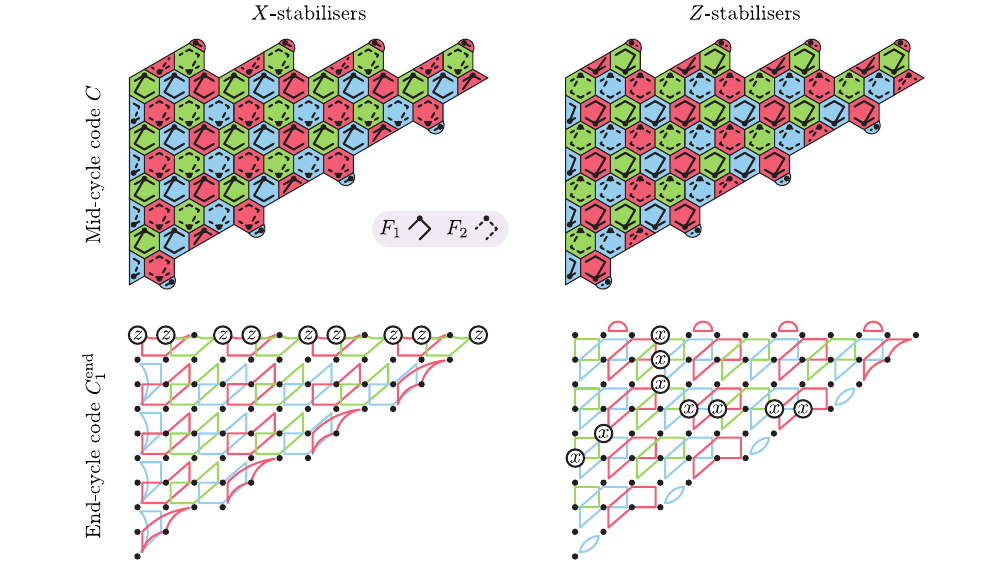}
    \caption{The optimised triangular weight-6 morphing colour code circuit for $d^{\mathrm{end}}=9$. The mid-cycle code is derived by transferring the optimal boundary geometry (see Appendix~\ref{sec:code_boundary_optimisation}) of the weight-6 end-cycle code from~\cref{fig:colour_code_end_cycle}(b), padded using the methods from~\cref{subsec:boundaries}. The result is a mid-cycle colour code on a right-angled-triangular patch instead of the standard equilateral triangle. In the end-cycle code $C_{1}^{\mathrm{end}}$, the qubits are placed on vertices, and the stabilisers are represented by shapes with corners corresponding to the qubits that the stabiliser is supported on. The other end-cycle code is given by $C_{2}^{\mathrm{end}}=H^{\otimes n}C_{1}^{\mathrm{end}}H^{\otimes n}$. The parameters of both end-cycle codes are $[\![73,1,9]\!]$, while the mid-cycle code is $[\![145,1,9]\!]$. We also show example minimum-weight $Z$- and $X$-logical operators in the end-cycle code supported on encircled $z$ and $x$ letters respectively.}
    \label{fig:half_shifted_hex}
\end{figure*}

\subsection{Adding More Contraction Rounds: Bacon-Shor Code Example}\label{subsec:adding_rounds}

We discuss one other technique that can be used to improve the end-cycle distance, namely to increase the number of contraction rounds in the morphing circuit as in~\cref{fig:overview}(c). Intuitively, one can expect this to work because using more contraction rounds means that fewer stabilisers need to be measured in each contraction round, reducing the number of ancilla qubits in each round. As a result, the end-cycle code will have more qubits, allowing it to achieve a larger distance. The disadvantage of this is that stabilisers are measured less frequently, incurring a time-overhead that we investigate in more detail in~\cref{sec:optimising_time_like}.

The \eczoo[Bacon-Shor code]{bacon_shor} provides a simple and clear example of how this effect can manifest. The Bacon-Shor code is a subsystem code defined on a square grid of qubits with $L_{X}$ columns and $L_{Z}$ rows. On each vertically-oriented edge of the square grid there is a weight-2 $X$-gauge generator, and on each horizontal edge there is a weight-2 $Z$-gauge generator. The stabilisers of the code are then complete \textit{rows} of $X$-gauge operators (so double rows of $X$ Pauli operators), and complete \textit{columns} of $Z$-gauge operators (so double columns of $Z$ Pauli operators). $X$-logical operators are then given by any single row of $X$ operators such that the $X$-distance is $d_{X}=L_{X}$, and $Z$-logical operators are any single column of $Z$ operators such that $d_{Z}=L_{Z}$.

Contracting a given gauge operator is trivial: it requires only a single CNOT gate followed by a measurement, and by the arguments in~\cref{subsec:data_ancilla}, the direction of this CNOT gate does not affect the end-cycle code. On the other hand, the choice of contracting subsets is more subtle because we also need to ensure that gauge operators are contracted in an order that ensures that the stabilisers can be inferred, as explained in~\cref{subsec:general_morphing}. One easy way to guarantee this is to contract all the $X$-gauge operators in the first contraction round, and then all of the $Z$-gauge operators in the other contraction round, so $J=2$; but doing this would reduce our $Z$-distance to 1 in the first end-cycle code because there would be no $X$-gauge operators left in the end-cycle code! To achieve a non-trivial end-cycle distance we therefore need to use more than one contraction round for the $X$-gauge operators (and similarly $Z$-gauge operators), such that the morphing circuit has $J> 2$.

Consider the contraction rounds in which we contract $X$-gauge operators. For simplicity, we will contract all the $X$-gauge operators in a given row during the same contraction round, so that now all we have to do is choose which contraction round to contract each row of $X$-gauge operators in. It is not too hard to see that the end-cycle code is another Bacon-Shor code but with fewer rows of qubits: the end-cycle grid of qubits now only has $(L_{X}-N_{\mathrm{rows}})$ rows of qubits, where $N_{\mathrm{rows}}$ is the number of rows of $X$-gauge operators that are contracted in a given contraction round. Assuming $L_{X}=L_{Z}=d$ is the mid-cycle code distance, the end-cycle code distance is now $d_{j}^{\mathrm{end}}=d-N_{\mathrm{rows}}$. We can therefore increase the end-cycle distance by decreasing the number of rows contracted in a given round, which necessitates increasing the number of contraction rounds we use to contract all of the rows. The same story applies to the $Z$-gauge operators as well.

Although increasing the number of contraction rounds here improves the end-cycle distance, it comes at the cost of measuring the stabilisers less frequently, increasing the time-overhead of lattice surgery operations, as we discuss in the following section. However, it is not always straightforward to find examples of codes in which the end-cycle code distance does indeed increase by increasing the number of contraction rounds. The only example besides the Bacon-Shor code that we are aware of is given by the ``diamond'' circuits for the surface code~\cite{debroy2025diamond}, where a four-round morphing circuit for the \eczoo[subsystem surface code]{subsystem_surface} uses fewer physical qubits to achieve a given end-cycle code distance than the standard two-round morphing circuit for the (stabiliser) surface code.

\section{Circuit-Level Distance for Stability Experiments using Morphing Circuits}\label{sec:optimising_time_like}

%So far in this work we have only discussed the error-correction properties of morphing circuits in the context of memory experiments.spatial errors: i.e.~those that can be viewed as an instantaneous flip to a logical observable. However, it is important also to

We consider how frequent the stabilisers of a given code are measured in a morphing circuit compared with a bare-ancilla circuit. The frequency of stabiliser measurements is important in two contexts: in lattice surgery, where multiple stabiliser measurement errors can lead to an error in the logical measurement outcome; and in windowed decoding, where measurement errors can cause sparse spatial errors to accumulate into a logical error across different decoding windows. Because of these error mechanisms, lattice surgery operations must be deep enough and the decoding window must be wide enough for the corresponding circuit-level distance to be $\Omega(d)$.

We already argued intuitively in~\cref{subsec:two_round_morphing_detectors} that (in the absence of single-shot error correction) any morphing circuit with two contraction rounds has \textit{the same} stabiliser measurement frequency as any \textit{normal} circuit in which each stabiliser generator is measured exactly once per measurement round. We make this rigorous here by proving that the circuit-level distance of a stability experiment for a two-round morphing circuit is given exactly by the number of measurement rounds $T$ in the stability experiment. At the same time, we explore how increasing the number of contraction rounds $J$ reduces the circuit-level distance of a $T$-round stability experiment, implying that deeper lattice surgery operations are required to achieve a circuit-level distance $d$. The mathematical results are stated in Prop.~\ref{prop:time_like_lower_bound} and Prop.~\ref{prop:time_like_upper_bound}, with the consequence for $J=2$ in \cref{eq:J=2_time-like_distance} and tighter results for CSS codes in \cref{subsec:tightCSS}. The results for $J>2$ are relevant because morphing circuits with $J>2$ may be desirable both to improve the end-cycle code distance as discussed in~\cref{subsec:adding_rounds}, as well as for schemes that use morphing to combat qubit and/or coupler dropout~\cite{debroy2024:luci,higgott2025handling,wolanski2026automated,anker2025:optimized}. Finally, we consider the interplay of morphing with single-shot error correction, starting in~\cref{subsec:3D_TC} with the concrete example of the 3D toric code and finishing in~\cref{subsec:single_shot} by showing in general that the single-shot properties of mid-cycle codes are preserved in morphing circuits.

\subsection{Additional Conventions}

We introduce some additional conventions that we use throughout this section. We will assume that the experimental platform we are using has access to (unconditional) resets on ancilla qubits, we will comment on the differences that can arise when resets are not present in~\cref{sec:discussion}. In morphing circuits, we will label the time-steps during which the morphing circuit is at the end-cycle code with non-negative integers $t=0,1,\dots,T$, and time-steps during which the morphing circuit is in the mid-cycle code with half-integers $t=1/2,3/2,\dots,T-1/2$. By convention, all of our experiments begin at time $t=0$ in the end-cycle code $C_{J}^{\mathrm{end}}$. After the end-cycle data qubits have been initialised in some chosen state, we will begin by performing the resets $R_{J}$ on the ancilla qubits and then proceed with the circuit $F_{J}^{\dag}$ and so forth. This convention means that at the end-cycle time $t$, we are in the end-cycle code $C_{t\,\mathrm{mod}\,J}^{\mathrm{end}}$ which is notationally convenient. The last operation of the QEC part of the experiment will be the measurements $M_{T\,\mathrm{mod}\,J}$, at which point we will be in the end-cycle code ${C}_{T\,\mathrm{mod}\,J}^{\mathrm{end}}$ and the end-cycle data qubits can be measured in a chosen basis.

\subsection{Detectors in Morphing Circuits: Regular and Meta-Check}\label{subsec:morphing_detectors}

\begin{figure*}
    \centering
    \includegraphics[width=\linewidth]{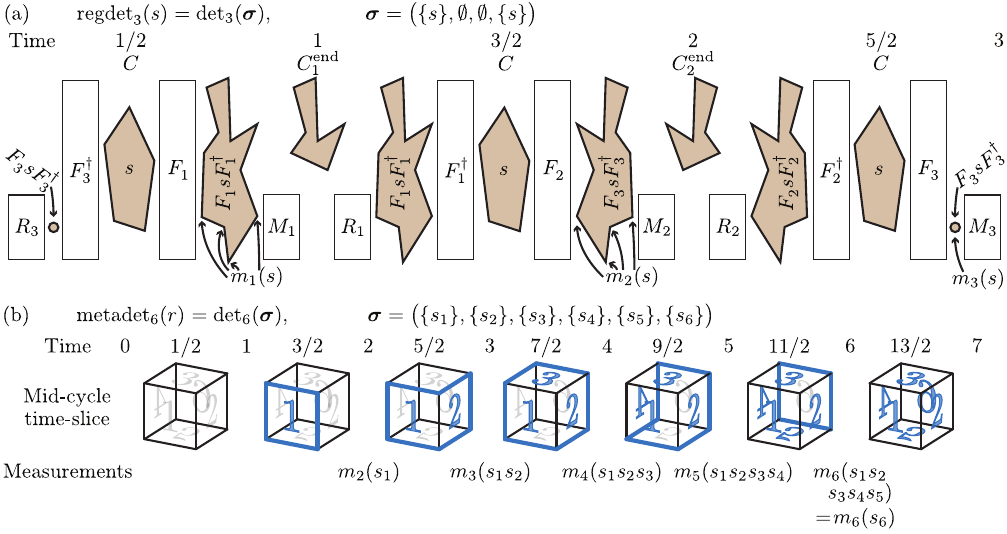}
    \caption{Diagrams representing the two types of detectors in a morphing circuit, time moves from left to right in both of these figures. (a) A regular detector $\mathrm{regdet}_{3}(s)$ corresponding to a stabiliser $s$ that is in the contracting subset $S_{3}$ in a three-round morphing circuit. The detecting region at each of the mid-cycle time-steps corresponds exactly to the stabiliser $s$, which is abstractly represented at each time step by an arbitrary polygon whose vertices represent qubits. The measurements $m_t(s)$ that are included in the detector are simply those that $s$ ends up on while it is expanding at time-steps $t=1$ and $t=2$, and the final measurement that $s$ contracts onto in time-step $t=3$. The duration of this detector is the same as the number of rounds $J$ in the morphing circuit, here $J=3$. (b) A meta-check detector $\mathrm{metadet}_{6}(r)$ derived from a 3D toric code with qubits on the edges of a cube and $Z$-checks on faces, where each of the six $Z$-stabilisers $s_{i}$ that make up the redundancy $r$ is contracted in a different contraction round of a six-round morphing circuit. The mid-cycle time-slice of the meta-check detector at each time step is shown in blue, where the highlighted blue numbers represent the stabilisers that make up the time-slice in~\cref{eq:detector_mid_cycle_time_slice} and the highlighted blue edges represent the qubits the detector time-slice has support on. We also list the end-cycle measurements that make up the meta-check detector. The duration of this detector is $J-1=5$.}
    \label{fig:detector_figures}
\end{figure*}

We describe how to mathematically construct detectors in morphing circuits, going beyond the more intuitive $J=2$ case discussed in~\cref{subsec:two_round_morphing_detectors}. In general, a detector is a set of measurements in a circuit whose product is deterministic in the absence of errors~\cite{Gidney21,McEwen23}. A detector is sensitive to errors in a detection region which, for quantum LDPC codes, we wish to be local in space-time~\cite{delfosse2023}, so that only a constant number of circuit-level error mechanisms can flip a given detector.

We begin by describing an extremely general way of constructing detectors in morphing circuits for stabiliser codes, before explaining how to use this formalism to construct a local generating set of detectors. To construct detectors in morphing circuits, we follow the formalism of Ref.~\cite{McEwen23} by constructing \textit{detecting regions} in the morphing circuit. To summarise, a detecting region is the back-propagation of the Pauli operators corresponding to all of the measurements contained within the detector. Importantly, the ``time-slice'' of a detecting region at a given time in a circuit corresponds to a stabiliser of the codespace at that time in the circuit. We construct detectors from the ``mid-cycle-code'' perspective by constructing a valid set of the mid-cycle time-slices of a detecting region, and then inferring the corresponding detector consistent with these time-slices, as we now explain.

%We begin by deriving an important general property of detectors that we will use to make our definition. 
At some mid-cycle time-step $t-1/2$, let us suppose that a detector has a detecting region time-slice given by the mid-cycle stabiliser $s$. What does this imply about the detector? First, we know that the time-slice of the detecting region immediately before the next measurements $M_{t\,\mathrm{mod}\,J}$ at time $t$ is given by $F_{t\,\mathrm{mod}\,J}^{\vphantom{\dag}}sF_{t\,\mathrm{mod}\,J}^{\dag}$. This is enough to determine which of the measurements in $M_{t\,\mathrm{mod}\,J}$ are contained within the detector: it is simply all of the measurements in $M_{t\,\mathrm{mod}\,J}$ that overlap with the pre-measurement time-slice of the detecting region (see e.g.~\cref{fig:two_round_detector}), which we write as
\begin{equation}
    m_{t}(s)=M_{t\,\mathrm{mod}\,J}\cap (F_{t\,\mathrm{mod}\,J}^{\vphantom{\dag}}sF_{t\,\mathrm{mod}\,J}^{\dag}),
    \label{eq:mts}
\end{equation}
where it is understood that $m_t(s)$ is a set of measurements.
Moreover, in the end-cycle code $C_{t\,\mathrm{mod}\,J}^{\mathrm{end}}$, the detecting region time-slice is given by the Gottesman-Knill update of $F_{t\,\mathrm{mod}\,J}^{\vphantom{\dag}}sF_{t\,\mathrm{mod}\,J}^{\dag}$ given by projecting it onto the data qubits of $C_{t\,\mathrm{mod}\,J}^{\mathrm{end}}$.

Now, let us suppose that the support of this same detector at the \textit{next} mid-cycle time-step $t+1/2$ is given by another mid-cycle stabiliser $s'$. Following the same logic as above but working backwards, we know that the end-cycle time-slice of the detecting region at time $t$ is given by the projection of $F_{t\,\mathrm{mod}\,J}^{\vphantom{\dag}}s'F_{t\,\mathrm{mod}\,J}^{\dag}$ onto the data qubits of $C_{t\,\mathrm{mod}\,J}^{\mathrm{end}}$. Because we want this to be the same detector as in the previous paragraph this projection must coincide with the projection of $F_{t\,\mathrm{mod}\,J}^{\vphantom{\dag}}sF_{t\,\mathrm{mod}\,J}^{\dag}$ onto the same data qubits in $C_{t\,\mathrm{mod}\,J}^{\mathrm{end}}$. Of course, this is trivially satisfied if $s=s'$, but we actually have more flexibility than this: $s$ and $s'$ can differ by any number of stabilisers that are \textit{contracting} at time $t$, because these stabilisers are necessarily measured and do not affect the end-cycle detecting region time-slice at time $t$.

The above considerations motivate the following general definition of a detector in a morphing circuit with the ideas visually illustrated in~\cref{fig:detector_figures}.
We begin with a ``detection sequence'' $\vect{\sigma}=(\sigma_{t_1},\sigma_{t_1+1},\dots,\sigma_{t_2})$, where each element $\sigma_{t}$ is a (possibly empty) subset of the mid-cycle contracting stabilisers $S_{t\,\mathrm{mod}\,J}$ in that contraction round, such that the product of all the stabilisers in the sequence $\vect{\sigma}$ is the identity, i.e.
\begin{equation}\label{eq:detector_identity_product}
    \prod_{t=t_{1}}^{t_{2}}\prod_{s\in \sigma_{t}}s=I.
\end{equation}
To define detectors of finite duration, $t_2-t_1$ in \cref{eq:detector_identity_product} will be constant. For example, we can construct a regular detector by setting $\sigma_{t_{1}}=\sigma_{t_{2}}=\{s\}$, where $t_{1}$ and $t_{2}$ are two time-steps in which $s$ is contracting, and $\sigma_{t}=\emptyset$ for all other $t$. The detection sequence trivially satisfies~\cref{eq:detector_identity_product}, because it contains the same stabiliser $s$ twice.

With a detection sequence satisfying~\cref{eq:detector_identity_product}, we construct the detector $\mathrm{det}_{t_2}(\vect{\sigma})$ such that, at any mid-cycle time-step $t-1/2$, the mid-cycle detecting region time-slice is given by the product of all the stabilisers in $\vect{\sigma}$ up to time $t$, i.e.
\begin{equation}\label{eq:detector_mid_cycle_time_slice}
    \prod_{t'=t_{1}}^{t-1}\prod_{s\in \sigma_{t'}}s.
\end{equation}
With this, the measurements that make up the detector $\mathrm{det}_{t_2}(\vect{\sigma})$ are given by forward-propagating each of the mid-cycle time-slices from~\cref{eq:detector_mid_cycle_time_slice}, i.e.
\begin{equation}\label{eq:detector_definition}
    \mathrm{det}_{t_2}(\vect{\sigma})=\bigcup_{t=t_{1}+1}^{t_{2}}m_{t}\bigg(\prod_{t'=t_{1}}^{t-1}\prod_{s\in \sigma_{t'}}s\bigg).
\end{equation}
with $m_t(s)$ defined in~\cref{eq:mts}. Such a definition works because any pair of consecutive mid-cycle time-slices of the detecting region differ only by a product of \textit{contracting} stabilisers in the set $\sigma_{t}$. 

We use this general definition to construct two types of detectors: {\em regular} detectors $\mathrm{regdet}_{t_{2}}(s)$ that are defined for every mid-cycle stabiliser generator $s$ (shown in~\cref{fig:detector_figures}(a)), and {\em meta-check} detectors $\mathrm{metadet}_{t}(r)$ that are defined for each mid-cycle redundancy $r$ (shown in~\cref{fig:detector_figures}(b)). Given a mid-cycle stabiliser generator $s$ and a time-step $t_{2}$ in which $s$ is contracting, we can define a regular detector $\mathrm{regdet}_{t_{2}}(s)$ whose detecting region spans from $t_{1}$ to $t_{2}$, where $t_{1}$ is the most-recent time-step before $t_{2}$ in which $s$ was also contracting. The corresponding detection sequence $\vect{\sigma}$ is given by setting $\sigma_{t_{1}}=\sigma_{t_{2}}=\{s\}$ and $\sigma_{t'}=\emptyset$ for all other $t'$. The detection sequence trivially satisfies~\cref{eq:detector_identity_product} because it contains the same stabiliser $s$ twice. This means that the mid-cycle detecting region time-slice is $s$ for all mid-cycle time-steps between $t_{1}$ and $t_{2}$, and otherwise is the identity. In a $J$-round morphing circuit in which each stabiliser generator is measured exactly once, then the duration of any regular detector is exactly $J$ measurement rounds.

Moreover, given a mid-cycle redundancy $r$ -- which is a set of mid-cycle stabiliser generators whose product is the identity -- and a time-step $t_{2}$ in which at least one stabiliser in $r$ is contracting, we can define a meta-check detector $\mathrm{metadet}_{t_{2}}(r)$ as follows. Unlike the regular detectors, here we include each stabiliser $s\in r$ only once in the detection sequence $\vect{\sigma}$, because we are already guaranteed that $\prod_{s\in r} s=I$. In particular, we include each stabiliser $s\in r$ in $\vect{\sigma}$ in the most-recent time-step $t'\leq t_{2}$ in which it is contracting. This means that the mid-cycle detecting region time-slice is no longer the same in every time-slice, as shown in~\cref{fig:detector_figures}(b). In a $J$-round morphing circuit in which each stabiliser generator is measured exactly once, the duration of any meta-check detector is exactly $J-1$ measurement rounds.

Finally, we note a fact that will be useful for later: that meta-check detectors corresponding to the same redundancy $r$ in different time-steps are related by regular detectors via
\begin{equation}\label{eq:meta_check_detector_relation}
    \mathrm{metadet}_{t}(r)\oplus \mathrm{metadet}_{t+1}(r)=\bigoplus_{s\in r_{t+1}}\mathrm{regdet}_{t+1}(s),
\end{equation}
where here we use the $\oplus$ symbol to denote that a measurement is included in the detector $\bigoplus_{d\in D}d$ if and only if it is included an odd number of times in the detectors in $D$.

Although our above description only applies to stabiliser codes, it is straightforward to generalise this to subsystem and dynamical codes by adding the restriction that the mid-cycle detecting region time-slice at each mid-cycle time-step $t-1/2$ commutes with all of the contracting operators in both $S_{(t-1)\,\mathrm{mod}\,J}$ and $S_{t\,\mathrm{mod}\,J}$. This guarantees that the resulting detecting region commutes with all the measurements and resets that it overlaps with, and therefore defines a deterministic product of measurements.

\subsection{Circuit-level Distance of Stability Experiments}\label{subsec:time-like_distance}

Now that we know how to define detectors in morphing circuits, we can bound the circuit-level distance of stability experiments in morphing circuits. Stability experiments were introduced in Ref.~\cite{gidney2022stability}, and the most well-known example of a stability experiment is a modified surface code patch where all four boundaries have the same Pauli type. This removes the logical qubit of the surface code, but introduces a redundancy $r$ which can be used to define the stability experiment. However, any code $C$ with at least one redundancy can define a stability experiment, including toric codes and more general Abelian 2BGA codes.

In a stability experiment one begins by initialising each data qubit in an arbitrary single-qubit state; it does not matter what state as long as we do not use any information about this state in the decoding protocol. Then, we implement $T$ measurement rounds of some syndrome extraction circuit. Afterwards, it does not matter what we do with the qubits, so long as the protocol also does not use any information about it. In this circuit, regardless of the initialisation, there are a number of regular and meta-check detectors that are deterministic in the absence of errors.
%because their detecting regions do not come into contact with the initial resets or final measurements. 

To perform the stability experiment, we promote each meta-check detector to the role of a logical \textit{observable}: that is, we do not pass information about its ideal value to the decoder and instead use it as a ``test'' to see whether the decoder can correctly predict its deterministic value. To make a clear distinction we write $\mathrm{obs}_{t}(r)$ instead of $\mathrm{metadet}_{t}(r)$ for each meta-check observable, and we refer to the ``observing region'' of $\mathrm{obs}_{t}(r)$ instead of the detecting region. Moreover, note from~\cref{eq:meta_check_detector_relation} that any pair of meta-check observables $\mathrm{obs}_{t}(r)$ and $\mathrm{obs}_{t'}(r)$ defined from the same redundancy $r$ are related by some regular detectors. Any decoder will therefore predict the same outcome for $\mathrm{obs}_{t}(r)$ and $\mathrm{obs}_{t'}(r)$, so they must represent two representatives of the same logical observable in the stability experiment.

We are interested in the circuit-level distance of such a depth-$T$ stability experiment. This circuit-level distance, written as $d_{\mathrm{stab}}(T)$, is defined as the smallest number of circuit-level errors that flip at least one of meta-check observables without flipping any regular detectors. 
For example, in the surface code stability experiment with a normal bare-ancilla syndrome extraction circuit, we simply have $d_{\mathrm{stab}}(T)=T$. With this, the number of measurement rounds $T$ in a lattice surgery experiment should be such that the corresponding stability experiment has $d_{\mathrm{stab}}(T)\geq d_{\mathrm{circ}}$, where $d_{\mathrm{circ}}$ is the usual \textit{spatial} circuit-level distance of the syndrome extraction circuit for the code $C$.

With these definitions, we can present our upper- and lower-bounds on $d_{\mathrm{stab}}(T)$. Our results are written in fairly general terms to allow for cases in which stabilisers are measured more than once (in the cycle of length $J$) in the morphing circuit, and to allow for subsystem codes in which stabilisers are composed of a product of gauge operators.
\begin{proposition}\label{prop:time_like_lower_bound}
    Consider a depth-$T$ stability experiment for a morphing circuit in which the maximum duration of its regular detectors, i.e.~the temporal length of its corresponding detection regions, is $\delta$ measurement rounds. Then, the circuit-level distance of this experiment satisfies the lower-bound
    \begin{equation}\label{eq:time_like_lower_bound}
        d_{\mathrm{stab}}(T)\geq \bigg\lfloor\frac{T}{\delta-1}\bigg\rfloor.
    \end{equation}
\end{proposition}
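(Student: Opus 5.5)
The plan is a time-slicing argument. Fix an error set $E$ that flips a meta-check observable $\mathrm{obs}_{t}(r)$ without flipping any regular detector, and show that $E$ must contain an error in every window of $\delta-1$ consecutive measurement rounds. Partitioning the $T$ rounds into $\lfloor T/(\delta-1)\rfloor$ disjoint windows then gives $|E|\geq\lfloor T/(\delta-1)\rfloor$.

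\textbf{Step 1: moving the observable.} By \cref{eq:meta_check_detector_relation}, $\mathrm{obs}_{t}(r)$ and $\mathrm{obs}_{t'}(r)$ differ by a product of regular detectors. Since $E$ flips no regular detector, $E$ flips $\mathrm{obs}_{t'}(r)$ for \emph{every} valid $t'$ in the experiment, including ones whose observing region is supported in any chosen window of time. In the general setting (stabilisers measured several times per cycle, or gauge products as in subsystem codes) I would state the analogue of \cref{eq:meta_check_detector_relation} directly. The relation says that consecutive representatives differ by the regular detectors of the newly contracted stabilisers, and each of those has duration at most $\delta$.

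\textbf{Step 2: a window with no errors.} Suppose some window $W=[a,a+\delta-1)$ of $\delta-1$ consecutive rounds contains no errors of $E$. Split $E=E_{<}\cup E_{>}$ into errors before and after $W$. Because no regular detector is flipped, each regular detector whose region straddles $W$ receives equal parity from $E_{<}$ and $E_{>}$. The idea is to use a detector-based cut as the observable: express a representative $\mathrm{obs}_{t'}(r)$ as a product of regular detectors together with a cut located in $W$. Flipping this representative then forces a mismatch between the parities that $E_{<}$ and $E_{>}$ contribute to the cut, unless some regular detector is flipped.

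Concretely, I would choose $t'$ so that all measurements of $\mathrm{obs}_{t'}(r)$ lie in $W$ or earlier. I would then compare it with $\mathrm{obs}_{t'}(r)$ computed at the start of the experiment, where the observing region meets the arbitrary initial state. This comparison uses the fact that $E_{<}$ lies entirely before $W$, while detectors of duration $\le\delta$ spanning the gap of length $\delta-1$ link the two sides. I expect the correct statement to be this: the Pauli frame of $E_{<}$ at the mid-cycle time inside $W$ must commute with every regular detector time-slice there, hence with all stabilisers $s$. Therefore it is a stabiliser or a logical up to gauge. Its effect on the observable is then read off the redundancy $r$, giving a contradiction with the choice of initial state being unused by the decoder.

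\textbf{Main obstacle.} The delicate step is making the cut argument rigorous. It requires showing that a window of length $\delta-1$, rather than $\delta$, suffices: every regular detector must have some detecting-region time-slice intersecting both sides of any error-free window of that length, so that the residual error "commutes through" the window. I would first try this in the clean regular case (each stabiliser measured once per cycle, $\delta=J$, meta-check duration $J-1$) and then handle the general case by taking $\delta$ as the maximum duration.
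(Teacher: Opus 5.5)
Your Step 1 matches the paper's first step. Step 2, however, is not yet a proof: by your own account it is unfinished, and parts of it are confused.

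\begin{itemize}
\item Comparing with a representative ``computed at the start of the experiment, where the observing region meets the arbitrary initial state'' is not meaningful. Such a product of measurements is not deterministic and is not an observable of the stability experiment.
\item A ``contradiction with the initial state being unused by the decoder'' says nothing about the weight of $E$.
\item The claim that the frame of $E_{<}$ commutes with every stabiliser is asserted, not derived.
\item The straddling detectors your cut is built on do not exist. A regular detector of duration at most $\delta$ cannot have its region on both sides of a window of $\delta-1$ rounds, since that would require spanning at least $\delta+1$ rounds.
\end{itemize}

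The missing idea is a bound on the temporal extent of the observing regions themselves, which makes any cut argument unnecessary. In $\mathrm{obs}_{t}(r)$, each $s\in r$ enters the detection sequence at its most recent contraction time $t'\leq t$. Regular detectors last at most $\delta$ rounds, so consecutive contractions of any stabiliser are at most $\delta$ rounds apart. Hence $t'\geq t-(\delta-1)$, and the whole observing region of $\mathrm{obs}_{t}(r)$ lies in the $\delta-1$ rounds ending at $t$. This bound, not a commuting-through argument, is where $\delta-1$ rather than $\delta$ comes from, and it resolves what you flag as the main obstacle.

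Combine it with the trivial fact that an error outside an observing region cannot flip that observable, and Step 1 finishes the proof directly:
\begin{itemize}
\item the representatives $\mathrm{obs}_{k(\delta-1)}(r)$ have pairwise disjoint observing regions;
\item $E$ must flip all of them, so it contains at least one error in each;
\item therefore $|E|\geq\lfloor T/(\delta-1)\rfloor$.
\end{itemize}
Your remark in Step 1 that some representative is supported in ``any chosen window'' is exactly the statement that needs this duration bound. Once that bound is proved, an error-free window contradicts Step 1 at once, with no need for Pauli frames, gauge, or the initial state.
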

\begin{proof}[Proof Sketch]
    The proof uses the fact that the maximum duration of an observing region in a stability experiment is $\delta-1$ to show that for any logical observable in a stability experiment, there exists a set of (at least) $\lfloor T/(\delta-1)\rfloor$ non-overlapping observing regions that each represent the same logical observable. Since any circuit-level logical error must flip all of the representatives of a given logical observable, such an error must have weight at least $\lfloor T/(\delta-1)\rfloor$, giving the result (see Appendix~\ref{sec:proofs} for the full proof).
\end{proof}

\begin{proposition}\label{prop:time_like_upper_bound}
    Consider a depth-$T$ stability experiment for a morphing circuit with mid-cycle code $C$. Then, let $N(s)$ be the number of times that a mid-cycle stabiliser generator $s$ of $C$ is contracted in an end-cycle time-step $t=0,\dots,T$, with the first ($t=0$) and last ($t=T$) contraction rounds only counting as half in $N(s)$. Then, the circuit-level distance of this experiment satisfies the upper-bound
    \begin{equation}\label{eq:time_like_upper_bound}
        d_{\mathrm{stab}}(T)\leq 2\min_{s\in R}N(s),
    \end{equation}
    where $R$ is the set of stabilisers that are contained within at least one redundancy of the mid-cycle code.
\end{proposition}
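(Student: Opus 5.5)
Since $d_{\mathrm{stab}}(T)$ is a minimum, the upper bound follows from one explicit error. I would fix $s\in R$ attaining $\min_{s\in R}N(s)$ and a redundancy $r\ni s$, and build a set of circuit-level errors of weight at most $2N(s)$. It must flip a meta-check observable $\mathrm{obs}_{t}(r)$ and no regular detector. The natural choice is to corrupt every contraction of $s$. For each end-cycle time-step $t$ with $s\in S_{t\,\mathrm{mod}\,J}$, apply a measurement error on the qubit in $M_{t\,\mathrm{mod}\,J}$ onto which $s$ contracts, together with a reset error on the same qubit in the subsequent $R_{t\,\mathrm{mod}\,J}$. That is two faults per interior contraction. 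At the boundary time-steps $t=0$ and $t=T$ only one of the pair exists inside the experiment: at $t=0$ there is only the reset $R_J$, and at $t=T$ only the final measurement. This is exactly why those rounds count as one half in $N(s)$, so the total weight is $2N(s)$.

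Next I would check which detectors this error flips, using the detector definition in terms of detection sequences. A measurement error immediately followed by a reset error on the same contracted qubit is equivalent to applying the contracted single-qubit Pauli. After conjugating back by $F^{\dag}$, this is the operator on the state that flips the recorded outcome of $s$ but otherwise acts as the logical-free insertion $s$. Since $s$ is a stabiliser, that insertion acts trivially on the code state. The net effect is then purely a flip of the recorded outcome of the contracted measurement of $s$ at every contraction time-step. I would argue this carefully by noting that the measurement error flips the outcome and the reset error re-prepares the qubit in the flipped eigenstate. The following $F^{\dag}$ then maps this to the code state with $s$ carrying eigenvalue $-1$ at the mid-cycle. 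The next contraction of $s$ is again flipped by the next pair, and so on, so the $-1$ is carried consistently across all time-steps.

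The detector check then goes as follows. A regular detector $\mathrm{regdet}_{t_2}(s')$ compares consecutive contractions of $s'$. For $s'=s$, both of its contracted measurements and the expanding data between them see a consistent flipped eigenvalue, so it is not flipped; the intervening expanding measurements are consistent because the physical state genuinely has $s=-1$. For $s'\ne s$ nothing changes. A meta-check observable $\mathrm{obs}_t(r)$, in contrast, includes $s$ exactly once in its detection sequence, so it sees the single flip and is flipped, as required.

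The main obstacle is making this precise in the detecting-region language without appealing to intuition about ``eigenvalue $-1$''. I would do it by checking anticommutation directly. The composite fault (measurement flip plus reset flip on the contraction qubit of $s$ at each contraction time-step) anticommutes with a detecting region if and only if that region's time-slice changes by $s$ at that contraction step, i.e.\ iff $s\in\sigma_t$. Regular detectors contain each $s$ in $\sigma$ exactly twice, namely at consecutive contractions of $s$, so each one is hit an even number of times. Meta-check observables contain $s$ once, so they are hit an odd number of times. The boundary half-counts need separate care: the detecting regions touching $t=0$ and $t=T$ are not deterministic detectors in a stability experiment, so only one fault is needed there.
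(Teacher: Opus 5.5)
Your proposal is correct and is essentially the paper's proof. You use the same string of a measurement error on the contraction qubit of $s$ plus a reset error on that qubit at each contraction of $s$, with single faults at $t=0$ and $t=T$. It is certified by the same parity fact: such a pair overlaps a detecting region an odd number of times iff $s\in\sigma_t$. The paper invokes this fact for the regular detectors and for the edge case of $\mathrm{obs}_t(r)$, and you apply it uniformly.

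Two informal remarks are off, though your final argument does not rely on them:
\begin{itemize}
\item The pair does not act as an insertion of $s$. The reset error anticommutes with the contracted $s$ and leaves $s$ at eigenvalue $-1$, as you correctly say later.
\item Regions touching $t=0$ or $t=T$ are not non-deterministic in general; for instance, regular detectors starting at the resets $R_J$ are genuine detectors. The lone boundary faults are harmless because they obey the same $\sigma_t$ rule, since there is no measurement at $t=0$ and no reset after the final measurements.
\end{itemize}
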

\begin{proof}[Proof Sketch]
    The proof works by constructing an undetectable logical error. In particular, for any stabiliser generator $s$ that participates in at least one redundancy, we construct an error by placing a measurement error on every measurement upon which $s$ is contracted, and a reset error on every following reset upon which $s$ is contracted. In Appendix~\ref{sec:proofs} we prove that this is an undetectable logical error and has weight $2N(s)$, which gives the upper-bound in~\cref{eq:time_like_upper_bound} when considered for all stabiliser generators that participate in at least one redundancy.
\end{proof}

Because Prop.~\ref{prop:time_like_lower_bound} and \ref{prop:time_like_upper_bound} summarise the bounds we have found for a wide variety of morphing circuits, we will unpack its consequences for morphing circuits in the simpler case where every stabiliser generator is measured exactly once in the morphing circuit. First, we consider the simplest case where the morphing circuit has two rounds. In this case, the maximum duration of the regular detectors is $\delta=2$, and moreover $N(s)=T/2$ for all stabilisers $s$. Therefore, the upper- and lower-bounds coincide, giving
\begin{equation}\label{eq:J=2_time-like_distance}
    d_{\mathrm{stab}}(T)=T.
\end{equation}
This is significant because it makes formal the time-like equivalence between two-round morphing circuits and complete bare-ancilla circuits that we already discussed intuitively in~\cref{subsec:two_round_morphing_detectors}. \cref{eq:J=2_time-like_distance} applies to almost all of the examples of morphing circuits we have explicitly discussed so far, including the hex-grid surface code of Ref.~\cite{McEwen23}, the colour code circuits in~\cref{subsec:colour_code}, and the Abelian 2BGA code circuits in~\cref{subsec:Abelian_2BGA}.

\begin{figure}
    \centering
    \includegraphics[width=\linewidth]{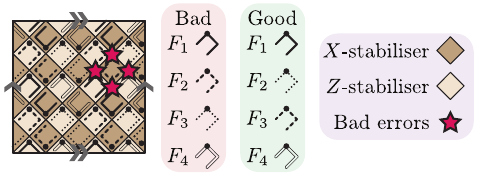}
    \caption{Two four-round toric code morphing circuits with different circuit-level stability distances $d_{\rm stab}(T)$. There are two redundancies in the toric code given by the product of all the $X$- and all the $Z$-stabilisers, each of which defines one observable in the stability experiment. The only difference between the two circuits is the order in which the contracting subsets are implemented. The ``bad'' ordering has circuit-level stability distance $\lfloor T/3\rfloor$, with a minimum-weight string of ``bad'' errors that can be constructed from the four mid-cycle $X$-errors indicated in the figure. In particular, the bad error string consists of placing one mid-cycle $X$-error every 3 time-steps, starting with one error at the right spatial error location, then the top, then the left, and then the bottom, and so on. Each single-qubit mid-cycle error triggers two detectors $\mathrm{regdet}_{t_{1}}(s_{1})$ and $\mathrm{regdet}_{t_{2}}(s_{2})$ corresponding to the two toric code $X$-stabilisers that it overlaps with. However, due to the time-steps in which the errors occur, these two detectors that are triggered are separated by 3 measurement rounds in time, i.e.~$t_{2}=t_{1}+3$. Moreover, the error string flips any observable corresponding to the $Z$-stabiliser redundancy, and therefore constitutes a logical error. By reordering the contracting subsets into the ``good'' ordering, this error mechanism is no longer available, and we have numerically found that the ``good'' circuit has time-like distance $\lfloor T/2\rfloor-1$.}
    \label{fig:time-like_errors}
\end{figure}

A more subtle picture emerges when we consider morphing circuits with $J>2$ contraction rounds, even while still assuming each stabiliser is measured exactly once per contraction round. Here, the upper- and lower-bounds no longer coincide, and even scale at different rates with respect to $T$:
\begin{equation}
    \frac{T}{J-1}-O(1)\leq d_{\mathrm{stab}}(T)\leq\frac{2T}{J}+O(1),
\end{equation}
where $O(1)$ is meant with respect to increasing $T$. To demonstrate that the upper- and lower-bounds need not coincide we sketch the contracting circuits for two $J=4$ toric code morphing circuits in~\cref{fig:time-like_errors}. The first of these circuits has $d_{\mathrm{stab}}(T)=\lfloor T/3\rfloor$ and the second has $d_{\mathrm{stab}}(T)=\lfloor T/2\rfloor -1$. The difference between these two circuits is extremely subtle: it only involves permuting the order in which the contracting subsets are contracted, without changing the contraction tree diagram or the stabilisers in each subset. Thus when designing morphing circuits with $J>2$, one has to be careful to ensure that the circuit-level stability distance is as close as possible to the upper bound.

\subsubsection{A Tighter Bound for CSS Morphing Circuits}
\label{subsec:tightCSS}

We can improve the lower-bound in~Prop.~\ref{prop:time_like_lower_bound} for CSS morphing circuits by separately considering the $X$-type and the $Z$-type circuit-level distance of a stability experiment. Here, we define the $Z$-type circuit-level distance as the minimum number of circuit-level $Z$-errors that cause an undetectable flip to a meta-check observable in the stability experiment corresponding to a redundancy in the $X$-stabilisers of the code. Now, suppose we have a depth-$T$ stability experiment but where we omit all the $Z$-measurements, such that there are only $T_{X}$ measurement rounds remaining. Skipping over these $Z$-basis measurements and resets in the stability experiment would not change the $Z$-type circuit-level distance because $Z$-type errors on $Z$-basis measurements and resets are trivial. We can therefore lower-bound the $Z$-type circuit-level stability distance by
\begin{equation}\label{eq:CSS_time-like_distance_bounds}
    d_{\mathrm{stab},Z}(T)\geq\bigg\lfloor\frac{T_{X}}{\delta_{X}-1}\bigg\rfloor,
\end{equation}
where now $\delta_{X}$ is the maximum $X$-duration of a regular $X$-type detector -- that is, the number of measurement rounds that the $X$-type detecting region spans \textit{excluding those measurement rounds that do not measure any $X$-stabilisers}.

\cref{eq:CSS_time-like_distance_bounds} can often lead to a tighter lower-bound than~Prop.~\ref{prop:time_like_lower_bound}. In particular, whenever a morphing circuit has only two contraction rounds in which $X$-stabilisers are contracting, the upper- and lower-bounds match up to a constant. Identical results of course also hold for the $X$-type circuit-level distance of the stability experiment.

\subsubsection{Consequences for Combating Qubit and Coupler Dropouts}

There is one application of morphing circuits that has been widely discussed in the literature that frequently utilises circuits with $J>2$: combating qubit and coupler dropouts~\cite{debroy2024:luci,higgott2025handling,wolanski2026automated,anker2025:optimized}. These typically have mid-cycle codes that are subsystem codes instead of stabiliser codes (due to qubit dropouts), and have morphing circuits in which some gauge operators are measured more than once in the morphing circuit. Nevertheless, we can still apply the lower-bound in~Prop.~\ref{prop:time_like_lower_bound} and, if the morphing circuit is CSS, the tighter lower-bound in~\cref{eq:CSS_time-like_distance_bounds}.

As an example, the original formulation of LUCI~\cite{debroy2024:luci} uses four contraction rounds to implement a subsystem code for the surface code with dropouts. All of the $X$-gauge-checks are contained in the first two contracting subsets, $F_{1}$ and $F_{2}$, while all of the $Z$-gauge-checks are contained in the last two, $F_{3}$ and $F_{4}$. If stabilisers are formed by products of gauge operators in different contracting subsets, then the maximum duration of a detector is $\delta=5$ and the circuit-level stability distance is lower-bounded by $d_{\mathrm{stab}}(T)\geq T/4-O(1)$. However, this suggests a possible strategy to improve $d_{\mathrm{stab}}$ by ensuring that all stabilisers are formed by products of gauge operators in the \textit{same} contracting subsets: in this case, we can use \cref{subsec:tightCSS} to improve the 
lower-bound to $d_{\mathrm{stab}}(T)\geq T/2-O(1)$. This lower-bound implies that lattice surgery would require \textit{at most} $2d+O(1)$ measurement rounds to achieve $d_{\mathrm{stab}}(T)=d$ instead of $4d+O(1)$.\footnote{Note that in an earlier version of this manuscript we made an incorrect claim about the circuit-level stability distance of LUCI, we apologise for any confusion caused.}

\subsection{Single-Shot Error Correction and the 3D Toric Code}\label{subsec:3D_TC}

To conclude this section, we examine the interplay between morphing circuits and codes with \textit{single-shot} error-correction \cite{Campbell_2019}, illustrated with the example of the \eczoo[3D toric code]{3d_surface}. Single-shot codes contain an extensively-large number of redundancies between the measured stabiliser or gauge generators that are designed to be used by a decoder to correct for measurement errors in only a constant number of measurement rounds, not scaling with the size of the code.

\begin{figure}
    \centering
    \includegraphics[width=\linewidth]{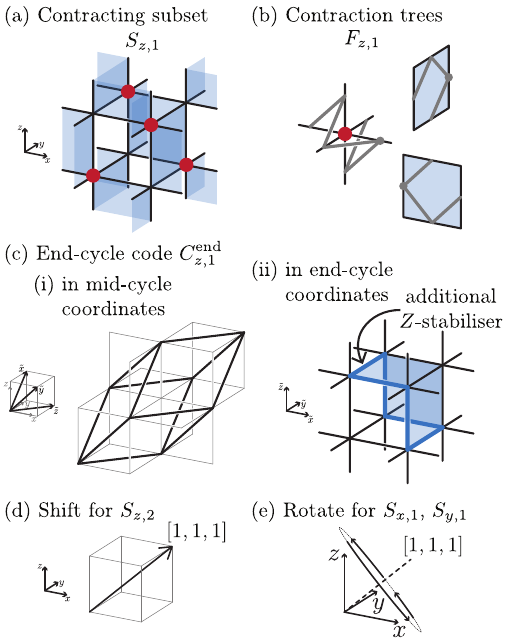}
    \caption{Morphing the 3D toric code, with $X$-checks on vertices, qubits on edges, $Z$-checks on faces, and $Z$-redundancies around each cube. (a) The contraction subset $S_{z,1}$, with contracting $X$-stabilisers highlighted in red, and contracting $Z$-stabilisers shaded in blue. Half of all $X$-stabilisers and half of the $Z$-stabilisers along $x$-oriented and $y$-oriented faces are contracted; none of the $z$-oriented $Z$-stabilisers are contracted. (b) The contraction trees that make up the depth-3 contraction circuit $F_{z,1}$, corresponding to the contracting stabilisers in $S_{z,1}$. (c) The end-cycle code $C^{\mathrm{end}}_{z,1}$ for this contraction circuit is again a 3D toric code on a cubic lattice but the lattice is transformed as compared to the mid-cycle lattice. In (i) we show the relationship between the mid-cycle cubic lattice (grey) and the end-cycle cubic lattice (black). Although the codespace of $C^{\mathrm{end}}_{z,1}$ is that of the 3D toric code, the end-cycle code does have one key difference: each cube now contains \textit{two} redundancies due to the presence of an additional weight-six $Z$-stabiliser that traverses each cube. This additional stabiliser is indicated in (ii) with six bold-blue edges representing the qubits it has support on. One of the two redundancies in this cube are given by the three shaded faces in (ii) along with the additional $Z$-stabiliser; the second redundancy contains the additional $Z$-stabiliser along with the other three unshaded faces of the cube. The full morphing circuit consists of six contraction circuits executed in the order $S_{x,1},S_{y,2},S_{z,1},S_{x,2},S_{y,1},S_{z,2}$, each of which can either be obtained by shifting the contraction subset/contracting circuit by the vector $[1,1,1]$ as shown in (d) and/or by performing a $2\pi/3$ rotation around the vector pointing in the $[1,1,1]$ direction that permutes the $x$, $y$ and $z$ coordinates as shown in (e). Both of these transformations leave the end-cycle code invariant (up to a shift in the qubits), so that all the $J=6$ end-cycle codes are related by a shift of qubits. With this, every $X$-check is measured every two contraction rounds, and every $Z$-check is measured two times every six contraction rounds.}
    \label{fig:3D_TC}
\end{figure}

 To define the 3D toric code, consider any 3D cubic lattice -- the lattice could have some periodic boundary conditions or it could formally exist on the infinite 3D plane $\mathbb{R}^{3}$ -- and place a qubit on every \textit{edge} of this lattice. 
 %Edges can be oriented along one of the $x$, $y$, or $z$ coordinate in the lattice. 
 $X$-stabilisers are located at every vertex of the lattice, with the stabiliser having support on the six qubits on the edges that touch the vertex. Likewise, $Z$-stabilisers are located at every face of the lattice, with each face oriented perpendicular to either the $x$, $y$ or $z$ axis. The six $Z$-stabilisers corresponding to the faces around any cube in the lattice form a redundancy, since their product is the identity, as was also shown in~\cref{fig:detector_figures}(b). This means that measurement errors on the $Z$-stabilisers can be detected by checking the product of $Z$-stabilisers around a cube in a single measurement round. In other words, the 3D toric code is single-shot with respect to measurement errors on the $Z$-stabilisers, although it is not single-shot with respect to measurement errors on the $X$-stabilisers.

The first challenge that arises in designing a morphing circuit for the 3D toric code is that there are significantly more stabiliser generators than there are qubits in the code (which is necessary to have the redundancies that allow for single-shot error correction). In codes without single-shot properties such as the 2D toric code, the number of stabiliser generators is approximately equal to the number of qubits. This makes it relatively natural to design a two-round morphing circuit by dividing the stabiliser generators into two subsets and by using half of the qubits of the mid-cycle code as ancilla qubits in each contraction round. Since the 3D toric code has more stabiliser generators than qubits, the usual approach of having half of the qubits as ancillas in each contraction round implies that we cannot contract half of the stabilisers in each round, necessitating a morphing circuit with $J>2$. On the other hand, if one wishes to design a morphing circuit with $J=2$, then this requires using more than half of the mid-cycle data qubits as ancillas, reducing the number of end-cycle data qubits and -- one would expect -- the end-cycle distance.

In~\cref{fig:3D_TC}(a--b) we show a natural example of a contraction circuit for the 3D toric code that follows the usual approach in which half of the qubits in the mid-cycle code are ancilla qubits in the end-cycle code. The contracting subset that we label $S_{z,1}$ contains exactly half of the $X$-stabilisers but only a third of the $Z$-stabilisers. In particular, none of the $Z$-stabilisers that are on faces oriented in the $z$-direction are contracted, and half of the $x$- and $y$-oriented $Z$-stabilisers are contracted. We can contract the remaining $X$-stabilisers and $x$- and $y$-oriented $Z$-stabilisers using a shifted contraction circuit $S_{z,2}$ as shown in~\cref{fig:3D_TC}(d), but this leaves all of the $z$-oriented $Z$-stabilisers still to be contracted. To obtain contracting circuits that contract the $z$-oriented $Z$-stabilisers, we perform a rotation of the contraction circuit as shown in~\cref{fig:3D_TC}(e), giving the contracting subsets $S_{x,1}$ and $S_{y,1}$ along with their shifted counterparts $S_{x,2}$ and $S_{y,2}$. Here, the $x$, $y$, and $z$ subscripts refer to the orientation of the $Z$-stabilisers that are \textit{not} contracted. This gives rise to a \textit{six}-round ($J=6$) morphing circuit, but where every stabiliser is contracted more than once in the morphing circuit: each $X$-stabiliser is measured three times (either in $S_{x,1}$, $S_{y,1}$ and $S_{z,1}$; or in $S_{x,2}$, $S_{y,2}$ and $S_{z,2}$) and each $Z$-stabiliser is measured two times (for example, an $x$-oriented $Z$-stabiliser will be measured either in both $S_{y,1}$ and $S_{z,1}$, or in both $S_{y,2}$ and $S_{z,2}$).

One of the consequences of this choice of contraction subsets is that the duration of the regular and meta-check detectors varies for different detectors in the circuit. For the regular $X$-detectors, it is straightforward: every $X$-stabiliser is contracted every second contraction round and therefore the duration of each regular $X$-detector is two. On the other hand, the regular $Z$-detectors $\mathrm{regdet}_{t}(s)$ of a given $Z$-stabiliser $s$ alternate between duration two and four due to the order in which the $Z$-stabilisers are contracted.

Unfortunately, our choice of morphing circuit here actually \textit{increases} the connectivity requirements compared to a bare-ancilla circuit: each qubit on a given edge needs a connection to every adjacent qubit on an edge with a different orientation, meaning the connectivity graph has constant degree eight. This contrasts to the bare-ancilla circuit, in which each data qubit and $X$-ancilla has degree six, while each $Z$-ancilla has degree four. This particular morphing circuit is therefore unlikely to be of practical use in superconducting devices, although it may still be of interest in mobile qubit platforms.

In~\cref{fig:3D_TC}(c) we show the end-cycle code that corresponds to this morphing circuit, which is itself a 3D toric code on a cubic lattice which is different from the mid-cycle cubic lattice. Importantly, the rotation in~\cref{fig:3D_TC}(e) is a symmetry of the end-cycle lattice, which means that all six end-cycle codes are the same up to a shift in qubits. One can therefore use the methodology developed in~\cref{subsec:optimising_boundaries} to optimise the periodic boundaries of the 3D toric code for the end-cycle distance; for example, one could choose the periodic boundary conditions to be given by taking the quotient of $\mathbb{R}^{3}$ by a face-centred cubic lattice to improve the code parameters.

One perhaps counterintuitive property of the 3D toric code morphing circuit is that each redundancy in the mid-cycle code gives rise to one redundancy of the end-cycle code, even though the end-cycle code has only half the number of physical qubits of the mid-cycle code. This necessarily means that the ``density'' of redundancies in the end-cycle codes must be double that of the regular 3D toric code. This manifests itself in the code $C^{\mathrm{end}}$ by the presence of an additional $Z$-stabiliser generator within each cube, as shown in~\cref{fig:3D_TC}(c)(ii). This means now that each cube hosts \textit{two} redundancies rather than one. This behaviour is typical for morphing circuits as we explained in~\cref{subsec:two_round_morphing_redundancies}.

\subsection{Morphing Circuits Preserve Single-Shot Error Correction}\label{subsec:single_shot}

Given the morphing circuit for the 3D toric code in \cref{subsec:3D_TC}, it would be useful to explicitly show that it retains its single-shot error correction properties for the $Z$-stabiliser measurements, in particular show that $d_{\mathrm{stab}}(T)=\Omega(d)$ even when $T=O(1)$ for a relevant stability experiment. For this to make sense, we need a slight generalisation of the definition of a stability experiment that we gave in~\cref{subsec:time-like_distance} to allow for the ``local'' redundancies of the 3D toric code (around each cube) to remain as detectors so they can be used in decoding, while using the ``non-local'' redundancies of the code (sheets of faces of $Z$-stabilisers) as logical observables.

We attempted to demonstrate this numerically, however executing a numerical optimisation to determine $d_{\mathrm{stab}}(T)$ exactly for non-trivial lattice sizes took an inconclusively-long time. Using BP-OSD to upper-bound $d_{\mathrm{stab}}(T)$ following Ref.~\cite{Bravyi24} gave upper-bounds higher than an explicit circuit-level logical error that we were able to construct by hand.

Despite our lack of numerical evidence, we can state a general result which shows that single-shot codes remain single-shot under a morphing circuit. In particular, let us consider a phenomenological noise model as described in \cref{subsubsec:morphing_decoding}. Our aim in the proposition below is to show that the circuit-level stability distance of a morphing circuit $d_{\mathrm{stab}}^{\mathrm{morph}}(T)$ is the same as that of the phenomenological noise model $d_{\mathrm{stab}}^{\mathrm{phen}}(T)$ up to a constant factor. Since single-shotness means that this circuit-level stability distance scales with the code distance $d$ when $T=O(1)$, Prop.~\ref{prop:single_shot_time_like_distance} guarantees that if a code is single-shot under a phenomenological noise model, it is also single-shot in a morphing circuit.
\begin{proposition}\label{prop:single_shot_time_like_distance}
    For a quantum LDPC mid-cycle code and corresponding morphing circuit, if $J$ is constant and each contraction circuit has constant depth, then $d_{\mathrm{stab}}^{\mathrm{morph}}(T)/d_{\mathrm{stab}}^{\mathrm{phen}}(T)=\Omega(1)$, where the constant $\Omega(1)$ is with respect to $T$ and the distance $d$ of the code $C$, and for simplicity $T$ is an integer multiple of $J$. 
\end{proposition}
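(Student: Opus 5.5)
The plan is to reduce the morphing stability experiment to the phenomenological one through an explicit error-mapping, in the same spirit as Prop.~\ref{prop:decode}. I would first fix a correspondence between the detectors and observables of the two experiments. Each regular detector $\mathrm{regdet}_{t}(s)$ of the morphing circuit corresponds to the phenomenological detector that compares two consecutive measurements of the mid-cycle generator $s$. Each meta-check detector, and each meta-check observable $\mathrm{obs}_{t}(r)$, corresponds to the phenomenological meta-check or observable of the same redundancy $r$, taken at the matching time block. Since $T$ is a multiple of $J$, we group every $J$ consecutive morphing measurement rounds into one phenomenological round. Each stabiliser in $S$ is measured at least once per block, which fixes the time-labelling unambiguously.

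The key step is to push every single circuit-level fault forward or backward to the mid-cycle code, as in \cref{fig:circuit_level_errors}. This applies to Pauli errors after gates, idling errors, and reset and measurement flips. Because each $F_j$ has constant depth and $C$ is LDPC, a single fault propagates to a mid-cycle Pauli error of constant weight. The syndrome of the fault on the regular and meta-check detectors is then reproduced by a constant-size set of phenomenological faults. The data-qubit part comes from the propagated Pauli error at the adjacent mid-cycle time step $t\pm 1/2$. The remaining part comes from measurement errors on those contracting stabilisers $s\in S_{t\bmod J}$ whose contracted single-qubit operator anticommutes with the residual error at the measurement or reset location. This is exactly the construction in Prop.~\ref{prop:decode}. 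Here we must also check that the flip of a meta-check \emph{observable} is preserved, i.e.\ that the logical action agrees. The observing region of $\mathrm{obs}_t(r)$ has mid-cycle time-slices given by \cref{eq:detector_mid_cycle_time_slice}. A fault lies in it if and only if its mid-cycle image anticommutes with the corresponding partial product of stabilisers in $r$. That condition is the same as saying the phenomenological image flips the phenomenological observable, possibly up to a regular-detector equivalence via \cref{eq:meta_check_detector_relation}.

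With this map in hand, take a minimum-weight undetectable logical error $E$ of the morphing stability experiment, of weight $d^{\mathrm{morph}}_{\mathrm{stab}}(T)$. Its image is a phenomenological error that triggers no detectors and flips an observable. Its weight is at most $c\,|E|$, where $c$ depends only on $J$, the depth of the $F_j$, and the LDPC constants. This gives $d^{\mathrm{phen}}_{\mathrm{stab}}(T)\le c\, d^{\mathrm{morph}}_{\mathrm{stab}}(T)$, which is the claimed $\Omega(1)$ ratio. I would remark that the reverse inequality, up to constants, follows by realising each phenomenological fault as a constant number of circuit faults, though it is not needed.

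The main obstacle is the boundary and time-alignment bookkeeping. A morphing detector of duration up to $J$ may straddle two phenomenological blocks. Stabilisers measured more than once per cycle, or in $S_j\cap S_{j'}$, also complicate the matching. Near $t=0$ and $t=T$, the open time boundaries remove detectors, so a mapped fault could land on a missing phenomenological detector. I would handle this by choosing the block boundaries to align with the first contraction of each stabiliser. I would also show that any surplus flips cancel in pairs or are absorbed into measurement errors within the same block, losing at most an additive constant. That additive constant is harmless because $d^{\mathrm{phen}}_{\mathrm{stab}}$ is at least $1$ for nontrivial experiments.
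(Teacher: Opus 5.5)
Your fault-by-fault reduction matches the paper's (Prop.~\ref{prop:decode} applied to each error mechanism). However, the conclusion does not follow from it as written. You group $J$ consecutive morphing rounds into one phenomenological round, so the image of a minimum-weight morphing logical error is an undetectable logical error of a phenomenological stability experiment with only $T/J$ rounds. What you have actually proved is $d^{\mathrm{phen}}_{\mathrm{stab}}(T/J)\le c\,d^{\mathrm{morph}}_{\mathrm{stab}}(T)$. The proposition concerns the depth-$T$ phenomenological experiment, and $d^{\mathrm{phen}}_{\mathrm{stab}}$ grows with depth (for the 2D toric code it equals the number of rounds). You therefore still need an upper bound $d^{\mathrm{phen}}_{\mathrm{stab}}(T)\le c'\,d^{\mathrm{phen}}_{\mathrm{stab}}(T/J)$, with $c'$ depending only on $J$. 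This is where the assumption that $J$ divides $T$ is actually used.

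The paper supplies this step by copying the measurement errors in the last round of the shortened phenomenological error into every remaining round of the depth-$T$ experiment. A version with an explicit constant is to dilate in time. Replace each short round $k$ by $J$ consecutive rounds that all carry round $k$'s measurement-error pattern, and place round $k$'s data errors only before the first copy. Then:
\begin{itemize}
\item detectors inside a block compare identical flipped outcomes with no intervening data errors;
\item detectors between blocks see exactly what they saw in the short experiment;
\item every round still satisfies the meta-checks and flips the observable;
\item the weight grows by at most a factor $J$.
\end{itemize}

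A smaller point concerns generators contracted more than once per cycle. You only say that surplus flips ``cancel in pairs or are absorbed'', which is not an argument. Prop.~\ref{prop:decode} is only established when each generator is contracted exactly once per $J$ rounds. The paper's fix is to delete the extra measurements and resets. This gives a circuit $\tilde{\mathcal{C}}_{\mathrm{morph}}$ with the same fault locations, the same meta-check observables, and a subset of the detectors. Discarding detectors only weakens the constraints, so your minimum-weight error is still an undetectable logical error of $\tilde{\mathcal{C}}_{\mathrm{morph}}$. The detector map $f$ then becomes a bijection onto the $T/J$-round phenomenological experiment, and Prop.~\ref{prop:decode} applies verbatim. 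Your idea of designating one contraction per stabiliser per block amounts to the same thing, but you should phrase it as dropping detectors rather than as a cancellation argument.
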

\begin{proof}[Proof Sketch]
    The result follows relatively straightforwardly from Prop.~\ref{prop:decode}, since this shows that every error in a morphing circuit can be written as a constant number of errors in a phenomenological circuit; we prove the details in Appendix~\ref{sec:proofs}.
\end{proof}

% As an explicit example of how single-shot error-correction manifests itself in morphing circuits, we briefly walk through the example of the 3D toric code to highlight the opportunities and challenges that can arise. A natural approach to designing a morphing circuit that we follow here is to use half of the qubits as ancillas to measure the contracting stabilisers, leaving the other half as data qubits in the end-cycle code. However, because of the presence of redundancies between the $Z$-checks of the 3D toric code, there are significantly more stabilisers in the standard over-complete generating set than there are qubits. This means that following this natural approach necessarily cannot measure half of the stabilisers in one contraction round, necessitating a $J>2$ morphing circuit.

\section{Morphing Circuits under Biased Measurement versus Reset Noise}\label{sec:measurement_reset_bias}

In~\cref{subsec:two_round_morphing_detectors} and~\cref{sec:optimising_time_like} we explained how measurement errors can trigger more than two detectors. In this section we investigate what effect this has on the performance of stability experiments, in the simple case where there is no single-shot error-correction and where the morphing circuit has two rounds. In this setting, we already know that the circuit-level stability distance satisfies $d_{\rm stab}(T)=T$ from~\cref{eq:J=2_time-like_distance}, the same as for normal bare-ancilla circuits.

However, we will show that there is a practical benefit of morphing circuits that can arise when there is a large noise bias between the measurement and reset error rates in an experiment. We show this numerically for the toric and toric colour codes.
In the colour code, this advantage can be achieved without increasing the depth of the syndrome extraction circuit and is therefore a win--win, while in the toric code a deeper circuit is required meaning that it is only advantageous at relatively large noise bias. The idea is that if these circuits provide more protection against errors affecting stabiliser measurement outcomes, then switching to these circuits during lattice surgery operations could be advantageous.

\begin{table}[]
    \centering
    \begin{tabular}{c|c|c}
        Noise location & Noise Type & Error Prob. \\
        \hline
        Idling & 1-qubit Depol. & $p/10$\\
        1-qubit Gates & 1-qubit Depol. & $p/10$\\
        2-qubit Gates & 2-qubit Depol. & $p$\\
        Measurements & Meas. Error & $5p\times\eta$\\
        Resets & Reset Error & $2p$\\
        Idling during M/R & 1-qubit Depol. & $2p$
    \end{tabular}
    \caption{The noise model used in our numerics. Here, the error probability of a depolarising channel refers to the probability that \textit{any} Pauli error occurs, while the error probability of a measurement or reset refers to the probability that the wrong measurement outcome is recorded, or that the wrong basis state is initialised (for example, preparing $\ket{1}$ instead of $\ket{0}$). Note that noise from ``idling'' and from ``idling during measurements/resets'' stack, so that qubits that are idling while at least one other qubit is being measured or reset experience two single-qubit depolarising channels. This is the same as the SI1000 noise model from Ref.~\cite{gidney2021fault}, except for two things. First, for convenience, we use measurements/resets in both the $X$ and $Z$ basis along with CNOT gates, rather than compiling to $Z$-basis measurements/resets and CZ gates. And second, we have added an additional noise parameter $\eta$ that allows us to independently scale the measurement-error rate. Note that setting $\eta=1$ recovers the original SI1000 noise model, while $\eta=0.4$ corresponds to the measurement and reset error rates being equal to each other.}
    \label{tab:noise_model}
\end{table}

We are interested here in an improvement in the logical error rate of a stability experiment under a specific noise model. Namely, the noise model that we use is the SI1000 noise model~\cite{McEwen23}, but with a modified measurement error rate to allow for variable levels of measurement versus reset noise bias, as shown in~\cref{tab:noise_model}. In particular, the measurement error rate in the SI1000 noise model is multiplied by a factor $\eta$, where $\eta=0.4$ represents the point where the measurement and reset error rates are equal.

In the SI1000 noise model, measurement and reset operations are significantly noisier than CNOT gates, so here we look for ways to make the circuit more resilient against measurement/reset errors even if it does not improve its resilience against CNOT gate errors. In particular, we design circuits that have a high measurement-error-only stability distance, which we define as the circuit-level distance of a stability experiment under a noise model that \textit{only} contains measurement errors. Such a circuit will therefore perform well under the full circuit-level noise model in~\cref{tab:noise_model} in the regime of large $\eta$. To achieve this, we want measurement errors to trigger more than two defects each, so that there are no purely-measurement string-like errors that can cause a logical error in a stability experiment -- in this case, we say that the circuit has a ``non-string-like measurement structure''. Note that in bare-ancilla circuits, because each detector only contains two measurements, each measurement error only triggers two detectors and therefore the circuit has a string-like measurement structure.

To see why this might be possible, recall that the circuit-level stability distance of a morphing circuit was upper-bounded in the proof of~Prop.~\ref{prop:time_like_upper_bound} by an error string that consists of {\em alternating} measurement and reset errors on the same qubit. This bound therefore does not apply when there are no reset errors, because measurement and reset errors can cause different syndromes in morphing circuits. Moreover, we explained in~\cref{subsec:two_round_morphing_detectors} that measurement and reset errors trigger different syndromes whenever a measurement/reset error triggers more than two defects.
%This is important because in morphing circuits, measurement and reset errors can cause different syndromes, and therefore there may not be a corresponding minimum-weight stability error that contains only measurement errors. 
In contrast, in normal bare-ancilla syndrome extraction circuits every measurement error necessarily has a corresponding reset error (at an earlier time) that causes the same syndrome. 

\begin{figure*}
    \centering
    \includegraphics[width=\linewidth]{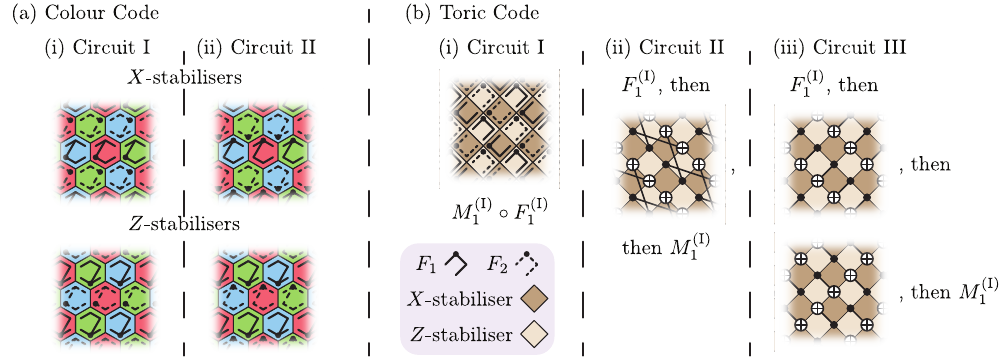}
    \caption{The circuits used to investigate the effect of string-like measurement structure on stability experiments. (a) The two purely contracting homomorphism-based morphing circuits for the colour code represented as contraction trees, where circuit I has a string-like measurement structure, but circuit II does not because each measurement error triggers three different detectors: one in a given time-step and two in the following time-step. Because circuits I and II differ only in the direction of their final CNOT gate, the resulting end-cycle codes are identical (up to a permutation of qubits), in this case they are both the weight-6 end-cycle code from~\cref{fig:colour_code_end_cycle}(b). (b) In the toric code, circuit I is the standard hex-grid toric code, and we explicitly show the contraction tree for $F_{1}^{\mathrm{(I)}}$ and the corresponding measurements $M_{1}^{\mathrm{(I)}}$ for clarity. Circuits II and III are not purely contracting (see~\cref{subsec:purely_contracting}) and therefore cannot be represented by contraction tree diagrams: their respective contraction circuits $F_{1}^{\mathrm{(II)}}$ and $F_{1}^{\mathrm{(III)}}$ \textit{begin} with $F_{1}^{\mathrm{(I)}}$, but are then followed by additional CNOT gates that cause the measurement structure to be non-string-like. In particular, measurement errors now trigger one detector in one time-step and three in the following. This allows the measurement-error-only stability distance to be much larger than the circuit-level stability distance, improving the performance in the regime of large $\eta$.}
    \label{fig:scrambled_circuits}
\end{figure*}

\begin{figure}
    \centering
    \includegraphics[width = \linewidth]{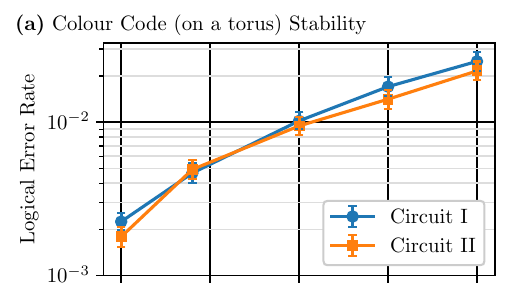}\\[- 0.35 cm]
    \includegraphics[width = \linewidth]{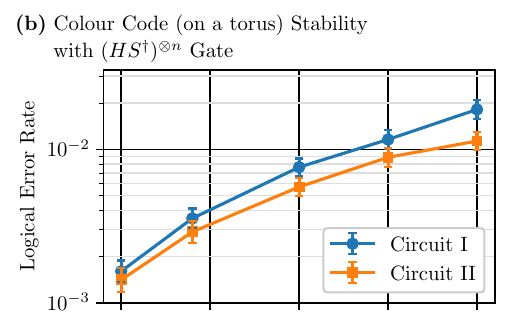}\\[- 0.35 cm]
    \includegraphics[width = \linewidth]{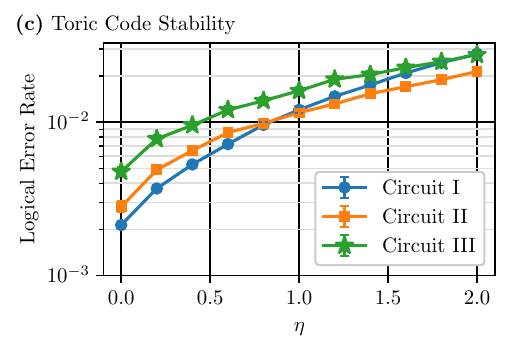}
    \caption{Effect of string-like and non-string-like measurement structure on the logical performance of stability morphing experiments in the colour (a--b) and toric (c) codes. All plots share the same $x$-axis and are evaluated with the noise model in~\cref{tab:noise_model} with a fixed $p=0.002$. Note that $\eta=0.4$ corresponds to equal measurement and reset error rates. The error bars represent 99\% confidence intervals and are sometimes smaller than the corresponding data point markers. In (a--b) we use a $d=4$ colour code on a torus where all four redundancies correspond to observables in a $T=5$ stability experiment, decoded by a minimum-weight decoder using Gurobi~\cite{Gurobi}. The morphing circuits are shown in~\cref{fig:scrambled_circuits}(a). In (b), a transversal $(HS^{\dag})^{\otimes n}$ gate is performed at each mid-cycle time-step to again improve the performance of the stability experiment. In (c), we perform the same experiment for the $d=4$ toric code in a $T=5$ stability experiment which now has two logical observables corresponding to the two redundancies in the toric code, decoded using Belief Matching~\cite{higgott2023improved}. The morphing circuits are given in~\cref{fig:scrambled_circuits}(b).}
    \label{fig:scrambled_plots}
\end{figure}

Our goal is to investigate what, if any, effect having a non-string-like measurement structure has on the performance of stability experiments in two-round morphing circuits. The simplest example we found to investigate this effect is in the weight-6 morphing circuit for the hexagonal lattice colour code on a torus, as shown in~\cref{fig:colour_code_end_cycle}(b). This is because the time-like structure of the circuit can be changed simply by reversing the direction of the final CNOT gate, as shown in~\cref{fig:scrambled_circuits}(a). We use the colour code on a torus instead of the triangular colour code both because it has four redundancies for use as observables in the stability experiment, and to avoid dealing with the complications of adding planar boundaries. In~\cref{fig:scrambled_plots}(a) we numerically compare the logical error rate of a stability experiment for a $d=4$ colour code on a torus using the two circuits in~\cref{fig:scrambled_circuits}(a), using a minimum-weight decoder. Note that for all circuits, the error rate increases with $\eta$ because $\eta$ increases the amount of measurement noise without changing the probability of any other types of noise. We see that for $\eta=0.4$, where the measurement and reset error rates are equal, the logical error rate of both circuits is almost equal. For $\eta\neq 0.4$, the two circuits have logical error rates that fall within the 99\% confidence intervals, but in each case the non-string-like variant slightly outperforms the string-like variant. Note, however, that because this modification fixes the direction of the last CNOT gate in each contraction circuit, this idea cannot be executed at the same time as the swapping of data and ancilla qubits discussed in \cref{subsec:data_ancilla} in the weight-6 colour code morphing circuit\footnote{Although, note that in the weight-7 morphing circuit in \cref{fig:colour_code_end_cycle}(c), both orientations of the final CNOT gate have a non-string-like measurement structure, allowing one to swap data and ancilla qubits while retaining the advantages of the non-string-like measurement structure.}.

We observed a more convincing advantage in~\cref{fig:scrambled_plots}(b), where a transversal $(HS^{\dag})^{\otimes n}$ gate is performed during each mid-cycle code\footnote{Note that the $(HS^{\dag})^{\otimes n}$ gate has the logical effect of permuting the Pauli operators $\overline{X}_{i}\mapsto\overline{X}_{i}\overline{Z}_{i\pm 2}\mapsto\overline{Z}_{i\pm 2}\mapsto\overline{X}_{i}$, for $i=1,2,3,4$.}. The net effect of this modification is that the $X$, $Z$ \textit{and} $Y$ stabilisers are measured in consecutive rounds, rather than alternating between just $X$ and $Z$ stabiliser measurements. 
This improves the performance of both circuits I and II, because detectors can be constructed by comparing, for example, the measured outcome of a $Y$ stabiliser with the parity of the $X$ and $Z$ stabilisers that were measured in the two previous contracting circuits. Therefore, measurement errors appear in more detectors, improving the performance against measurement errors. As expected, both circuits with the transversal gate outperform both circuits without the transversal gate, but here the non-string-like circuit has a lower logical error rate than the string-like circuit falling well outside the 99\% confidence intervals of the experiment.

The situation is more subtle in the toric code because it's difficult to achieve a non-string-like measurement structure with a purely contracting circuit. Instead, we use additional CNOT gates after the standard contraction circuit that commute with the contracting stabilisers, but not with the expanding stabilisers, thereby achieving a non-string-like measurement structure. We show two possible circuits that achieve this in~\cref{fig:scrambled_circuits}(b), where Circuit I is the standard hex-grid toric code with a string-like measurement structure, Circuit II has a non-string-like measurement structure but a depth-3 contraction circuit with non-nearest-neighbour connectivity, and Circuit III has a non-string-like measurement structure \textit{and} hex-grid connectivity, but a depth-4 contraction circuit. The total number of CNOT gate layers to go from one end-cycle code to the next is therefore 4 for Circuit I, 6 for Circuit II, and 8 for Circuit III.

One can understand circuits II and III as containing additional gates that commute with the contracting stabilisers but ``spread out'' the effect of measurement errors so that they effectively look like a measurement error \textit{plus} an end-cycle data qubit error. The increase in depth of the contraction circuits in circuits II and III adds extra noise compared to Circuit I, but in~\cref{fig:scrambled_plots}(c) we find numerically that for sufficiently large $\eta$ the advantage of the non-string-like measurement structure overcomes the disadvantage of having a deeper circuit. Remarkably, the cross-over point for Circuits I and II already occurs at $\eta=1$ which is the standard SI1000 noise model, although Circuit II requires a non-planar connectivity. The crossover between Circuits I and III -- both of which only require hex-grid connectivity -- occurs at roughly $\eta=2$. Both of these circuits may therefore be advantageous to run during lattice surgery operations on devices with large enough measurement--reset bias.

\section{Discussion}\label{sec:discussion}

%In this work, we have performed an extensive exploration of morphing circuits to demonstrate that they provide an attractive approach to performing quantum error correction for a wide range of QEC codes and across various physical platforms. We did this in multiple ways ways:\begin{itemize}   \item by generalising known advantages of morphing circuits beyond the specific examples that exist in the literature,  \item by designing techniques to improve the fault-tolerance of morphing circuits, and    \item by uncovering previously-unexplored advantages of morphing circuits.\end{itemize}

Here are a number of open problems relating to morphing circuits that could make for interesting future work. Throughout this work we only proved our formal results (propositions) for stabiliser codes; however, many of them likely generalise to subsystem and Floquet codes with appropriate generalisations. Moreover, while we provided definitions for the mid- and end-cycle codes of morphing circuits for subsystem and Floquet codes that are sufficient for our analysis of the Bacon-Shor code, we did not explore the consequences of this definition more broadly -- which could lead to morphing circuits with a more complicated structure than is sketched in~\cref{fig:overview}.

In \cref{sec:optimising_time_like}, we assumed that the morphing circuit was executed with unconditional resets, but it may be advantageous in some experimental platforms not to reset after a projective measurement as was considered for example in Ref.~\cite{geher2025reset}. In this situation, so-called ``classical'' measurement errors halve the circuit-level stability distance upper-bound of the morphing circuit in~Prop.~\ref{prop:time_like_upper_bound}, because a single classical measurement error has the same effect as a measurement \textit{and} a reset error combined. Would this trade-off be worth it in morphing circuits?

In \cref{sec:measurement_reset_bias}, we explored how a ``non-string-like measurement structure'' can improve the performance of a morphing circuit in stability experiments with a biased reset and measurement error rate. However, morphing circuits are not the only circuits that can have a non-string-like measurement structure: for example, some measurement errors in superdense circuits also trigger more than two detectors, and could possibly also have a similar advantage under biased noise models.

While we showed in Ref.~\cite{ST:morphing} how one can measure Pauli operations for BB codes using ancillary system using the morphing design principle following Ref.~\cite{Cohen22}, it remains an open question how to adapt the more ancilla-overhead efficient methods (see Ref.~\cite{he2025} and references therein) underpinning the Pauli-based computation in \cite{yoder2025, webster2026} with the morphing design principle. The issue is that in the mid-cycle codes which occur in these logical Pauli measurements, the qubits and stabilisers are no longer labelled by group elements, making a numerical search for a fault-tolerant reduced-connectivity two-round morphing circuit more difficult.

Related to this, a ``holy grail'' of morphing circuit design would be to develop an automated algorithm that searches through morphing circuits for a given mid-cycle code. Ref.~\cite{wolanski2026automated} already presents an algorithm to do this that takes the connectivity graph of a device as input, but ultimately we really would like to search through morphing circuits with different connectivity to determine what device connectivity is necessary. The formalism to do this already exists -- the problem simply amounts to searching through valid contraction tree diagrams -- but the challenge here is that the set of possible contraction tree diagrams is extremely large. One therefore would need to define a search space that is not too large, and yet also yields enough valid contraction tree diagrams over which to search.

Finally, we spent a lot of time optimising colour code boundaries by hand in \cref{subsec:colour_code} with the aim of improving the end-cycle and the circuit-level distance. One therefore wonders if there is an automated way of designing such boundaries for morphing circuits which could be more efficient than our heuristic approaches here. 
Recent examples of works using automation in the design of syndrome extraction circuits are \cite{strikis2026h, liu2026alphasyndrome,viszlai2026}: however the works miss the design principle of morphing circuits of which we have demonstrated its fruitful use in this paper. In the final stages of preparing this manuscript, we had a new idea to improve the circuit-level distance of the triangular weight-6 morphing circuit that we are now actively investigating -- please contact the authors for further information.

%\section{Data and Software Availability} XXX

\section{Acknowledgements}
This work is supported by QuTech NWO funding 2020-2028 – Part I “Fundamental Research”, project number 601.QT.001-1, financed by the Dutch Research Council (NWO). We also thank the OpenSuperQPlus100 project (no. 101113946) of the EU Flagship on Quantum Technology (HORIZON- CL4-2022-QUANTUM-01-SGA) for support.
We acknowledge the use of the DelftBlue supercomputer for running the decoding.
We thank Ted Yoder, Stasiu Wolanski, Mark Turner, Marc Serra Peralta, Matt McEwen, Andrey Khesin, Oscar Higgott, Ioana-Lisandra Draganescu, Ophelia Crawford, Sean Camps, Joan Camps (no relation), Benjamin Brown, and Asmae Benhemou for various insightful and interesting discussions throughout this project. No LLMs happen to be used in the research, the numerics nor in the writing of this paper.

\appendix

\section{Time-Reversal of Normal Circuits}\label{app:time_reversal}

Here we briefly explain why the time-reversed version of a normal syndrome extraction circuit is itself also a normal circuit. The most convenient way to see this is to consider the circuit stabilisers of the normal circuit: these are the Pauli operators placed on the input and/or output qubits that have no effect on the circuit (except for possibly an overall $-1$ phase).

The definition of normal circuits in~\cref{subsec:from_normal_circuits} is actually enough to specify the circuit stabilisers:
\begin{enumerate}
    \item the stabilisers of the code $C$ on the inputs of the normal circuit (these do not affect the circuit except for a $\pm 1$ phase because these operators are measured by the normal circuit),
    \item the stabilisers of the code $C$ on the outputs of the normal circuit (these do not affect the circuit except for a $\pm 1$ phase because the output of a normal circuit is in a $\pm 1$ eigenstate of each of these operators), and
    \item the logical operators of the code $C$ placed \textit{both} on the inputs \textit{and} the outputs of the circuit.
\end{enumerate}
Importantly, note that this set of circuit stabilisers is invariant under swapping the input and outputs of the normal circuit. Therefore, if we run the normal circuit in reverse order, it does not change the circuit stabilisers, and the resulting reversed circuit is also a normal circuit.

\section{Propositions and Their Proofs}\label{sec:proofs}

In this appendix we collect the propositions from the main text that have not yet been proved and provide these proofs that have been deferred. The order in which we prove the propositions here is not the same as in the main text, since some of the propositions are closely related to each other. For convenience, we restate each proposition here and provide a hyperlink to the original location in the main text. 
%\mhs{add a minimal amount of linking text between these propositions.}

\begin{taggedproposition}{\ref{prop:two_round_circuit_distance}}
    For any two-round morphing circuit for a stabiliser code, we have
    \begin{equation}
        d_{\mathrm{circ}}(T)= d_{\mathrm{circ}}(1).\tag{\ref{eq:back_and_forth_fault_tolerance}}
    \end{equation}
\end{taggedproposition}
\begin{figure}
    \centering
    \includegraphics[width=\linewidth]{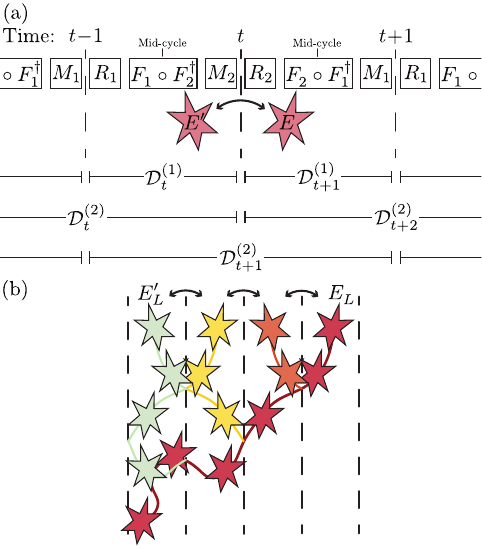}
    \caption{(a) The time-reversal symmetry in a two-round morphing circuit. An error $E$ in a morphing circuit between times $t$ and $t+1$ can be reflected around the axis of symmetry at time $t$ to give an error $E'$ that triggers the same regular detectors in the set $\mathcal{D}_{t+1}^{(2)}$.(b) An arbitrary undetectable circuit-level logical error $E_{L}$ (dark red) spanning some number of measurement rounds (in this example, four) can always be ``folded up'' into an undetectable circuit-level logical error $E_{L}'$ (light green) that only spans one measurement round by repeatedly applying the time-reversal symmetry shown in (a). The resulting one-measurement-round logical error $E_{L}'$ necessarily has weight less than or equal to the original logical error.}
    \label{fig:time_reversal}
\end{figure}
\begin{proof}
    Every undetectable circuit-level logical error that can occur in a one-round memory experiment can also occur in a $T$-round memory experiment, so $d_{\mathrm{circ}}(1)\geq d_{\mathrm{circ}}(T)$. It therefore remains to show that $d_{\mathrm{circ}}(T)\geq d_{\mathrm{circ}}(1)$, which we will do by showing that, given a weight-$w$ undetectable circuit-level logical error $E_L$ in the depth-$T$ memory experiment, we can construct an undetectable circuit-level logical error $E_{L}'$ in the depth-one memory experiment with weight $w'\leq w$.
    
    Before we do that, however, let us establish some preliminary facts. As explained in \cref{subsec:morphing_detectors}, in any morphing circuit we can define a generating set of detectors that take one of the two following forms: either they are \textit{regular} detectors $\mathrm{regdet}_{t}(s)$ for some mid-cycle stabiliser $s$, or they are \textit{meta-check} detectors $\mathrm{metadet}_{t}(r)$ for some mid-cycle redundancy $r$. Let us sort these detectors by their duration and the time in which they occur in the two-round morphing circuit, writing $\mathcal{D}_{t}^{(2)}$ for detectors with a duration of two measurement rounds and containing measurements at times $t$ and $t-1$, and $\mathcal{D}_{t}^{(1)}$ for detectors with a duration of one measurement round and containing measurements at time $t$. $\mathcal{D}_{t}^{(1)}$ contains all the meta-check detectors $\mathrm{metadet}_{t}(r)$ as well as any regular detectors $\mathrm{regdet}_{t}(s)$ for stabilisers $s$ that are contracting in both contraction rounds (this can arise, for example, in alternating normal circuits with flag qubits), while $\mathcal{D}_{t}^{(2)}$ contains the rest of the regular detectors $\mathrm{regdet}_{t}(s)$.
    
    Now, consider a weight-one circuit-level error $E$ that occurs during the morphing circuit between times $t$ and $t+1$ that triggers some set of detectors, as shown in \cref{fig:time_reversal}(a). The detectors that can be triggered by $E$ only come from three subsets: $\mathcal{D}_{t+1}^{(1)}$, $\mathcal{D}_{t+2}^{(2)}$, and $\mathcal{D}_{t+1}^{(2)}$.

    One important property of a two-round morphing circuit is that it has ``time-reversal'' symmetry, as shown in \cref{fig:time_reversal}(a). In particular, we can always find an error $E'$ that occurs at the ``same'' place in the morphing circuit but between times $t-1$ and $t$. For example, a measurement error occurring in $M_1$ just before time $t+1$, is flipped over to a reset error in $R_1$ right after time $t-1$. Because the morphing circuit between times $t-1$ and $t$ is reversed compared to the circuit from $t$ to $t+1$, the set of detectors triggered by $E'$ are a ``time-reversal'' of the detectors triggered by $E$, where most importantly, the regular detectors $\mathcal{D}_{t+1}^{(2)}$ that are triggered by $E$ and by $E'$ are \textit{exactly} the same. In addition to this, the other detectors that are triggered by $E'$ are the same as those triggered by $E$ \textit{except} that they occur at a different time\footnote{Note that this time-reversal symmetry only works for a memory experiment as described here and not for stability experiments, because in the first and last rounds of stability experiments some detectors are missing.}. Finally, the errors $E$ and $E'$ flip the same observables: propagating each error to the nearest mid-cycle code $C$ gives rise to the same (possibly multi-qubit) mid-cycle error and therefore flips the same logical operators.

    Now we can prove that, given any undetectable circuit-level logical error $E_L$ that contains errors in between times $t_{1}$ and $t_{2}+1$, we can always construct another undetectable circuit-level logical error with equal or lower weight that only contains errors between $t_{1}$ and $t_{2}$. We do this by taking each of the errors in $E_L$ that occur between time $t_{2}$ and $t_2+1$ and replacing it with its time-reversed counterpart before time $t_{2}$ as shown in \cref{fig:time_reversal}(b). This time-reversal does not change which detectors in the set $\mathcal{D}_{t_2+1}^{(2)}$ are triggered, and if the error $E_L$ triggered no detector in $\mathcal{D}_{t_2+1}^{(1)}$ and $\mathcal{D}_{t_2+2}^{(2)}$, then the time-reversed error triggers also no detector in $\mathcal{D}_{t_2}^{(1)}$ and $\mathcal{D}_{t_2+1}^{(2)}$, hence the new error is necessarily still undetectable. Furthermore, its logical effect is the same as $E_L$. Moreover, its weight will be the same as $E_L$, unless some of the time-reversed errors cancel out with an existing error in $E_L$, in which case its weight is reduced.

    We can repeat the above argument to finally construct an undetectable logical error $E_L'$ that only contains errors between times $t_{1}$ and $t_{1}+1$ as shown in \cref{fig:time_reversal}(b), and whose weight is less than or equal to that of $E_L$, giving $d_{\mathrm{circ}}(T)\geq d_{\mathrm{circ}}(1)$. 
\end{proof}

\begin{taggedproposition}{\ref{prop:alternating_d_circ}}
    Let $d_{\mathrm{circ}}^{\textnormal{non-alt}}(T)$ be the circuit-level distance of a non-alternating normal circuit, and let $d_{\mathrm{circ}}^{\mathrm{alt}}(T)$ be the circuit-level distance of the corresponding alternating normal circuit. Then, we always have
    \begin{equation}\tag{\ref{eq:back_and_forth_fault_tolerance}}
        d_{\mathrm{circ}}^{\mathrm{alt}}(T)= d_{\mathrm{circ}}^{\mathrm{alt}}(1)=d_{\mathrm{circ}}^{\textnormal{non-alt}}(1)\geq d_{\mathrm{circ}}^{\textnormal{non-alt}}(T).
    \end{equation}
\end{taggedproposition}
\begin{proof}
    See main text.
\end{proof}

Our next proposition, Prop.~\ref{prop:decode}, requires a few definitions before we can make it precise. Recall that we write $G_{\mathrm{circ}}^{C,\mathrm{morph}}$ and $G_{\mathrm{phen}}^{C}$ for the decoding hypergraphs of the morphing circuit under circuit-level noise and the phenomenological noise model, respectively. We write the vertices of these two hypergraphs -- that represent detectors -- as $V_{\mathrm{circ}}^{C,\mathrm{morph}}$ and $V_{\mathrm{phen}}^{C}$, respectively. We wish to establish a correspondence between the detectors (or, equivalently, the vertices in the decoding hypergraph) in a given morphing circuit and the corresponding phenomenological circuit for the mid-cycle code $C$, as defined in \cref{subsubsec:morphing_decoding}. We do this by defining a map $f\colon V_{\mathrm{circ}}^{C,\mathrm{morph}}\rightarrow V_{\mathrm{phen}}^{C}$. As explained in \cref{subsec:morphing_detectors}, a generating set of detectors in a morphing circuit is given by the regular detectors $\mathrm{regdet}_{t}(s)$ for all stabiliser generators $s$ of the mid-cycle code $C$ and all times $t=j+kJ$ ($k\in\mathbb{Z}$) at which $s$ is contracting, and the meta-check detectors $\mathrm{metadet}_{t}(r)$ for all redundancies $r$ between the stabiliser generators and for the times $t=kJ$ ($k\in\mathbb{Z}$). Note that this generating set is overcomplete, for example, it is only \textit{necessary} to include meta-check detectors at a single time-step $t$, because the meta-check detectors at any other time can be obtained by taking their product some regular detectors, but for later convenience we include it at multiple time-steps here. Meanwhile, in a phenomenological noise model, a similarly overcomplete generating set of detectors is made up of regular detectors $\mathrm{regdet}_{t}^{\mathrm{phen}}(s)$ for all $s$ and all times $t$ that consist of the product of two measurements of the same stabiliser $s$ in consecutive time-steps, and meta-check detectors $\mathrm{metadet}_{t}^{\mathrm{phen}}(r)$ that consist of the product of the measurements corresponding to each stabiliser in the redundancy set $r$ at a single time-step $t$. With this, and writing $j(s)$ as the index of the contraction round $S_{j(s)}$ in which a given stabiliser $s$ is contracting, we define the map
\begin{subequations}\label{eq:bijection}
    \begin{align}
        f&:&V_{\mathrm{circ}}^{C,\mathrm{morph}}&\rightarrow V_{\mathrm{phen}}^{C},\\
        &&\mathrm{regdet}_{j(s)+kJ}(s)&\mapsto \mathrm{regdet}_{k}^{\mathrm{phen}}(s),\\
        &&\mathrm{metadet}_{kJ}(r)&\mapsto\mathrm{metadet}_{k}^{\mathrm{phen}}(r),
    \end{align}
\end{subequations}
for $k\in\mathbb{Z}$. Finally, we say that a morphing circuit and a phenomenological circuit are \textit{equivalent under} $f$ if $f$ is a bijection when applied between the morphing and phenomenological circuits.

We are finally ready to state and prove the formal version of Prop.~\ref{prop:decode}.
\begin{taggedproposition}{\ref{prop:decode}}[Formal]
    Consider a quantum LDPC stabiliser code $C$ with a constant-depth $J$-round morphing circuit where every stabiliser generator is measured exactly once in the $J$ rounds of the morphing circuit, and a phenomenological noise model for $C$ that is equivalent under the map $f$ (\cref{eq:bijection}) to the morphing circuit. Then, every hyperedge of the morphing decoding hypergraph $G_{\rm circ}^{C, \rm morph}$ can be written as a product of a constant number of hyperedges in $G_{\rm phen}^C$.
\end{taggedproposition}
\begin{proof}
    Consider any circuit-level error $E_{\mathrm{circ}}$ in a morphing circuit that corresponds to a single hyperedge $h_{\mathrm{circ}}\in H_{\mathrm{circ}}^{C,\mathrm{morph}}$. Our first task is to determine the detectors that are triggered by $E_{\mathrm{circ}}$. To do this, we forward- or backward-propagate the error to the mid-cycle code that is in the middle of the same measurement round as the error. We label the time-step of this mid-cycle code with a half-integer $t_{\mathrm{err}}=j_{\mathrm{err}}+1/2+k_{\mathrm{err}}J$; the resulting (possibly multi-qubit) mid-cycle error $E_{\mathrm{mid}}$ will by definition trigger the same detectors as $E_{\mathrm{circ}}$. Because each contraction circuit is constant-depth, $E_{\mathrm{mid}}$ is a constant-weight Pauli operator. Let us denote by $\mathrm{syn}(E_{\mathrm{mid}})$ the syndrome of $E_{\mathrm{mid}}$; that is, the set of mid-cycle stabilisers that are flipped by $E_{\mathrm{mid}}$. Because $C$ is LDPC, we know that the size of the set $\mathrm{syn}(E_{\mathrm{mid}})$ is also constant. Each stabiliser $s\in\mathrm{syn}(E_{\mathrm{mid}})$ gives rise to one regular detector that is triggered by $E_{\mathrm{circ}}$, but importantly the time at which the detector is triggered is \textit{after} $t_{\mathrm{err}}$. On top of this, a meta-check detector $\mathrm{metadet}_{(k_{\mathrm{err}}+1)J}(r)$ associated with the redundancy $r$ will be triggered if the mid-cycle time-slice of the detecting region at $t_{\mathrm{err}}$ (as illustrated in \cref{fig:detector_figures}(b)) contains an odd number of stabilisers in the set $\mathrm{syn}(E_{\mathrm{mid}}$). Explicitly, $E_{\mathrm{circ}}$ triggers the detectors
    \begin{subequations}\label{eq:morphing_triggered_detectors}
    \begin{align}
        &\mathrm{regdet}_{j(s)+k_{\mathrm{err}}J}(s)
        \text{ for all }s\in\mathrm{syn}(E_{\mathrm{mid}})_{\mathrm{late}},\\
        &\mathrm{regdet}_{j(s)+(k_{\mathrm{err}}+1)J}(s)
        \text{ for all }s\in\mathrm{syn}(E_{\mathrm{mid}})_{\mathrm{early}},\\
        &\mathrm{metadet}_{(k_{\mathrm{err}}+1)J}(r)\text{ for all }r\text{ such that }\nonumber\\
        &\hspace{3 cm}\big|r\cap\mathrm{syn}(E_{\mathrm{mid}})_{\mathrm{early}}\big|\text{ is odd},
    \end{align}
    \end{subequations}
    where $\mathrm{syn}(E_{\mathrm{mid}})_{\mathrm{late}}$ is the set of flipped stabilisers $s$ that are contracting in a contraction round $j>j_{\mathrm{err}}$, and $\mathrm{syn}(E_{\mathrm{mid}})_{\mathrm{early}}$ contains all the other flipped stabilisers.
    
    What are the equivalent detectors that are triggered in the phenomenological noise model? Under the bijection $f$, \cref{eq:morphing_triggered_detectors} corresponds to
    \begin{subequations}\label{eq:phenom_triggered_detectors}
    \begin{align}
        &\mathrm{regdet}_{k_{\mathrm{err}}}^{\mathrm{phen}}(s)
        \text{ for all }s\in\mathrm{syn}(E_{\mathrm{mid}})_{\mathrm{late}},\label{eq:phenom_triggered_detectors_reg_late}\\
        &\mathrm{regdet}_{k_{\mathrm{err}}+1}^{\mathrm{phen}}(s)
        \text{ for all }s\in\mathrm{syn}(E_{\mathrm{mid}})_{\mathrm{early}},\label{eq:phenom_triggered_detectors_reg_early}\\
        &\mathrm{metadet}_{(k_{\mathrm{err}}+1)}^{\mathrm{phen}}(r)\text{ for all }r\text{ such that }\nonumber\\
        &\hspace{3 cm}\big|r\cap\mathrm{syn}(E_{\mathrm{mid}})_{\mathrm{early}}\big|\text{ is odd},\label{eq:phenom_triggered_detectors_meta}
    \end{align}
    \end{subequations}

    Therefore, we have to find a constant number of phenomenological errors that trigger the detectors in \cref{eq:phenom_triggered_detectors}. We claim that the following set of phenomenological hyperedges satisfies this:
    \begin{itemize}
        \item data qubit errors before the measurements at time $t=k_{\mathrm{err}}$ for each error in $E_{\mathrm{mid}}$, and
        \item measurement errors at time $t=k_{\mathrm{err}}+1$ for each stabiliser $s\in\mathrm{syn}(E_{\mathrm{mid}})_{\mathrm{early}}$.
    \end{itemize}
    Note that this set of errors has constant size, because $E_{\mathrm{mid}}$ and $\mathrm{syn}(E_{\mathrm{mid}})$ are both also constant in size.
    To see why this works, note that the data qubit errors trigger regular detectors $\mathrm{regdet}_{k_{\mathrm{err}}}^{\mathrm{phen}}(s)$ for all $s\in\mathrm{syn}(E_{\mathrm{mid}})$, however, this doesn't match the triggered regular detectors in \cref{eq:morphing_triggered_detectors} because the detectors corresponding to stabilisers in $\mathrm{syn}(E_{\mathrm{mid}})_{\mathrm{early}}$ are triggered one time-step later than those in $\mathrm{syn}(E_{\mathrm{mid}})_{\mathrm{late}}$. To compensate for this, the measurement errors trigger pairs of regular detectors $\mathrm{regdet}_{k_{\mathrm{err}}}^{\mathrm{phen}}(s)$ and $\mathrm{regdet}_{k_{\mathrm{err}}+1}^{\mathrm{phen}}(s)$, satisfying \cref{eq:phenom_triggered_detectors_reg_late,eq:phenom_triggered_detectors_reg_early}. It is also not hard to directly see that the measurement errors also trigger exactly the same meta-check detectors as in \cref{eq:phenom_triggered_detectors_meta}, concluding the proof.
\end{proof}

\begin{taggedproposition}{\ref{prop:single_shot_time_like_distance}}
    For a quantum LDPC mid-cycle code and corresponding morphing circuit, if $J$ is constant and each contraction circuit has constant depth, then $d_{\mathrm{stab}}^{\mathrm{morph}}(T)/d_{\mathrm{stab}}^{\mathrm{phen}}(T)=\Omega(1)$, where the constant $\Omega(1)$ is with respect to $T$ and the distance $d$ of the code $C$, and for simplicity $T$ is an integer multiple of $J$.
\end{taggedproposition}
\begin{proof} 
    The result follows relatively straightforwardly from Prop.~\ref{prop:decode}, since this shows that every error in a morphing circuit can be written as a constant number of errors in a phenomenological circuit, under the assumption that each stabiliser was contracted once per $J$ rounds in the morphing circuit. But because we don't make this assumption here, some more arguments are needed. Let the circuit-level logical error $E_{L}^{\mathrm{morph}}$ in the $T$-round morphing stability experiment $\mathcal{C}_{\mathrm{morph}}$ be of minimum weight, i.e.\ $w(E_{L}^{\mathrm{morph}})=d_{\mathrm{stab}}^{\mathrm{morph}}(T)$. Given such an error, we wish to show that there exists a logical error $E_{L}^{\mathrm{phen}}$ in the corresponding $T$-round phenomenological stability experiment $\mathcal{C}_{\mathrm{phen}}$ with weight satisfying $w(E_{L}^{\mathrm{morph}})\geq w(E_{L}^{\mathrm{phen}})/c$ for some constant $c$, which is enough to prove the result since $w(E_{L}^{\mathrm{phen}})\geq d_{\mathrm{stab}}^{\mathrm{phen}}(T)$.

    First, in order to apply Prop.~\ref{prop:decode}, we need to have a morphing circuit and a phenomenological circuit whose detectors are equivalent under the bijection $f$ in~\cref{eq:bijection}. In a morphing circuit where some stabilisers are contracting more than once per $J$ contraction rounds $f$ will not be injective, because multiple regular detectors in the morphing circuit will be mapped to the same phenomenological regular detector. To remedy this, we construct a morphing circuit $\tilde{\mathcal{C}}_{\mathrm{morph}}$ that is identical to the original morphing circuit $\mathcal{C}_{\mathrm{morph}}$ except that some measurements and resets are removed to make it so that every mid-cycle stabiliser is only measured once in the $J$ contraction rounds. $\tilde{\mathcal{C}}_{\mathrm{morph}}$ has the same set of error locations as $\mathcal{C}_{\mathrm{morph}}$, with the only difference being that $\tilde{C}_{\mathrm{morph}}$ now has fewer detectors so that $f$ is injective. The meta-check detectors and/or observables are set by the mid-cycle code and therefore are still present in $\tilde{C}_{\mathrm{morph}}$.
    
    To make $f$ surjective (and hence bijective) we need to define a phenomenological depth-$T/J$ stability experiment $\tilde{\mathcal{C}}_{\mathrm{phen}}$. With this, $\tilde{\mathcal{C}}_{\mathrm{morph}}$ and $\tilde{\mathcal{C}}_{\mathrm{phen}}$ are equivalent under $f$.
    
    %Specifically, the output phenomenological circuit $\mathcal{\tilde{C}}_{\mathrm{phen}}$ will only be a depth-$T/J$ stability experiment -- which is of lower depth than the phenomenological circuit in the proposition. However, even with this, we may need to remove some detectors from the original morphing circuit $\mathcal{C}_{\mathrm{morph}}$ in order for it to be equivalent to $\mathcal{\tilde{C}}_{\mathrm{phen}}$: this is particularly the case if $\mathcal{C}_{\mathrm{morph}}$ has some stabilisers that contract more than once per $J$ rounds of the morphing circuit. Specifically, we construct a morphing circuit $\mathcal{\tilde{C}}_{\mathrm{morph}}$ that is identical to the original morphing circuit $\mathcal{C}_{\mathrm{morph}}$ except that some measurements and resets are removed. $\mathcal{\tilde{C}}_{\mathrm{morph}}$ has the same set of error locations as $\mathcal{C}_{\mathrm{morph}}$, with the only difference being that $\mathcal{\tilde{C}}_{\mathrm{morph}}$ now has fewer detectors such that it is equivalent to $\mathcal{\tilde{C}}_{\mathrm{phen}}$ under $f$.
    
    Starting with $E_{L}^{\mathrm{morph}}$, we can trivially construct an identical circuit-level logical error $\tilde{E}_{L}^{\mathrm{morph}}$ in $\mathcal{\tilde{C}}_{\mathrm{morph}}$. Then, we can apply Prop.~\ref{prop:decode} to each error mechanism in $\tilde{E}_{L}^{\mathrm{morph}}$ to obtain a circuit-level logical error $\tilde{E}_{L}^{\mathrm{phen}}$ in the phenomenological circuit $\mathcal{\tilde{C}}_{\mathrm{phen}}$ that has weight no more than a constant factor larger than $\tilde{E}_{L}^{\mathrm{morph}}$. Finally, we need to turn $\tilde{E}_{L}^{\mathrm{phen}}$ into a logical error for the depth-$T$ phenomenological stability experiment $\mathcal{C}_{\mathrm{phen}}$, which we can do by copying the measurement errors in the last measurement round of the shortened circuit $\mathcal{\tilde{C}}_{\mathrm{phen}}$ and pasting them in each subsequent measurement round of the original depth-$T$ stability experiment $\mathcal{C}_{\mathrm{phen}}$. It is not too hard to see that the resulting set of errors commutes with all the regular and meta-check detectors and anticommutes with the same meta-check observables and thus constitutes a logical error. Moreover, because $J$ is a constant, the weight of this error $E_{L}^{\mathrm{phen}}$ is at most a constant factor larger than the weight of $\tilde{E}_{L}^{\mathrm{phen}}$, proving the result.
\end{proof}

\begin{taggedproposition}{\ref{prop:time_like_lower_bound}}
    Consider a depth-$T$ stability experiment for a morphing circuit in which the maximum duration of its regular detectors, i.e.~the temporal length of its corresponding detection regions, is $\delta$ measurement rounds. Then, the circuit-level distance of this experiment satisfies the lower-bound
        \begin{equation}\tag{\ref{eq:time_like_lower_bound}}
            d_{\mathrm{stab}}(T)\geq \bigg\lfloor\frac{T}{\delta-1}\bigg\rfloor.
        \end{equation}
\end{taggedproposition}
\begin{proof}
    Any undetectable logical error $E_L^{\mathrm{morph}}$ must anticommute with at least one logical observable. Recall from~\cref{eq:meta_check_detector_relation} that for any redundancy $r$, the corresponding meta-check observable has multiple representatives $\mathrm{obs}_{t}(r)$ at different times $t$ that are each related by a regular detector. So, if $E_L^{\mathrm{morph}}$ anticommutes with one meta-check observable representative $\mathrm{obs}_{t}(r)$ for some $t$, it must anticommute with all the representatives in order for it to be undetectable by all regular detectors.
    
    Next, since the maximum duration of a regular detector is $\delta$, every stabiliser is measured at least once in $\delta$ consecutive contraction rounds. Using this, and following the definition of $\mathrm{obs}_{t}(r)$ in~\cref{subsec:morphing_detectors} that says that we include each stabiliser $s\in r$ in the corresponding detection sequence $\vect{\sigma}$ in the most recent time $t'\geq t$ that $s$ is contracting, the earliest time that $s$ could be included in $\vect{\sigma}$ is $t-(\delta-1)$. Therefore the maximum duration of any observable region of any meta-check detector $\mathrm{obs}_{t}(r)$ is $\delta-1$.
    
    This means that the observing regions of the meta-check observable $\mathrm{obs}_{t}(r)$ with $t=k(\delta-1)$ for $k\in \mathbb{Z}$ do not overlap, while still representing the same underlying logical observable. Since $E_L^{\mathrm{morph}}$ must anticommute with all of these representatives, it must have at least one error contained within each of the non-overlapping observing regions corresponding to $\mathrm{obs}_{t}(r)$ with $t=k(\delta-1)$. Therefore, the weight of this error is at least $\lfloor T/(\delta-1)\rfloor$, giving the lower-bound in~\cref{eq:time_like_lower_bound}.
\end{proof}

\begin{taggedproposition}{\ref{prop:time_like_upper_bound}}
    Consider a depth-$T$ stability experiment for a morphing circuit with mid-cycle code $C$. Then, let $N(s)$ be the number of times that a mid-cycle stabiliser generator $s$ of $C$ is contracted in an end-cycle time-step $t=0,\dots,T$, with the first ($t=0$) and last ($t=T$) contraction rounds only counting as half in $N(s)$. Then, the circuit-level distance of this experiment satisfies the upper-bound
    \begin{equation}\tag{\ref{eq:time_like_upper_bound}}
        d_{\mathrm{stab}}(T)\leq 2\min_{s\in R}N(s),
    \end{equation}
    where $R$ is the set of stabilisers that are contained within at least one redundancy of the mid-cycle code.
\end{taggedproposition}
\begin{proof}
    To prove this upper-bound, we construct an undetectable time-like error $E(s)$ consisting of a string of alternating reset and measurement errors for each stabiliser generator $s$ that participates in at least one redundancy. In particular, at every time-step $t$ in which $s$ is contracting, we place a measurement error on the single measurement in $m_{t}(s)$, and we place a reset error on the corresponding reset in the following round of resets.
    Note that the weight of $E(s)$ can be calculated by counting the number of times $s$ contracts during the stability experiment: each time-step in which $s$ contracts contributes two errors (one reset and one measurement error) to $E(s)$ \textit{except} the first ($t=0$) and last ($t=T$) rounds which only contribute one error (a reset or a measurement error respectively) and thus the weight of $E(s)$ is equal to $2N(s)$.

    Working towards explaining why $E(s)$ is an undetectable time-like error, note that $E(s)$ commutes with all of the regular detectors $\mathrm{regdet}_{t}(s)$ corresponding to the stabiliser $s$ because the detector overlaps with both the measurement error at time $t$ and the reset error at the start of the detecting region. Moreover, any \textit{other} regular detector $\mathrm{regdet}_{t}(s')$ that overlaps with a measurement error in $E(s)$ must also overlap with the reset error in $E(s)$ in the same contraction round -- see for example how the regular detector in~\cref{fig:detector_figures}(a) overlaps with the same set of measurements and resets in, for example, $t=1$. Therefore $E(s)$ is undetectable.

    On the other hand, by assumption we know that the stabiliser $s$ is contained in at least one redundancy, let us call this redundancy $r$. We will now show that the meta-check observable $\mathrm{obs}_{t}(r)$ anticommutes with the error $E(s)$ for any $t$ in which $s$ is contracting, and therefore $E(s)$ is an undetectable time-like error. As a reminder, by definition $\mathrm{obs}_{t}(r)$ has a sequence $\vect{\sigma}$ that contains $s$ in the final time-step $t$, and the other stabilisers in $r$ are included in $\vect{\sigma}$ at the most-recent time-step $t'\leq t$ in which they are contracting. Clearly, $\mathrm{obs}_{t}(r)$ anti-commutes with the measurement error on $m_{t}(s)$ that is contained in $E(s)$. What we now need to show is that $\mathrm{obs}_{t}(r)$ commutes with the remaining errors in $E(s)$. The only situation in which $\mathrm{obs}_{t}(r)$ could overlap with another error in $E(s)$ is if there is an earlier time-step $t'$ in which $s$ is once again contracting, but $\mathrm{obs}_{t}(r)$ is not yet terminated -- this is an edge-case that can only occur if the stabiliser $s$ is measured more frequently than some other stabiliser(s) in the redundancy $r$. If this happens, then $s\notin\sigma_{t'}$ because we already have $s\in\sigma_{t}$ and there is no need to include $s$ twice. Now, we want to know the XOR between (1) the qubits corresponding to measurements $M_{t'}$ that the observing region $\mathrm{obs}_{t}(r)$ overlaps on, and (2) the qubits corresponding to resets $M_{t'}$ that the observing region $\mathrm{obs}_{t}(r)$ overlaps on. By definition, this corresponds precisely to the contracted stabilisers in $\sigma_{t'}$, which does not include $s$. Therefore, $\mathrm{obs}_{t}(r)$ either overlaps with both the measurement $m_{t'}(s)$ \textit{and} the corresponding reset, or with neither of them. Combining this with the argument at the later time $t$, this means that the overlap of $\mathrm{obs}_{t}(r)$ with $E(s)$ will always be odd, and therefore $E(s)$ flips the observable.
\end{proof}

\begin{taggedproposition}{\ref{prop:existence}}
    For any quantum stabiliser code $C$ that is a member of a LDPC code family, there exists a corresponding morphing circuit (called a ``disjoint'' morphing circuit) such that:
    \begin{enumerate}[label=(\alph*)]
        \item $C$ is the mid-cycle code of the morphing circuit,
        \item there are a constant (i.e.~not increasing with the size of the code in the family) number of contraction rounds $J$,
        \item each contracting circuit $F_{j}$, $j=1, \ldots, J$ has constant depth, and
        \item the distance ${d}_{j}^{\mathrm{end}}$ of each end-cycle code ${C}_{j}^{\mathrm{end}}$ scales as $d_{j}^{\mathrm{end}}=\Omega(d)$, where $d$ is the distance of the mid-cycle code.
    \end{enumerate}
\end{taggedproposition}
\begin{proof}
    Label the set of stabiliser generators of $C$ as $S$. First, we partition $S$ into a number $J$ of subsets $S_{j}\subseteq S$ with the property that any two stabilisers $s,s'\in S_{j}$ do not overlap on any qubits. To do this, we construct the adjacency graph of the code, which has a vertex for each stabiliser generator $s\in S$ and an edge between pairs of stabilisers that overlap on at least one qubit. Since $C$ is LDPC, every stabiliser $s\in S$ is of constant weight, and every qubit participates in a constant number of stabilisers in $S$. Therefore, the adjacency graph must have constant maximum degree $\Delta$, and by Brooks' Theorem~\cite{brooks1941colouring}, there exists a vertex colouring of the graph that uses a constant number of colours (at most $\Delta+1$). This vertex colouring defines the subsets $S_j$ with $J\leq\Delta+1$, which guarantees that $J$ is constant. 

    With this colouring, the contraction circuits $F_{j}$ can be trivially constructed by first using a layer of single-qubit gates to convert each contracting stabiliser $s\in S_{j}$ into a purely $X$-type operator, and then contracting each $s$ independently with a CNOT ``scrambling'' or ``fan-out'' circuit; this is a circuit that uses CNOT gates to contract the stabiliser in as few gate layers as possible. The depth of the circuit $F_{j}$ in general is therefore $\lceil \log_{2}(w_{\text{max}})\rceil$ where $w_{\text{max}}$ is the maximum weight of the stabilisers in $S_{j}$ -- which is also constant for LDPC codes. The constant depth of this circuit also guarantees that the end-cycle distances are reduced by at most a constant factor by the arguments in Appendix~B.2 of Ref.~\cite{ST:morphing}.
\end{proof}

\section{Morphing Abelian 2BGA Codes}\label{app:Abelian_2BGA_details}

In this appendix we summarise the mathematical details of Abelian 2BGA codes and present the full results from our numerical search.

\subsection{Abelian 2BGA Codes}

\begin{table*}[]
    \caption{Some examples of Abelian 2BGA codes. It is known from Ref.~\cite{wang2022distance} that (copies of) the rotated toric codes always give optimal parameters for weight-4 Abelian 2BGA codes. The hexagonal-lattice colour codes on a torus are self-dual and therefore have $B=A^{-1}=\{a^{-1}:a\in A\}$. In Ref.~\cite{Bravyi24}, many Bivariate Bicycle (BB) codes are discussed, some of which do not have the form listed in the table. However, the examples that are of most interest to us are given in the table by setting $(\ell,m)=(6,6)$, $(9,6)$, $(12,6)$, and $(12,12)$, yielding codes with parameters $[\![72,12,6]\!]$, $[\![108,8,10]\!]$, $[\![144,12,12]\!]$, and $[\![288,12,18]\!]$ respectively.}
    \centering
    \begin{tabular}{c|c|c|c|c}
        Code & $G$ & $A$ & $B$ & $[\![n,k,d]\!]$\\\hline
        Unrotated Toric & $\mathbb{Z}_{d}{\times}\mathbb{Z}_{d}$ & $\{1,x\}$ & $\{1,y\}$ & $[\![2d^2,2,d]\!]$\\
        Rotated Toric (even $d$) & $\mathbb{Z}_{d}{\times}\mathbb{Z}_{d/2}$ & $\{1,x\}$ & $\{1,xy\}$ & $[\![d^2,2,d]\!]$\\
        Rotated Toric (odd $d$) & $\mathbb{Z}_{(d^{2}+1)/2}$ & $\{1,x\}$ & $\{1,x^d\}$ & $[\![d^2+1,2,d]\!]$\\
        Hex Colour on a Torus ($d\equiv0$ mod 4) & $\mathbb{Z}_{3d/4}{\times}\mathbb{Z}_{3d/4}$ & $\{1,x,y\}$ & $A^{-1}$ & $[\![9d^2/8,4,d]\!]$\\
        Hex Colour on a Torus ($d\equiv2$ mod 4) & $\mathbb{Z}_{(9d^{2}{+}12)/16}$ & $\{1,x,x^{(3d+2)/4}\}$ & $A^{-1}$ & $[\![(9d^2{+}12)/8,4,d]\!]$\\
        Some BB from Ref.~\cite{Bravyi24} ($\ell,m$ div.\ by 3) & $\mathbb{Z}_{\ell}{\times}\mathbb{Z}_{m}$ & $\{x^{3},y^{7},y^{2}\}$ & $\{y^{3},x,x^{2}\}$ & Various
    \end{tabular}
    \label{tab:Abelian_2BGA_code_examples}
\end{table*}

An Abelian 2BGA code is defined by an Abelian group $G$ and two subsets $A,B\subseteq G$; we will sometimes compactly write this as the tuple $(G,A,B)$ and we refer to this as the ``group presentation'' of the Abelian 2BGA code. The code then has a total of $2|G|$ physical qubits, each of which is labelled with a label $L$ (``left'') or $R$ (``right'') and by a group element $g\in G$; we use the notation $q(L,g)$ and $q(R,g)$. The code has $|G|$ $X$-stabilisers labelled $s(X,g)$ for $g\in G$ and $|G|$ $Z$-stabilisers labelled $s(Z,g)$. Note that these stabiliser generators in general are not independent which allows for the existence of logical qubits. Specifically, the stabilisers are defined by
\begin{subequations}
    \begin{align}
        s(X,g)&=\prod_{i=1}^{|A|}X_{L,a_{i}g}\prod_{j=1}^{|B|}X_{R,b_{j}g},\\
        s(Z,g)&=\prod_{j=1}^{|B|}Z_{L,b_{j}^{-1}g}\prod_{i=1}^{|A|}Z_{R,a_{i}^{-1}g},
    \end{align}
\end{subequations}
where $P_{T,g}$ represents the Pauli operator $P=X,Z$ acting on the qubit $q(T,g)$ for $T=L,R$ and $g\in G$. This construction guarantees that all the stabilisers commute: if an $X$-stabiliser $s(X,g)$ overlaps with a $Z$-stabiliser $s(Z,g')$ on a left-qubit $q(L,a_{i}g)$, then this necessarily means that $g'=a_{i}b_{j}g$ for some $b_{j}\in B$ and therefore the $X$- and $Z$-stabilisers also overlap on the right-qubit $q(R,b_{j}g)$. 

Some examples of Abelian 2BGA codes are listed in~\cref{tab:Abelian_2BGA_code_examples}: all the examples listed in the table are ``bivariate'' where $G$ is the product of two cyclic groups or \eczoo[Generalised bicycle codes]{generalized_bicycle} where $G$ is $G=\mathbb{Z}_m$, although some Abelian 2BGA codes cannot be written as either a bivariate or Generalised bicycle code.

Abelian 2BGA codes have a number of symmetries that preserve the structure of the code (see for example Statement 3 of Ref.~\cite{wang2022distance} and Theorem 6 of Ref.~\cite{Lin23}). Of particular relevance for us are the following:
\begin{itemize}
    \item Translational symmetry: The Abelian 2BGA codes defined by $(G,A,B)$ and $(G,g_{A}A,g_{B}B)$ are identical for any $g_{A},g_{B}\in G$; here $gA$ denotes the set $\{ga:a\in A\}$.
    \item $ZX$-duality: The Abelian 2BGA codes defined by $(G,A,B)$ and $(G,B^{-1},A^{-1})$ are identical; here $A^{-1}$ denotes the set $\{a^{-1}:a\in A\}$.
    \item Automorphism symmetry: The Abelian 2BGA codes defined by $(G,A,B)$ and $(G,\phi(A),\phi(B))$ are identical for any group automorphism $\phi$; here $\phi(A)$ denotes the set $\{\phi(a):a\in A\}$.
\end{itemize}
These equivalences drastically reduce the number of distinct Abelian 2BGA codes that exist, reducing the computational cost of searching over Abelian 2BGA codes. However, exploiting this simplification is not straightforward using the $(G,A,B)$ presentation given here, not in the least because the number of symmetries grows with the size of the group $G$. Therefore in the following subsection we provide an alternative presentation of Abelian 2BGA codes that more conveniently takes these symmetries into account.

\subsection{The Lattice Presentation of Abelian 2BGA Codes}
\label{app:lattice}

\begin{table*}[]
    \caption{The lattice presentation of the Abelian 2BGA codes from \cref{tab:Abelian_2BGA_code_examples}. Here $\mathrm{HNF}(\Lambda)$ is the (lower-triangular) Hermite Normal Form of the generating vectors of the lattice $\Lambda$, where the generating vectors are rows of $\mathrm{HNF}(\Lambda)$. $n_{\mathrm{copies}}$ is one for all of these codes.}
    \centering
    \begin{tabular}{Sc|Sc|Sc|Sc|Sc}
        Code & $m_{A}$ & $m_{B}$ & $\mathrm{HNF}(\Lambda)$ & $[\![n,k,d]\!]$\\\hline
        Unrotated Toric & 1 & 1 & $\begin{bmatrix}d&0\\0&d\end{bmatrix}$ & $[\![2d^2,2,d]\!]$\\
        Rotated Toric (even $d$) & 1 & 1 & $\begin{bmatrix}d&0\\d/2&d/2\end{bmatrix}$& $[\![d^2,2,d]\!]$\\
        Rotated Toric (odd $d$) & 1 & 1 & $\begin{bmatrix}(d^{2}{+}1)/2&0\\(d{-}1)^2/2&1\end{bmatrix}$ & $[\![d^2+1,2,d]\!]$\\
        Hex Colour on a Torus ($d\equiv0$ mod 4) & 2 & 2 & $\begin{bmatrix}3d/4&0&0&0\\0&3d/4&0&0\\1&0&1&0\\0&1&0&1\end{bmatrix}$ & $[\![9d^2/8,4,d]\!]$\\
        Hex Colour on a Torus ($d\equiv2$ mod 4) & 2 & 2 & $\begin{bmatrix}(9d^2{+}12)/8&0&0&0\\(3d{-}2)^{2}/16&1&0&0\\1&0&1&0\\(3d{+}2)/4&0&0&1\end{bmatrix}$ & $[\![(9d^2{+}12)/8,4,d]\!]$\\
        ``Half-Gross'' BB Code from Ref.~\cite{Bravyi24} & 2 & 2 & $\begin{bmatrix}6&0&0&0\\2&2&0&0\\3&0&3&0\\4&1&1&1\end{bmatrix}$ & [\![72,12,6]\!]\\
        ``\textthreequarters-Gross'' BB Code from Ref.~\cite{Bravyi24} & 2 & 2 & $\begin{bmatrix}6&0&0&0\\0&3&0&0\\5&2&3&0\\2&2&1&1\end{bmatrix}$ & [\![108,8,10]\!]\\
        ``Gross'' BB Code from Ref.~\cite{Bravyi24} & 2 & 2 & $\begin{bmatrix}12&0&0&0\\2&2&0&0\\9&0&3&0\\4&1&1&1\end{bmatrix}$ & [\![144,12,12]\!]\\
        ``Two-Gross'' BB Code from Ref.~\cite{Bravyi24}& 2 & 2 & $\begin{bmatrix}12&0&0&0\\4&4&0&0\\11&2&3&0\\4&1&1&1\end{bmatrix}$ & [\![288,12,18]\!]\\
    \end{tabular}
    \label{tab:Abelian_2BGA_lattice_presentation}
\end{table*}

Now we provide an entirely new definition of Abelian 2BGA Codes in terms of lattices. In this perspective, an Abelian 2BGA code is defined by a ``lattice signature'' $(m_{A},m_{B})$ where $m_{A}$ and $m_{B}$ are positive integers, a full-rank lattice $\Lambda\subseteq\mathbb{Z}^{m}$ where $m=m_{A}+m_{B}$, and a number of copies of the code $n_{\mathrm{copies}}$. We call this tuple of information $(m_{A},m_{B},\Lambda,n_{\mathrm{copies}})$ the ``lattice presentation'' of the Abelian 2BGA code; as an example we have rewritten the codes in \cref{tab:Abelian_2BGA_code_examples} in this lattice presentation in \cref{tab:Abelian_2BGA_lattice_presentation}.

The simplest way to define the Abelian 2BGA code is to describe the mapping from the lattice presentation to the group presentation. In the simple case where $n_{\mathrm{copies}}=1$, we have
\begin{subequations}\label{eq:lattice_to_group_presentation}
\begin{align}
    G&=\mathbb{Z}^{m}/\Lambda,\\
    A&=\{[\vect{0}]\}\cup\big\{[\vect{e}_{i}]:i=1,\dots,m_{A}\big\},\\
    B&=\{[\vect{0}]\}\cup\big\{[\vect{e}_{i}]:i=m_{A}+1,\dots,m\big\},
\end{align}
\end{subequations}
where by $[\vect{v}]$ we mean the coset $\{\vect{v}+\vect{\lambda}:\vect{\lambda}\in\Lambda\}$, and we write $\vect{e}_{i}$ for the $i$th canonical basis vector in $\mathbb{Z}^{m}$. Since the lattice $\Lambda$ is full rank, the group $G=\mathbb{Z}^{m}/\Lambda$ is a finite Abelian group and hence a product of cyclic groups. Clearly, the sets $A$ and $B$ are sets of elements in $G$. Note that the vectors $\vect{\lambda}$ spanning the lattice $\Lambda$ are not necessarily orthogonal, see~\cref{tab:Abelian_2BGA_lattice_presentation}. Also, note that a coset $[\vect{e}_i]$ may be equal to $[\vect{e}_j]$ for $i\neq j$ if $\vect{e}_{i}-\vect{e}_{j}\in\Lambda$.

As the name suggests, when $n_{\mathrm{copies}}$ is greater than 1 it simply means that the code consists of multiple copies of the $n_{\mathrm{copies}}=1$ code. In the group presentation, this simply means making the modification
\begin{subequations}\label{eq:multiple_copies}
\begin{align}
    G&\mapsto G\times \mathbb{Z}_{n_{\mathrm{copies}}},\\
    a&\mapsto (a,1),\text{ for all }a\in A,\\
    b&\mapsto (b,1),\text{ for all }b\in B,
\end{align}
\end{subequations}
where we remind the reader that 1 is the identity element in our notation for the cyclic group $\mathbb{Z}_{n}$. From \cref{eq:lattice_to_group_presentation,eq:multiple_copies} the number of qubits in the code $n$ is equal to $2|G|=2\det(\Lambda)$, where here $\mathrm{det}(\Lambda)$ refers to the determinant of a matrix of generating vectors for the lattice $\Lambda$.

It is clear from~\cref{eq:lattice_to_group_presentation,eq:multiple_copies} that the lattice presentation defines an Abelian 2BGA code. What we now wish to show is that every Abelian 2BGA code given by its group presentation can be written in terms of a lattice presentation. To do this, we first fix an ordering of the elements in the sets $A$ and $B$. Then, we use the translational symmetry of the Abelian 2BGA code to write $A$ and $B$ with their first elements equal to the identity element in the group, so that $A=\{1,a_{2},a_{3},\dots\}$ and $B=\{1,b_{2},b_{3},\dots\}$. Next, consider the subgroup $H\subseteq G$ that is generated by the elements in both $A$ and $B$. Because $G$ and $H$ are both Abelian, we can define the Abelian quotient group $C=G/H$ or for every element $g\in G$ we can write $g= (h,c)$ with $h\in H$ and $c\in C$, hence we can set $n_{\mathrm{copies}}=|C|$. Then, the Abelian group $H$ can always be written in terms of its ``generator presentation'' as $H=H_{\mathrm{free}}/H_{\mathrm{relations}}$. Here, $H_{\mathrm{free}}$ refers to the free Abelian group generated by the symbols $a_{2},\dots,a_{|A|}$ and $b_{2},\dots,b_{|B|}$, which is isomorphic to the group $\mathbb{Z}^{|A|+|B|-2}$, while $H_{\mathrm{relations}}$ refers to the relations between each of the elements $a_{i}$ and $b_{i}$ in the group $H$. $H_{\mathrm{relations}}$ is a subgroup of $H_{\mathrm{free}}$ and can therefore be viewed as a lattice $\Lambda\subseteq \mathbb{Z}^{m}$, where $m=|A|+|B|-2$. The lattice presentation of the Abelian 2BGA code is therefore given by $m_{A}=|A|-1$, $m_{B}=|B|-1$, $\Lambda=H_{\mathrm{free}}$, and $n_{\mathrm{copies}}=|C|$.

The lattice presentation is convenient because any lattice $\Lambda$ can be uniquely specified by the \href{https://en.wikipedia.org/wiki/Hermite_normal_form}{Hermite Normal Form} of its generators; and, conversely, one can iterate over lattices by systematically generating matrices that are in the Hermite Normal Form. Moreover, if one begins with two different group presentations of the same Abelian 2BGA code where the two presentations differ only by translational or automorphism symmetries, then the corresponding lattice presentation will be the same. There are still some symmetries of the lattice presentation, namely the $ZX$-duality and \textit{permutations} of the elements in the sets $A$ and $B$ since the lattice presentation requires a fixed ordering of the elements. However, the number of symmetries no longer scales with the size of the group (or, equivalently, the number of physical qubits in the code), instead only scaling with the number of elements in $A$ and $B$ (or, equivalently, the weight of the stabilisers). This makes the lattice presentation more attractive for a numerical enumeration of Abelian 2BGA codes with fixed stabiliser weight than the group presentation.

As for some closing remarks, the lattice presentation makes it clear that an Abelian 2BGA code with stabiliser weight $w$ is \textit{nearest-neighbour} in $w-2$ dimensional space (with appropriate periodic boundary conditions given by $\Lambda$). The qubits of the code labelled by elements in $G$ can thus be embedded in a $w-2$ dimensional space with stabilisers of the code acting locally, while with $n_{\rm copies}> 1$ one puts $n_{\rm copies}$ qubits per location. The lattice presentation makes the connection from Ref.~\cite{wang2022distance} between weight-4 Abelian 2BGA codes and toric codes explicit, since the lattice presentation directly corresponds to the stabilisers of a 2D toric code with periodic boundary conditions given by the lattice $\Lambda$.

Moreover, the lattice presentation provides a way of defining families of Abelian 2BGA codes, namely, by multiplying the lattice $\Lambda$ by a scaling factor. 

\subsection{Homomorphism-Based Morphing Circuits}
\label{subsec:homo-morph}
With the above simplifications we can search over the set of Abelian 2BGA codes of a given stabiliser weight and number of qubits. However, the set of two-round morphing circuits of a given Abelian 2BGA code is intractably large. We therefore limit ourselves to a particular subset of morphing circuits that we call \textit{homomorphism-based} morphing circuits, which aims to generalise the morphing circuit we found in Ref.~\cite{ST:morphing}. Note that the following definition does not guarantee the existence of a (purely contracting CNOT) morphing circuit, however it makes it very simple to search for such morphing circuits. Note also that in our previous work for $J=2$ we did explicitly prove that the existence of a particular homomorphism ensures the existence of a purely contracting homomorphism-based morphing circuit.

\begin{definition}\label{def:homo-morph}
     A morphing circuit with contracting circuits $\{F_{j}\}_{j=1}^J$ for an Abelian 2BGA code $(G,A,B)$ is \emph{homomorphism-based} if:
\begin{enumerate}[label=(\alph*)]
    \item for any unitary translation $U(g)$ of the qubits that maps $q(L,g')\mapsto q(L,g'g)$ and $q(R,g')\mapsto q(R,g'g)$, and for each contracting circuit $F_{j}$, the conjugated circuit $U(g)F_{j}U(g)^{\dag}$ is also a contracting circuit $F_{j'}$ in the family $\{F_{j}\}_{j=1}^J$ (where $j$ may be equal to $j'$),
    \label{item:morphing_translations}
    \item for each contracting circuit $F_{j}$, the translated circuit $H^{\otimes n}\mathrm{SWAP}_{ZX}F_{j}H^{\otimes n}\mathrm{SWAP}_{ZX}$ is also a contracting circuit $F_{j'}$ in in the family $\{F_{j}\}_{j=1}^J$ where $\mathrm{SWAP}_{ZX}$ swaps the qubits $q(L,g)\leftrightarrow q(R,g^{-1})$: this guarantees that the stabilisers are mapped $s(X,g)\leftrightarrow s(Z,g^{-1})$, and
    \label{item:morphing_ZX}
    \item there exists a group homomorphism $\phi:G\rightarrow H$ for an Abelian group $H$ such that the set of contracting $X$-stabilisers in any given round $F_j$ satisfy $\{g:s(X,g)\in S_{j}\}=\phi^{-1}(h_{X,j})$ for some $h_{X,j}\in H$, and the set of contracting $Z$-stabilisers in any given round $F_j$ satisfy $\{g:s(Z,g)\in S_{j}\}=\phi^{-1}(h_{Z,j})$ for some $h_{Z,j}\in H$.\label{item:homomorphism_condition}
\end{enumerate}
\end{definition}

A visual example is given by the two-round purely contracting homomorphism-based morphing circuits for the unrotated toric code shown in~\cref{fig:toric_search}, corresponding at this size to $G=\mathbb{Z}_4 \times \mathbb{Z}_4$. Here $H=\mathbb{Z}_{2}=\{1,u|u^2=1\}$ (as we chose in \cite{ST:morphing}), and $J=2=|H|$. These circuits both satisfy~\cref{item:morphing_translations} in Def.~\ref{def:homo-morph} due to the translational symmetry of the circuits. To see that they also satisfy \cref{item:morphing_ZX} note that the SWAP${}_{ZX}$ gate implements a $\pi$-rotation about a point such that the $X$- and $Z$-stabilisers of the toric code are swapped. The homomorphism that satisfies~\cref{item:homomorphism_condition} in both circuits is given by $\phi:\mathbb{Z}_{4}\times\mathbb{Z}_{4}\rightarrow\mathbb{Z}_{2}$, $\phi(x)=u$, $\phi(y)=u$. The circuits in~\cref{fig:toric_search} are two examples of proper and parallelisable contraction trees (as defined in \cref{subsec:from_contraction_trees}) that can be found after fixing the homomorphism. Note, moreover, that a morphing circuit being homomorphism-based does \textit{not} guarantee that the connectivity requirements are reduced: the circuit in \cref{fig:toric_search}(a) indeed has only degree-3 connectivity, but the circuit in \cref{fig:toric_search}(b) has degree-4 connectivity.

Another example of a two-round purely contracting homomorphism-based morphing circuit is given by the circuits introduced for BB codes in Ref.~\cite{ST:morphing}. As noted in the main text, many Abelian 2BGA codes do not have any two-round purely contracting morphing circuits: this happens for example whenever the group $G$ is odd, meaning that~\cref{item:homomorphism_condition} is impossible to satisfy.

Intuitively, \cref{item:morphing_translations,item:morphing_ZX} in~Def.~\ref{def:homo-morph} ensure that the morphing circuits ``respect'' both the translational symmetry and the $ZX$-duality of the mid-cycle Abelian 2BGA code. In particular, a homomorphism-based morphing circuit can be specified just with a \textit{single} contraction tree for the stabiliser $s(X,1)$: the contraction tree for every other $X$-stabiliser $s(X,g)$ can be inferred from the translational symmetry of the circuit, and the contraction tree for the $Z$-stabilisers can be inferred from the $ZX$-duality. \cref{item:homomorphism_condition} is convenient because it simplifies the possible contracting subsets $S_{j}$, making it simple to check whether a given contraction tree and homomorphism condition is parallelisable and proper (although note that in general, homomorphism-based morphing circuits do not need to be purely contracting).

\subsubsection{Two-round Morphing}

Two-round homomorphism-based morphing circuits of Abelian 2BGA codes have the additional property that the two end-cycle codes are equal, up to a permutation of qubits, and that this end-cycle code can be written as an Abelian 2BGA code. The fact that the two end-cycle codes $C_1^{\rm end}$ and $C_2^{\rm end}$ are equal is guaranteed by the translational symmetry of the morphing circuit, i.e.~\cref{item:morphing_translations} for $J=2$. The translational and $ZX$-duality symmetries of the homomorphism-based circuit also guarantee that any expanding $X$- and $Z$-stabilisers whose group elements belong to the same coset $\phi^{-1}(h)$ will expand in such a way that they still respect the translational and $ZX$-duality symmetries. In a two-round morphing circuit, all the expanding $X$-stabilisers in a given contraction round have group elements belonging to the same coset and therefore the resulting end-cycle code will respect the translational and $ZX$-duality symmetries. \cref{item:morphing_translations,item:morphing_ZX} guarantee that in a given morphing circuit, half of the left and half of the right qubits are measured, leaving the other half as data qubits in the end-cycle code. The labels of these left-over qubits correspond to one of the two cosets $\phi^{-1}(h)$ for $h\in\mathbb{Z}_{2}$, which correspond to a new ``end-cycle'' group $G^{\mathrm{end}}$ that is isomorphic to $\mathrm{ker}(\phi)=\phi^{-1}(1)$.

\subsection{Numerical Search Results for Weight-5 and 6 Abelian 2BGA Codes}\label{subsec:numerical_search_appendix}

\begin{figure*}
    \centering
    \includegraphics[width=\linewidth]{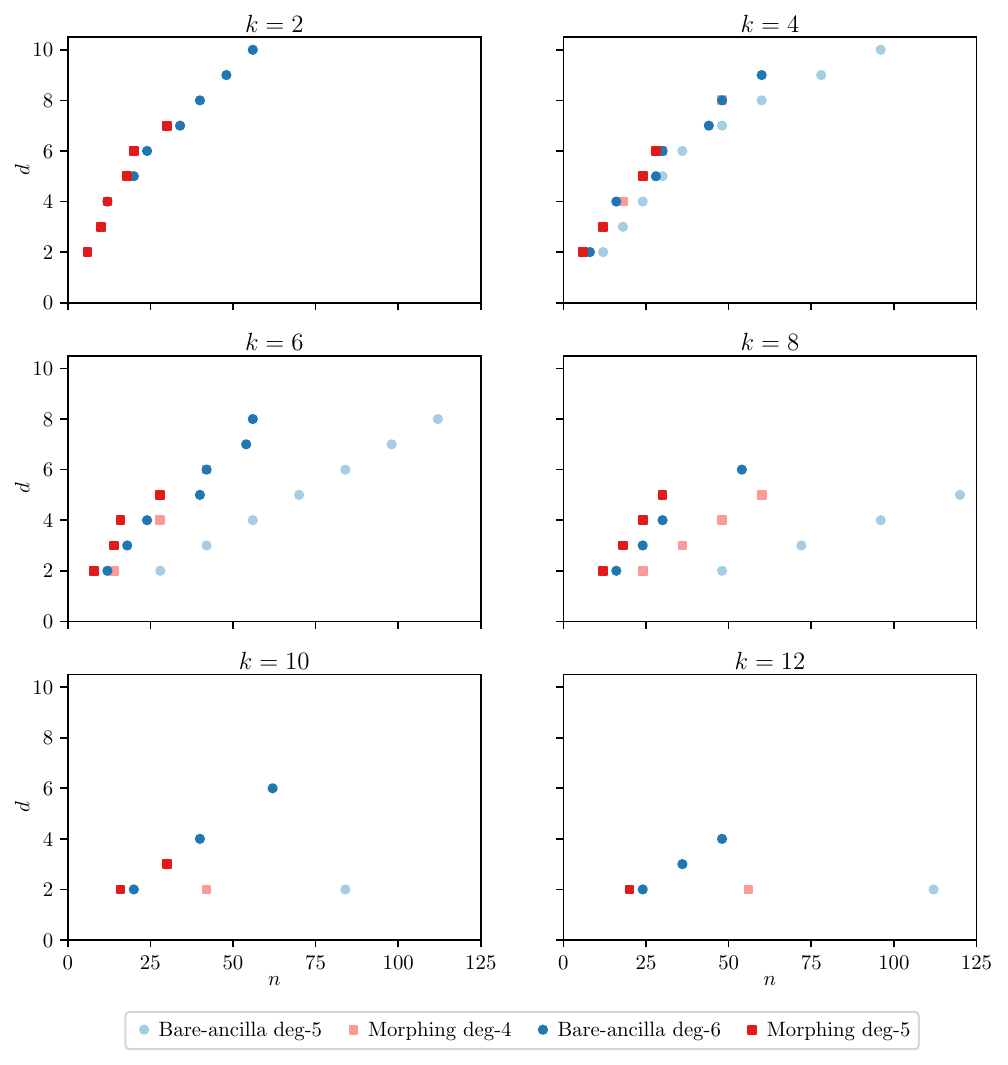}
    \caption{The full set of minimal codes found (up to $k=12$) using the methods outlined in~\cref{subsec:Abelian_2BGA} and Appendix~\ref{app:Abelian_2BGA_details}. Note that a few more minimal codes with larger $k$ can be found in~\cref{fig:d_search_results}.}
    \label{fig:k_search_results}
\end{figure*}

\begin{figure*}
    \centering
    \includegraphics[width=\linewidth]{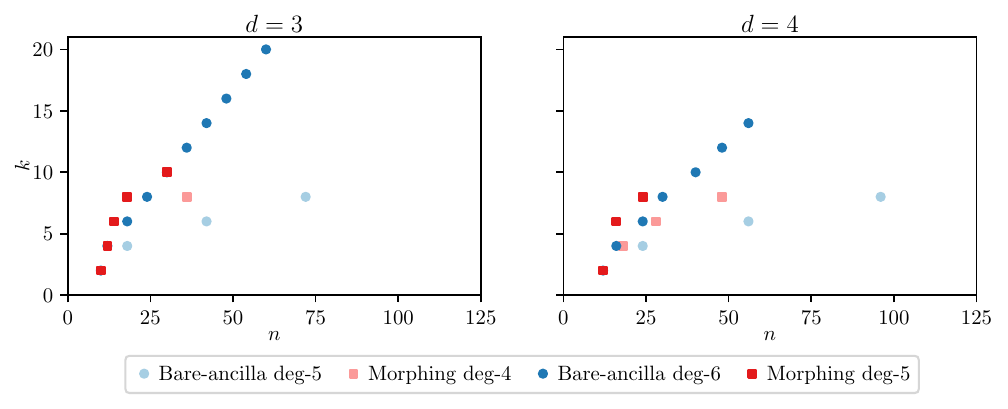}
    \caption{Some minimal codes found by our numerical search using the methods outlined in~\cref{subsec:Abelian_2BGA,app:Abelian_2BGA_details}. Here, we show only the minimal codes with $d=3$ or 4, which includes all the minimal codes with $k>12$ which are not plotted anywhere in~\cref{fig:k_search_results}.}
    \label{fig:d_search_results}
\end{figure*}

Following the flowchart in \cref{fig:Abelian_2BGA_search_flowchart} and using the methods outlined above, we were able to search over two-round homomorphism-based purely contracting morphing circuits for Abelian 2BGA codes of a fixed weight and number of qubits. Here, we show our complete set of numerical results in~\cref{fig:k_search_results,fig:d_search_results}. One can see that the morphing circuits consistently beat, or at least match, the code parameters of bare-ancilla circuits, even though they require a lower degree connectivity graph. Note that in our plots, we only show morphing circuits with a connectivity degree of $w-1$, although the minimal morphing circuits we found with a higher connectivity degree do not have significantly better code parameters. As discussed in the main text, we did not check the circuit-level distance of the bare-ancilla or the morphing circuits. We leave it to future work to investigate particular codes and/or circuits of interest, or to continue the search to larger code sizes.

\subsection{Infinite Abelian 2BGA Codes}
\label{sec:inf2BGA}

We introduce the formal construction of infinite Abelian 2BGA codes. We use this to define morphing circuits for the surface and colour codes on infinite planes, after which we can add boundaries as described in~\cref{subsec:boundaries,subsec:optimising_boundaries}. In general, an infinite Abelian 2BGA code is simply any Abelian 2BGA code where the group $G$ is no longer finite. For example, 
for \eczoo[Generalised Bicycle codes]{generalized_bicycle} $G$ becomes $\mathbb{Z}$.
In terms of the lattice presentation of Abelian 2BGA codes, the group $G=\mathbb{Z}^{m}/\Lambda$ in ~\cref{eq:lattice_to_group_presentation} being infinite is equivalent to the lattice $\Lambda$ not being full rank. We are particularly interested in the infinite surface code
\begin{subequations}
\begin{align}\label{eq:infinite_surface_code}
	G&=\mathbb{Z}\times\mathbb{Z},&A&=\{1,x\},&B&=\{1,y\},
\end{align}
and the infinite hexagonal-lattice colour code
\begin{align}\label{eq:infinite_colour_code}
	G&=\mathbb{Z}\times\mathbb{Z},&A&=\{1,x,y\},&B&=A^{-1},
\end{align}
which have lattice presentations
\begin{align}\label{eq:infinite_surface_code_lattice}
	m_{A}&=1,&m_{B}&=1,&\mathrm{HNF}(\Lambda)&=\big[\,\big],&n_{\mathrm{copies}}&=1,
\end{align}
and
\begin{equation}\label{eq:infinite_colour_code_lattice}
\begin{aligned}
	m_{A}&=2,&\mathrm{HNF}(\Lambda)&=\begin{bmatrix}1&0&1&0\\0&1&0&1\end{bmatrix},\\
    m_{B}&=2,&n_{\mathrm{copies}}&=1,
\end{aligned}
\end{equation}
respectively.
\end{subequations}

Importantly, our definition of homomorphism-based morphing circuits given in~Def.~\ref{def:homo-morph} is still valid for infinite Abelian 2BGA codes: \cref{item:morphing_translations,item:morphing_ZX} are well-defined for infinite groups $G$, and to ensure that the morphing circuit has a finite number of contraction rounds, we add the requirement that $H$ is a finite Abelian group in~\cref{item:homomorphism_condition}.  Similar to the finite case, for any two-round homomorphism-based morphing circuit of an infinite Abelian 2BGA code, the resulting end-cycle codes are equal and can themselves be written as infinite Abelian 2BGA codes. These properties are also particularly useful for the boundary optimisations in Appendix~\ref{sec:code_boundary_optimisation}, because the optimal boundary conditions for both end-cycle codes will be the same.

\section{Optimising Distance versus Number of Qubits for the End-Cycle Codes of Morphing Colour Code Circuits}\label{sec:code_boundary_optimisation}

\begin{figure*}
    \centering
    \includegraphics[width=\linewidth]{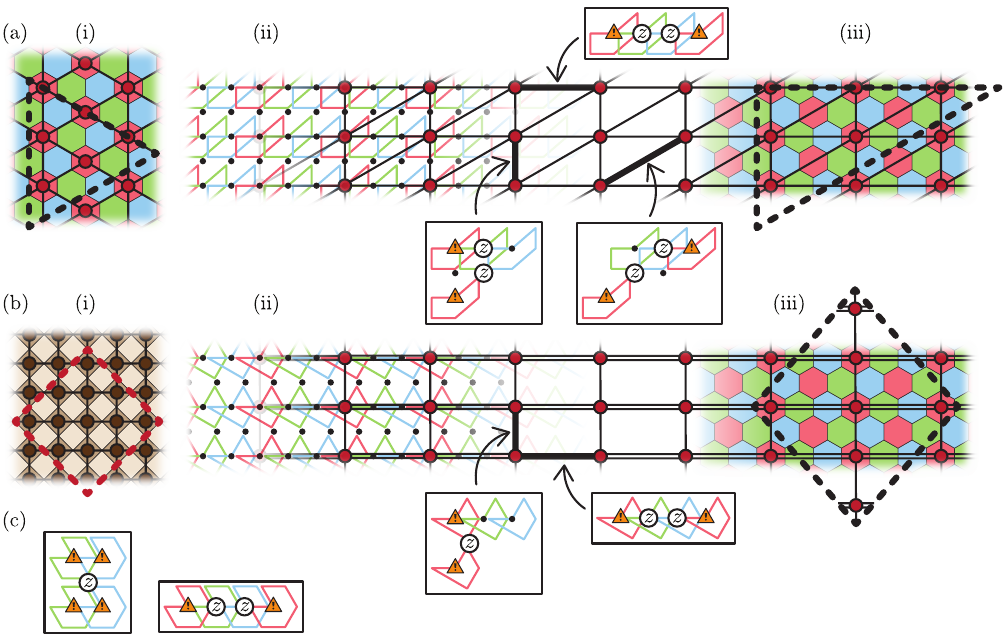}
    \caption{Construction of optimal boundaries for the (a) weight-6, (b) weight-5, and (c) weight-7 end-cycle codes of colour code hex-grid morphing circuits in~\cref{fig:colour_code_end_cycle}. (a) and (b) outline the full procedure using the error strings that trigger red $X$-stabilisers, starting in (i) with the well-known optimal boundaries (thick dotted line) of (a) colour or (b) surface code. In (ii), we analyse the error string structure for the end-cycle code, with insets showing the $Z$-error mechanisms that extend an error string from one red $X$-stabiliser to the next. For the weight-6 end-cycle code, all edges represent weight-2 errors; while for the weight-5 end-cycle code, the single edges represent weight-1 errors and the doubled edges represent weight-2 errors. By comparing the error string for the end-cycle code with that of the colour/surface code, we can impose optimal planar boundary conditions (thick dotted lines) and transfer them to the mid-cycle code in (iii). Note that all the boundaries drawn here correspond to the boundary as shown in \cref{fig:padding_boundaries}, and the microscopic details will depending on the colour/Pauli type of the boundary being imposed -- the boundaries imposed here are analogous to the thick red line in \cref{fig:padding_boundaries}. The morphing circuit for the weight-6 end-cycle code is shown in the main text in \cref{fig:half_shifted_hex} for $d^{\mathrm{end}}=9$, and also shows some minimum-weight end-cycle logical errors. (c), we show the error mechanisms in the weight-7 end-cycle code that give error strings of a given colour, where we consider the weight-1 string that triggers a pair of blue and green stabilisers as ``red'' error string. With this, the error strings travel along the same graph as the weight-5 end-cycle code and therefore the optimal boundary conditions are the same as those in (b)(iii). The optimisation problem for the $X$-errors is the same as for the $Z$-errors.}
    \label{fig:end_cycle_optimal_boundaries}
\end{figure*}

In this appendix, we outline the method we use to optimise the boundary conditions of the infinite end-cycle codes of morphing circuits for the colour code, see the flowchart in \cref{fig:optimisation_flowchart}. In particular, we will analyse the structure of the logical ``strings'' in each end-cycle code, and show that their structure mirrors either that of the surface or the colour code. Then, we can use the fact that the rotated surface code and the equilateral-triangular colour codes are the optimal planar boundary conditions for the surface and colour codes respectively to construct similar optimal boundary conditions for each end-cycle code. Finally, we will demonstrate how to transfer the optimal end-cycle boundary conditions to the mid-cycle code to construct the morphing circuits that we presented in \cref{subsec:colour_code}.

Let us begin by reviewing how to obtain the rotated surface code from the infinite surface code (which can be formally defined as an infinite Abelian 2BGA code in~\cref{eq:infinite_surface_code}). In order to obtain such a single-logical qubit surface code, one needs to impose quadrilateral boundary conditions with a pair of $X$-type boundaries and a pair of $Z$-type boundaries on opposite edges. The optimal boundary geometry is the rotated surface code for the following reasons. Error strings traverse a square grid as shown in \cref{fig:end_cycle_optimal_boundaries}(b)(i); or, in other words, the minimum-weight of an error string between a pair of endpoints is given by the $L_{1}$-distance between the endpoints. In the planar surface code, $X$-logical operators are formed by $X$-error strings crossing from one $X$-boundary to the other. Because of the square-grid structure of the error strings, it is therefore optimal to orient the $X$-boundaries diagonally with respect to this square grid. The same is true for the $Z$-boundaries, giving rise to the well-known ``diamond'' boundary conditions of the rotated surface code as shown in \cref{fig:end_cycle_optimal_boundaries}(b)(i).

%In the infinite surface code, an error string is a set of Pauli errors that only triggers a pair of stabilisers at the endpoints of the string.

In the colour code, the story is similar albeit somewhat more subtle. In this case, there are three ``colours'' of error strings that exist, depending on the colour of the stabilisers that are triggered at the endpoints of the string. These coloured error strings traverse a triangular grid instead of a square grid, as shown for the red error strings in \cref{fig:end_cycle_optimal_boundaries}(a)(i). Moreover, an error string of one colour can ``split'' into two strings in the other two colours (green and blue). To obtain a planar colour code with a single logical qubit, we impose triangular boundaries where each boundary has a different type: one red, one green, and one blue. Logical operators are formed by a red, green, and blue string originating from a boundary of matching colour and meeting at some point in the interior of the colour code. We wish to optimise the geometry for the distance and the number of qubits by using the fact that each error string traverses a triangular grid. Once again, the optimal geometry is well-known and is given by equilateral triangular boundary conditions.

Thus, to optimise the boundary conditions of the end-cycle codes in \cref{fig:colour_code_end_cycle}, we first need to determine what the structure of the error strings is in each end-cycle code, beginning with the weight-6 end-cycle code in \cref{fig:end_cycle_optimal_boundaries}(a)(ii). The key is to find pairs of errors that trigger pairs of stabilisers of the same colour -- such pairs make up the building blocks of any error string. In this case, we see that the structure of the error strings in the weight-6 end-cycle code resembles that of the colour code itself because error strings travel along a triangular grid. The optimal boundary conditions are therefore equilateral triangular boundary conditions \textit{with respect to} the triangular grid of the end-cycle error strings. Interestingly, though, the triangular grid of the end-cycle error strings is not aligned with the triangular grid of the mid-cycle colour code, so when we transfer the boundaries to the mid-cycle code in \cref{fig:end_cycle_optimal_boundaries}(a)(iii), the resulting optimised boundary conditions are those of a \textit{right-angled} triangle.

The story is different for the weight-5 end-cycle code. Here, even though the error strings can be red, green, or blue like in the colour code, the error strings move along a square grid like the surface code, as shown in \cref{fig:end_cycle_optimal_boundaries}(b)(ii). For the purpose of optimising the boundary conditions, it is this latter fact that is important, and we therefore wish to impose diamond boundary conditions like the rotated surface code, as shown in \cref{fig:end_cycle_optimal_boundaries}(b)(iii). We can do this in two ways: either by alternating between $X$- and $Z$-type boundaries around the diamond, or using red and blue boundaries. In both cases, however, we encode two logical qubits instead of one, which is undesirable because we can no longer transversally implement the full Clifford group. Nevertheless, it may be interesting since (asymptotically) it has better code parameters than two copies of the surface code. It is of course possible to still impose triangular boundary conditions on the weight-5 end-cycle code as was originally done in Ref.~\cite{Gidney2023}; however, in our attempts to optimise this we were not able to find codes with competitive $[\![n,k,d]\!]$ parameters.

For the weight-7 end-cycle code, it is convenient to also allow error strings to trigger four detectors, two on each end of the string, as shown in \cref{fig:end_cycle_optimal_boundaries}(c). With this minor modification, one can see that the weight-7 end-cycle code has the same error string structure as the weight-5 end-cycle code, with the strings moving along a square grid. As a result, we once again find that the optimal end-cycle codes are obtained by imposing diamond boundary conditions.

\subsection{Optimal Morphing Circuits for Weight-4 Abelian 2BGA Codes}\label{subsec:w4_optimisation_appendix}

As a brief aside, let us outline how to obtain optimal morphing circuits for weight-4 Abelian 2BGA codes, to complement our numerical search in \cref{subsec:Abelian_2BGA} over weight-5 and 6 Abelian 2BGA codes. We have checked numerically that, for all purely contracting homomorphism-based two-round morphing circuits with weight-4 Abelian 2BGA mid-cycle codes, the corresponding end-cycle codes are also weight-4 Abelian 2BGA codes. This differs from the weight-5 and 6 case in \cref{subsec:Abelian_2BGA}, where the weight of the end-cycle stabilisers often increased compared to that of the mid-cycle code. It is known that weight-4 Abelian 2BGA codes are equivalent to some number of copies of toric codes, possibly with rotated periodic boundary conditions~\cite{wang2022distance}. Moreover, it is also known how to optimise the periodic boundary conditions of the toric code~\cite{Kovalev_2012}, and we have listed these as Abelian 2BGA codes in \cref{tab:Abelian_2BGA_code_examples}. Therefore, given a morphing circuit, we simply need to find the weight-4 Abelian 2BGA mid-cycle code whose end-cycle codes correspond to the optimised rotated toric codes; this can be done by transferring the periodic boundary conditions of the end-cycle code back to the mid-cycle code just like we did for the planar colour code in \cref{fig:end_cycle_optimal_boundaries}. The mid-cycle codes that achieve this optimum with the hex-grid morphing circuit in~\cref{fig:hex_grid_TC} are listed in~\cref{tab:optimal_w4_Abelian_2BGA_morphing}.

\begin{table}
    \caption{Optimal mid-cycle codes for the hex-grid toric code morphing circuits, with the corresponding end-cycle codes, for even distance (first row) and odd distance (second row). Note that the end-cycle codes coincide with the rotated toric codes listed in \cref{tab:Abelian_2BGA_code_examples}. These circuits therefore achieve the optimal end-cycle code parameters for weight-4 Abelian 2BGA codes with two-round, purely contracting, homomorphism-based morphing circuits.}
    \centering
    \begin{tabular}{|c|c|c||c|c|c|}
        \hline
        \multicolumn{3}{|c||}{Mid-cycle code} & \multicolumn{3}{c|}{End-cycle code} \\\hline
        $G$ & $A$ & $B$ & $G$ & $A$ & $B$\\\hline
        \hspace{- 3 pt}$\mathbb{Z}_{d}{\times}\mathbb{Z}_{d}$\hspace{- 3 pt}\vphantom{.}&\hspace{- 3 pt}$\{1,x\}$\hspace{- 3 pt}\vphantom{.}&\hspace{- 3 pt}$\{1,y\}$\hspace{- 3 pt}\vphantom{.}& \hspace{- 3 pt}$\mathbb{Z}_{d}{\times}\mathbb{Z}_{d/2}$\hspace{- 3 pt}\vphantom{.} &\hspace{- 3 pt}$\{1,x\}$\hspace{- 3 pt}\vphantom{.}&\hspace{- 3 pt}$\{1,xy\}$\hspace{- 3 pt}\vphantom{.}\\
        \hspace{- 3 pt}$\mathbb{Z}_{d^{2}+1}$\hspace{- 3 pt}\vphantom{.}&\hspace{- 3 pt}$\{1,x\}$\hspace{- 3 pt}\vphantom{.}&\hspace{- 3 pt}$\{1,x^{d}\}$\hspace{- 3 pt}\vphantom{.}& \hspace{- 3 pt}$\mathbb{Z}_{(d^{2}{+}1)/2}$\hspace{- 3 pt}\vphantom{.} &\hspace{- 3 pt}$\{1,x\}$\hspace{- 3 pt}\vphantom{.} &\hspace{- 3 pt}$\{1,x^{d}\}$\hspace{- 3 pt}\vphantom{.}\\\hline
    \end{tabular}
    \label{tab:optimal_w4_Abelian_2BGA_morphing}
\end{table}

\section{Circuit-Level Distance and Numerical Simulations of Morphing Circuits for Colour Codes}\label{sec:colour_code_numerics}

\begin{table*}[]
    \caption{Comparison between the optimised morphing circuits for the planar hexagonal lattice colour code and existing colour code and surface code circuits. The triangular weight-6 and diamond weight-7 morphing circuits are novel constructions introduced in this section, with ``triangular'' and ``diamond'' referring to the shape of the colour code boundaries and ``weight-6/7'' referring to the weight of the stabilisers in the end-cycle code as shown in~\cref{fig:colour_code_end_cycle}. For each circuit, we write the \textit{total} number of data and ancilla qubits $n_{\mathrm{tot}}$ required to achieve an \textit{end-cycle} code distance of $d^{\mathrm{end}}$, the number of logical qubits $k$, the circuit-level distance $d_{\mathrm{circ}}$, and the CNOT depth of the syndrome extraction circuit. The circuit-level distance in each case is supported by numerical calculations.}
    \centering
    \begin{tabular}{c|c|c|c|c|c|c}
        Circuit & $k$ & $\mathrm{depth}(F_{2}^{\vphantom{\dag}}\circ F_{1}^{\dag})$ & Connectivity & $d^{\mathrm{end}}$ & $d_{\mathrm{circ}}$ & $n_{\mathrm{tot}}$\\
        \hline
        \multirow{4}{*}{Superdense~\cite{Gidney2023}} & \multirow{4}{*}{1} & \multirow{4}{*}{8} & \multirow{4}{*}{Hex-square grid} & 3 & 3 & 13\\
        &&&& 5 & 5 & 37\\
        &&&& 7 & 7 & 73\\
        &&&& 9 & 9 & 121\\
        \hline
        \multirow{4}{*}{Gidney \& Jones~\cite{Gidney2023} Morphing} & \multirow{4}{*}{1} & \multirow{4}{*}{6} & \multirow{4}{*}{Hex grid} & 3 & 3 & 23\\
        &&&& 5 & 5 & 69\\
        &&&& 7 & 7 & 139\\
        &&&& 9 & 9 & 233\\
        \hline
        \multirow{4}{*}{Triangular Weight-6 (Original)} & \multirow{4}{*}{1} & \multirow{4}{*}{6} & \multirow{4}{*}{Hex grid} & 3 & 2 & 19\\
        &&&& 5 & 4 & 49\\
        &&&& 7 & 5 & 91\\
        &&&& 9 & 6 & 145\\
        \hline
        \multirow{4}{*}{Triangular Weight-6 (Extra Layer)} & \multirow{4}{*}{1} & \multirow{4}{*}{8} & \multirow{4}{*}{Hex grid} & 3 & 3 & 19\\
        &&&& 5 & 4 & 49\\
        &&&& 7 & 6 & 91\\
        &&&& 9 & 7 & 145\\
        \hline
        \multirow{4}{*}{Triangular Weight-6 (Extra Qubits)} & \multirow{4}{*}{1} & \multirow{4}{*}{8} & \multirow{4}{*}{Hex grid} & 3 & 3 & 19\\
        &&&& 5 & 5 & 65\\
        &&&& 7 & 7 & 119\\
        &&&& 9 & 9 & 201\\
        \hline
        \multirow{3}{*}{Diamond Weight-7} & \multirow{3}{*}{2} & \multirow{3}{*}{6} & \multirow{4}{*}{Hex grid} & 3 & 3 & 28\\
        &&&& 5 & 5 & 100\\
        &&&& 7 & 7 & 179\\
        \hline
        \multirow{4}{*}{Hex-grid surface code~\cite{McEwen23}} & \multirow{4}{*}{1} & \multirow{4}{*}{4} & \multirow{4}{*}{Hex grid} & 3 & 3 & 17\\
        &&&& 5 & 5 & 49\\
        &&&& 7 & 7 & 97\\
        &&&& 9 & 9 & 161
    \end{tabular}
    \label{tab:n_tot_comparison}
\end{table*}

We explained in \cref{subsec:colour_code,sec:code_boundary_optimisation} how we obtained the morphing circuits for the colour code that we analyse in this work. In this appendix, we analyse the circuit-level distance of the triangular weight-6 and diamond weight-7 morphing colour codes considering one measurement round (see Prop.~\ref{prop:two_round_circuit_distance}), and we present our numerical results. The overview of the properties of each circuit that we analyse are presented in \cref{tab:n_tot_comparison}. All the numerical results in this section are for memory experiments conducted over $d$ measurement rounds under SI1000 depolarising noise (see \cref{tab:noise_model}) and decoded using a minimum-weight decoder with Gurobi~\cite{Gurobi}.

\subsection{Triangular Weight-6 Morphing Colour Code}\label{sec:weight_6_details}

\begin{figure*}
    \centering
    \includegraphics[width=\linewidth]{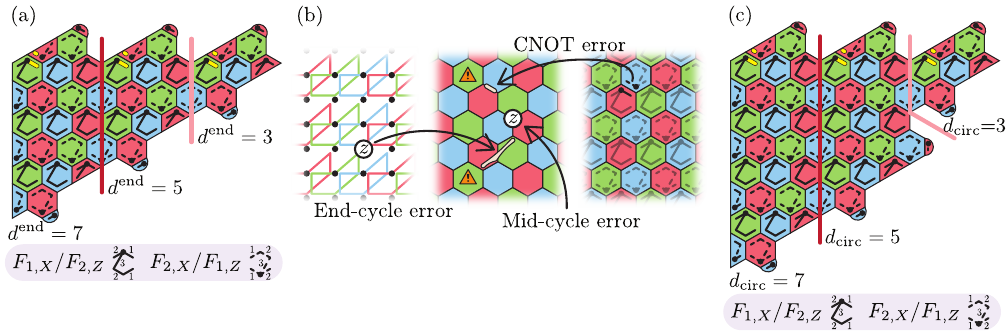}
    \caption{Triangular weight-6 colour code morphing circuits, and their circuit-level errors. (a) The contraction tree diagram for the triangular weight-6 colour code, optimised for the end-cycle code distance using the methods in Appendix~\ref{sec:code_boundary_optimisation}. The full diagram gives a distance-7 end-cycle code, to obtain the distance-5 or 3 counterparts simply ``chop'' the diagram at the vertical lines indicated on the figure. Since the two contraction circuits are related by a transversal Hadamard gate $F_{2}=H^{\otimes n} F_{1} H^{\otimes n\dag}$, we display the trees for both $X$ and $Z$ stabilisers on the same diagram as indicated in the key. Moreover, in the key we indicate the ordering of these CNOTs, because this order is important for circuit-level errors. The CNOT gates responsible for the first circuit-level-distance-reducing error mechanism discussed in the text are highlighted in yellow. To overcome this in the Extra Layer circuit, we move the problematic highlighted CNOT gates from layer 1 into a new layer inserted between layers 2 and 3. This is allowed because the highlighted CNOT gates commute with the other CNOT gates in their contraction round. (b) A weight-3 circuit-level $Z$-error that gives a syndrome that can only be created by a weight-4 end-cycle error in \cref{fig:end_cycle_optimal_boundaries}(a). It is easiest to visualise the full circuit-level error by forward- or backward-propagating it to its nearest mid-cycle code as shown in \cref{fig:circuit_level_errors}. The error consists of an end-cycle code error (which propagates to a weight-3 mid-cycle error), a mid-cycle error, and a CNOT error (which propagates to a weight-2 mid-cycle error). These combine to trigger a pair of green $X$-stabilisers. (c) The contraction tree diagram for the Extra Qubits circuit, with the re-optimised boundary conditions that take into account the circuit-level error mechanism in (b) and the extra gate layer for the highlighted CNOT gates.}
    \label{fig:weight_6_errors}
\end{figure*}

The contraction tree diagram of the weight-6 morphing circuit for the colour code with triangular planar boundaries is shown in \cref{fig:weight_6_errors}(a) for distances 3, 5, and 7. The methods we use in Appendix~\ref{sec:code_boundary_optimisation} optimise the boundary conditions for the end-cycle code, but this may result in a circuit-level distance that is lower than the end-cycle distances: this is indeed what we find in the weight-6 morphing circuit. We identified such two error mechanisms that cause this problem. The first mechanism consists of a CNOT gate error on the top spatial boundary of the code leading to a reduction in the circuit-level distance by one, as shown in \cref{fig:weight_6_errors}(a). The circuit-level distance is reduced by one because the blue stabilisers can be connected to the spatial boundary by a single CNOT gate error, even though in the end-cycle code they require a weight-2 error to connect to the same spatial boundary. The second mechanism consists of three circuit-level errors that combine to give a syndrome that, in the end-cycle code, is caused by a weight-\textit{four} error string. This is problematic because the boundaries were optimised in Appendix~\ref{sec:code_boundary_optimisation} to equalise the weights of the end-cycle logical error strings, but now with circuit-level errors some of these can have their weight reduced by a factor of 3/4.

We overcome these two issues as follows. For the first mechanism, we can simply perform a reordering of CNOT gates to remove the error mechanism. This reordering requires an additional gate layer in the contraction round but does not require any additional CNOT gates -- we call this morphing circuit the ``Extra Layer'' circuit. And for the second mechanism, we can re-optimise the boundary conditions taking into account this additional error mechanism, using additional qubits in the process. We call the circuit with both the extra layer and the re-optimised boundaries the ``Extra Qubits'' circuit. We checked using Gurobi~\cite{Gurobi} that the circuit-level distance of the Extra Qubits circuit was as expected for distances 3, 5, 7, and 9. Note that the $d^{\mathrm{end}}=3$ Extra Layer circuit also achieves $d_{\mathrm{circ}}=3$ since it is the same as the $d_{\mathrm{circ}}=3$ Extra Qubits circuit. This is not the case for the higher distance codes with the Extra Layer circuit only achieving $d_{\mathrm{circ}}=\lceil 3d^{\mathrm{end}}/4\rceil$ (again, we checked this up to and including distance 9).

We compare the original, Extra Layer, and Extra Qubits circuits numerically in \cref{fig:colour_code_morphing_and_superdense}(a). Note in particular that the original circuit does not achieve $d_{\mathrm{circ}}=3$ even for the $d^{\mathrm{end}}=3$ circuit, which is visible in a reduction in performance at lower physical error rate. Although the Extra Qubits circuit clearly outperforms both the Original and Extra Layer circuits for $d^{\mathrm{end}}=5$ and 7, it is not completely clear whether this is due to the improvement in the circuit-level distance, or simply due to the increased number of qubits, reducing the \textit{number} of low-weight logical operators in the circuit. Further investigations at lower physical error rate would be needed to investigate this.

As a last comment, in the final stages of preparing this manuscript, we had a new idea to improve the circuit-level distance of the triangular weight-6 morphing circuit that we are now actively investigating, and which uses fewer qubits than the Extra Qubits circuit -- please contact the authors for further information.

In \cref{fig:colour_code_morphing_and_superdense}(b) we compare the Extra Qubits circuit to two colour code circuits already existing in the literature: the superdense colour code circuit and the original morphing colour code circuit which both were designed by Gidney and Jones in Ref.~\cite{Gidney2023}. Interestingly, we found that both morphing circuits significantly outperformed the superdense circuit at the same distance. However, both morphing circuits use significantly \textit{more} physical qubits than the superdense circuit, see~\cref{tab:colour_code_circuits}. It is therefore an interesting question to ask whether the extra qubits used in the morphing circuits give enough of a performance improvement to be ``worth it'' at various physical error rates. 
%To investigate this, following Ref.~\cite{Gidney2023} we produced some plots of logical error rate vs the square root of the number of physical qubits at a fixed physical error rate, but the error bars in these plots were too large to draw any meaningful conclusions.

\subsection{Diamond Weight-7 Morphing Colour Code}\label{sec:weight_7_details}

\begin{figure}
    \centering
    \includegraphics[width=\linewidth]{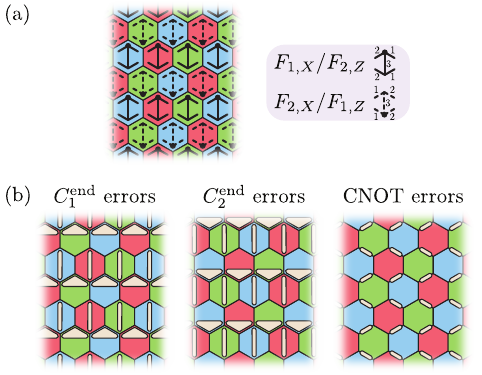}
    \caption{All possible circuit-level errors in the weight-7 morphing circuit (without planar boundaries). (a) The contraction tree diagram, where to save space again we use the fact that $F_{2}=H^{\otimes n} F_{1} H^{\otimes n\dag}$. (b) The mid-cycle code representation of the circuit-level $Z$-errors in this circuit, following \cref{fig:circuit_level_errors}. Each possible circuit-level $Z$-error in a single-round memory experiment is propagated to the mid-cycle code and is equivalent to either: a single-qubit error (e.g.~a mid-cycle error, not explicitly shown above), an end-cycle error, or a CNOT error in the first gate layer of a contraction circuit. These latter two types of error are indicated in the diagram by a beige shape whose vertices correspond to the qubits that the $Z$-error is supported on.}
    \label{fig:weight_7_errors}
\end{figure}

Next, we move to the diamond weight-7 morphing circuit. Unlike the weight-6 morphing circuit, we conjecture that the optimised boundary conditions from Appendix~\ref{sec:code_boundary_optimisation} do indeed give circuits with $d_{\mathrm{circ}}=d^{\mathrm{end}}$.  
To visualise all the possible circuit-level errors in one place, we forward- or backward-propagate each circuit-level error to the mid-cycle code. We have done this for the $Z$-errors in the weight-7 morphing circuit (without planar boundaries) in \cref{fig:weight_7_errors}. We are unable to find any error mechanisms that can create an error string with lower weight than what can be created by errors in an end-cycle code, and therefore we conjecture that the circuit is, in the bulk, fault-tolerant.

\begin{figure*}
    \centering
    \includegraphics[width=\linewidth]{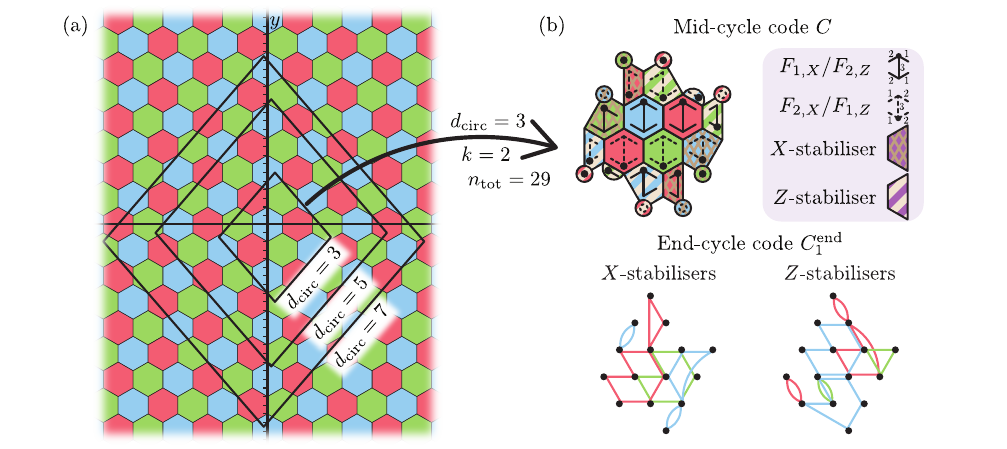}
    \caption{Diamond weight-7 morphing colour code circuits. (a) The boundaries used to construct the $d_{\mathrm{circ}}=3$, 5, and 7 circuits. The coordinate system used to specify these boundaries via their $y$-intercepts is also shown. (b) The contraction tree diagram and end-cycle code $C_{1}^{\mathrm{end}}$ for the $d_{\mathrm{circ}}=3$ circuit. Because we impose Pauli boundary types on the code (as defined in \cref{subsec:boundaries}), the $X$-stabilisers and $Z$-stabilisers are no longer identical as they often are in the colour code, as shown in the key. Note that the single-qubit stabilisers that are contracted in $F_{2}$ are indicated by using a dashed circle in the contraction tree. The second end-cycle code $C_{2}^{\mathrm{end}}$ is identical to the first up to a $\pi$-rotation.}
    \label{fig:diamond_weight_7}
\end{figure*}

\begin{table}[]
    \caption{The boundary conditions of the diamond weight-7 morphing circuits. We list the best arrangement of the diamond weight-7 morphing circuits by specifying the $y$-intercepts of the four boundary edges that define the code (anticlockwise starting with the top-right edge and ending with the bottom-right). The Cartesian coordinates defining these $y$-intercepts are shown in \cref{fig:diamond_weight_7}. The gradients of these four edges are given by the optimisation procedure in Appendix~\ref{sec:code_boundary_optimisation}. We also list the total number of qubits (data plus ancilla) involved in each code.}
    \centering
    \begin{tabular}{c|c|c}
        $d_{\mathrm{circ}}$ & $y$-intercepts & $n_{\mathrm{tot}}$\\\hline
        2 & $3,5,-8,-6$ & 16 \\
        3 & $7,5,-8,-10$ & 28 \\
        4 & $11,9,-12,-14$ & 58 \\
        5 & $15,14,-16,-17$ & 100 \\
        6 & $17,18,-20,-19$ & 136 \\
        7 & $19,20,-24,-23$ & 179 \\
        8 & $23,22,-26,-27$ & 233
    \end{tabular}
    \label{tab:diamond_weight_7}
\end{table}

However, we have found that constructing explicit circuit-level-distance-preserving planar boundaries is somewhat challenging. In Appendix~\ref{sec:code_boundary_optimisation} we mentioned that the diamond boundaries can be imposed either using two alternating coloured boundaries or two alternating Pauli boundaries (as defined in \cref{subsec:boundaries}). Here, we chose to impose Pauli boundaries since this introduces fewer irregularities in the boundary geometry. Then, we took the gradient of the diamond boundary edges from Appendix~\ref{sec:code_boundary_optimisation} and used a brute-force numerical search to find the smallest sized diamond that gives rise to a morphing circuit with a desired circuit-level distance. Our results are shown in \cref{fig:diamond_weight_7,tab:diamond_weight_7}, where we specify the microscopic location of the boundaries of the diamond by listing the $y$-intercepts of the four boundary edges with respect to a set of Cartesian coordinates. We found that, compared with the surface code and superdense colour code, for some target circuit-level distances the diamond weight-7 circuit had a low number of qubits and for others it was larger. It remains an open problem why this is the case (perhaps there is a circuit-level error mechanism arising due to the presence of the boundaries) but we leave such investigations to future work.

Finally, we numerically benchmark the performance of the diamond weight-7 morphing circuits in \cref{fig:colour_code_morphing_and_superdense}(c). Our results show that the diamond weight-7 circuits have numerical performance comparable to the triangular weight-6 circuit for the three distances that we tested.

\begin{figure*}
    \includegraphics{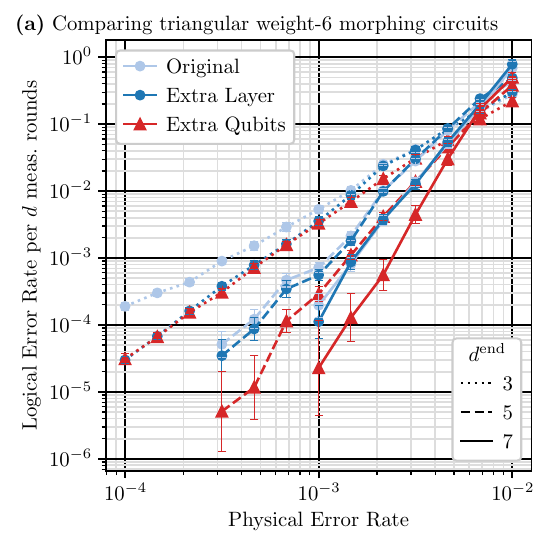}\includegraphics{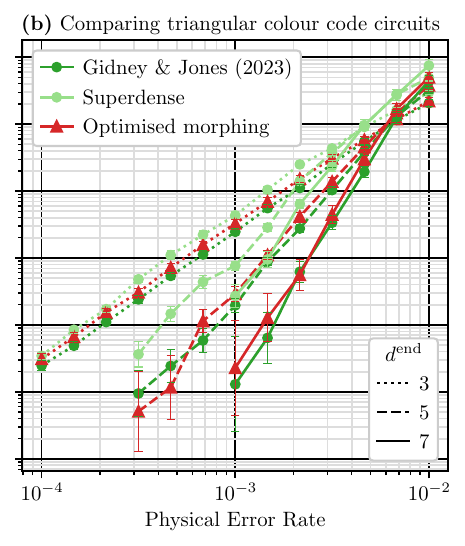}\\[-0.25 cm]
    \includegraphics{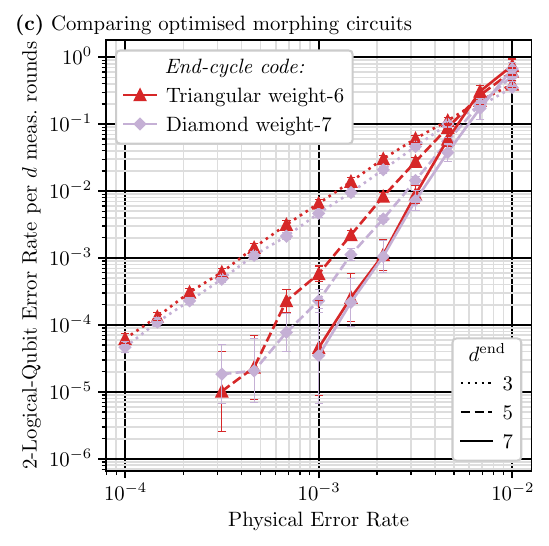}
    \caption{Numerical performance of colour code circuits in a $d$-round memory experiment under SI1000 circuit-level noise (\cref{tab:noise_model}) and decoded using a minimum-weight decoder. Error bars represent 99\% confidence intervals.}
    \label{fig:colour_code_morphing_and_superdense}
\end{figure*}

\bibliography{my_bib}
\end{document}